\newcommand{\ii}{\mathrm{i}}
\newcommand{\ee}{\mathrm{e}}
\newcommand{\T}{\mathrm{T}}
\newcommand{\sech}{\mathrm{sech}}
\newtheorem{rhp}{Riemann-Hilbert Problem}
\newtheorem{theorem}{Theorem}
\newtheorem{lemma}{Lemma}
\newtheorem{prop}{Proposition}
\newtheorem{remark}{Remark}
\newtheorem{example}{Example}
\newcommand{\lm}[1]{{\color{black} #1}}
\newcommand{\ys}[1]{{\color{black} #1}}
\titleformat{\title}{\centering\LARGE\bfseries}{\thetitle}{1em}{}
\titleformat{\section}{\centering\LARGE\bfseries}{\thesection}{1em}{}
\titleformat{\subsection}{\Large\bfseries}{\thesubsection}{1em}{}
\begin{document}

\title{Asymptotic analysis of high-order soliton for the Hirota equation}
\author{Xiaoen Zhang}
\address{School of Mathematics, South China University of Technology, Guangzhou, China, 510641}
\author{Liming Ling}
\address{School of Mathematics, South China University of Technology, Guangzhou, China, 510641}
\email{Corresponding author:linglm@scut.edu.cn}

\begin{abstract}
In this paper, we mainly analyze the long-time asymptotics of high-order soliton for the Hirota equation. Two different Riemann-Hilbert representations of Darboux matrix with high-order soliton are given to establish the relationships between inverse scattering method and Darboux transformation. The asymptotic analysis with single spectral parameter is derived through the formulas of determinant directly. Furthermore, the long-time asymptotics with $k$ spectral parameters is given by combining the iterated Darboux matrix and the result of high-order soliton with single spectral parameter, which discloses the structure of high-order soliton clearly and is possible to be utilized in the optic experiments.

{\bf Keywords:}Hirota equation, Asymptotic analysis, High-order soliton

\end{abstract}

\date{\today}

\maketitle

\section{Introduction}
We are concerned with the following Hirota equation,
\begin{equation}\label{eq:Hequation}
\ii q_t+\gamma\left(q_{xx}+2|q|^2q\right)+\ii\delta\left(q_{xxx}+6|q|^2q_x\right)=0,
\end{equation}
which was first derived by Hirota \cite{hirota-JMP-1973}. It can be considered as the modified nonlinear Schr\"{o}dinger (NLS) equation with high-order dispersion and time-delay corrections to the cubic nonlinearity. For $\gamma=1$ and $\delta=0$, Eq.\eqref{eq:Hequation} can be reduced to the NLS equation \cite{Hasegawa-book,Agrawal-book}, which is a universal model that describes the evolution of slowly varying packets of quasi-monochromatic waves in weakly nonlinear media. If $\gamma=0$ and $\delta=1$, Eq. \eqref{eq:Hequation}  will turn into the complex modified Korteweg-de Vries equation\cite{Wadati-JPSJ-1973}. Due to the effect of high-order dispersion, Hirota equation \eqref{eq:Hequation} has better properties than NLS equation in some aspect. From a physical view point, in the nonlinear optics, the NLS eqution is an applicable model to describe the picosecond short pulse. While for the subpicosecond or femtosecond pulse, the NLS equation is no longer effective. The high-order dispersion and other effects should be taken into account. In 1985 \cite{Kodama-JSP-1985} and 1987 \cite{Kodama-IEEE-1987}, Hasegawa and Kodama proposed the high-order NLS equation through the Maxwell-equation by the asymptotic perturbation technique in the normalized form (Eq.3.20 and Eq 4.11):
\begin{equation}
\ii Q_{Z}+\frac{1}{2}Q_{TT}+|Q|^2Q+\epsilon\beta_1 \ii \left(Q_{TTT}+6 |Q|^2Q_T\right)=O(\epsilon^2),
\end{equation}
which is derived from another nonlinear equation with higher-order terms
\begin{equation}\label{eq:De-Hirota}
\ii q_{Z}+\frac{1}{2} q_{TT}+|q|^2q+\epsilon\ii\left\{\beta_1 q_{TTT}+\beta_2\left(|q|^2q\right)_{T}+\beta_3 q\left|q^2\right|_{T}\right\}=O(\epsilon^2),
\end{equation}
where $\epsilon$ is small, the function $Q(Z, T)$ and $q(Z, T)$ satisfy
\begin{equation}
Q=q-\epsilon\ii\left(3\beta_1-\frac{1}{2}\beta_2\right)q_{T}-\epsilon\ii\left(6\beta_1-2\beta_2-\beta_3\right)q\int_{-\infty}^{T}\left|q(T')\right|^2dT'+O(\epsilon^2).
\end{equation}
The first term in the brackets of Eq.\eqref{eq:De-Hirota} $\beta_1q_{TTT}$ is the third-order dispersion, the second term $\beta_2\left(|q|^2q\right)_{T}$ indicates the self-steepening effect and the last one $\beta_3q\left|q^2\right|_T$ is the Raman term.  

Although these three high-order terms are small due to the small perturbation $\epsilon$, under the femtosecond pulses, these coefficients can not be neglected any more. \ys{In \cite{Gordon-OL-1986,Mitschke-OL-1986}, the authors analyzed the self-frequency shift of the soliton in theory and experiment, and pointed out that the Raman effects can cause a continuous downshift of the mean frequency of pulses propagating. As a result, the energy will transfer from the higher to the lower-frequency parts of its spectrum. Under the 120-fsec pulses, they observed that the net frequency shift can receive as greater as 10$\%$ of the optical frequency. These experimental results confirm the fact that under the femtosecond regime, the high order terms of Hirota equation are necessary to capture effect that can not be described by NLS equation any more.}

Under some special condition, Eq.\eqref{eq:Hequation} can describe the ultra-short pulse propagation and the phenomenon of ocean waves more precisely than NLS equation.  From the mathematical view point, NLS equation has five kinds of symmetries, while Hirota equation just has four types, that is
\begin{equation}\label{eq:symmetry-equation}
\left\{\begin{split}
\tilde{q}_{1}(x,t)&=\ee^{\ii\epsilon}q(x,t),\\
\tilde{q}_{2}(x,t)&=q(x+\epsilon, t),\\
\tilde{q}_{3}(x,t)&=q(x, t+\epsilon),\\
\tilde{q}_{4}(x,t)&=\ee^{\epsilon-\ii\gamma\left(\frac{ \left(\ee^{\epsilon}-1\right)}{3\delta}x+\frac{\gamma^2\left(\ee^{\epsilon}-1\right)^2\left(\ee^{\epsilon}+2\right)}{27\delta^2}t\right)}
q(\ee^{\epsilon}(x-{\gamma^2(1-\ee^{2\epsilon})}/{(3\delta)}t), \ee^{3\epsilon}t).
\end{split}\right.
\end{equation}
The first three symmetries are consistent with the NLS equation. The fourth one corresponds to the Galilean symmetry, whose parameter $\delta$ is involved in the denominator and can not be equal to zero, so it can not degenerate to the symmetry of the NLS equation. In addition, it is clear that the NLS equation has the scaling symmetry $\tilde{q}(x,t)=\ee^{\epsilon}q(x\ee^{\epsilon}, t\ee^{2\epsilon})$, but the Hirota equation has no scaling invariance. \ys{Compared with the NLS equation, the Hirota equation can be regarded as a $\mathcal{PT}$-symmetric extension of the NLS equation\cite{Cen-PhysD-2019}, with an anti-linear map $\mathcal{PT}$: $x\to -x, t\to-t, \ii\to-\ii, q\to q$. Furthermore, in \cite{Cen-JMP-2019}, the authors studied the solutions to several nonlocal Hirota equations by Hirota bilinear method and Darboux transformation method, which involves six types of choice between $q$ and $r$, $r=q(x, -t), r=q(-x, t), r=q(-x,-t), r=q^*(-x, t), r=q^*(x, -t), r=q^*(-x, -t)$, where $q$ and $r$ are two potentials appearing in the Lax pair. As to the standard Hirota equation, $q$ and $r$ are related by $r=q^*(x,t). $ } Additionally, the soliton in Hirota equation has the different velocity to the NLS due to the high-order dispersion and time-delay corrections. So the properties of Hirota equation will be different from the NLS equation in this aspect.

As an integrable physical model, the soliton solution has always been a research hot. In self modulation problem, the soliton \cite{zakharov-SPJ-1972} is a single wave packet propagating without distortion of its envelope, which is regarded as a balance between the dispersion and the nonlinear term. If two or more solitons admit the distinct velocities, then these solitons will interact each other without changing the shape but with the shifting of phase. Otherwise, the solitons will come into being the breathing or periodic effect\cite{Okamawari-PRA-1995}. If we take the periodic parameter to infinity, the breather will turn into the multi-pole (i.e. high-order) soliton.  Actually, the systematic analysis for the high-order soliton of NLS equation is a long-standing problem in the integrable theory, which was solved very recently \cite{LiBS-17,Schiebold17}.

In the terminology of inverse scattering transform \cite{GGKM-PRL-1967}, $N$-solitons are given in the reflectionless case under the transmission scattering data $a(\lambda)$ has $N$ distinct simple poles. The multi-soliton with different velocities will break up into individual solitons as $t\to\pm\infty$, so it can be used to describe the interaction between $N$ solitons. Especially, when $N\to\infty$, there are infinity solitons, which was first provided by Zhou \cite{Zhou-CPAM-1990}. The explicit infinity solitons to NLS equation and KdV equation were obtained in the reference \cite{Kamvissis-JMP-1995}  and \cite{Gesztesy-Duke-1992} respectively. Especially, in\cite{Peter-CPAM-2007}, the authors gave a detailed analysis to the NLS equation when $N$ is large. On the other hand, the multi-pole soliton are related to the scattering data $a(\lambda)$ with multiple poles, which can describe the interaction between $N$ solitons of equal amplitude but having a particular chirp because they depend on a single spectrum. When $t\to\pm\infty$, they also will break up into $N$ individual solitons.

There will not exist multi-pole nonsingular soliton to the Korteweg-de Vries equation, because the spectral point is simple and its Lax operator is self-adjoint \cite{Carl-DMV-2000}. However, the focusing NLS equation, it allows multiple poles, which was first observed by Zakharov and Shabat \cite{zakharov-SPJ-1972}. Since then, there are many relative multiple pole solutions about various equations, such as the sine-Gordon equation \cite{Poppe-PhysD-1983,Tsuru-JPSJ-1984}, the modified Korteweg-de Vries equation \cite{Wadati-JPSJ-1981}. Especially, Olmedilla\cite{Olmedilla-PhysD-1987} studied the multiple pole solutions to NLS equation by solving Gelfand-Levitan-Marchenko equation and gave the asymptotic analysis to double pole and triple pole soliton when $t\to\infty$, but he left over a conjecture about the asymptotics with general multi-pole soliton. Fortunately, Schiebold confirmed this conjecture and presented a detailed description about the asymptotic analysis \cite{Schiebold17}. Apart from the asymptotics to NLS equation, there are some relative research on the asymptotic behavior on both the (1+1)-dimensional integrable system and (2+1)-dimensional system \cite{Biondini-JMP-2006,Chakravarty-JPA-2008,Matveev-JMP-1994,He-CPL-2019,He-PhyD-2019}. Recently, the multiple pole soliton concept is extended to some nonlocal system and high-order (2+1)-dimensional system\cite{Lou-CTP-2020,Lou-CNSNS-2020}.

\lm{It should be emphasized that, the double pole (second order) soliton was first given in 1972. But until in 2017, the rigorous asymptotitc analysis for multiple pole (high-order) solutions was still an open problem because the formulas are the mixture of exponential and polynomial functions. %No one can give a strict a proof for the asymptotic behavior.
In 1987, Olmedilla put forward a conjecture about this asymptotics via the inverse scattering method. He obtained the asymptotic behavior for lower order $L\leq 9$ due to a complicated calculation. 
In 2017, Schiebold confirmed this conjecture through a method based on the Jordan matrix.% and gave an answer in a theorem.
The history manifests that the rigorous asymptotic analysis for the high-order soliton (or multi-high-order soliton) is a long-standing problem in the integrable system.}

Thus it is natural to consider the long-time asymptotics of high-order soliton to the Hirota equation. In \cite{Cen-PhysD-2019,Xu-JPSJ-2020}, the authors gave the asymptotic expression of the second-order and third-order soliton for Hirota equation, but they did not give the asymptotic analysis to general high-order soliton. Compared with the soliton in NLS equation, the soliton in Hirota equation has the different velocity due to the high-order dispersion and time-delay corrections.
So the multi-soliton for NLS equation and Hirota equation is similar. But for the high-order soliton, they are different because the expression is the mixture of exponential and rational functions. And the dynamics of high-order soliton between NLS equation and Hirota equation has a great discrepancy. Thus the long-time asymptotic analysis for the Hirota equation is necessity to further study for the theory of integrable models. We plan to analyze the soliton solution directly to obtain the long-time asymptotics, which is not depending on the Riemann-Hilbert problem(RHP). %\lm{This method is different from Jordan matrix entirely.}
Before this analysis, we need to construct the exact soliton with the aid of Darboux transformation. Now we give a brief introduction about this method.

Darboux transformation is an algebraic method, which has several different versions \cite{GuHZ,Matveev-Salle}. One of the rigorous method is using the loop group \cite{TerngU-00}, which solves the Lax pair by assumption of the holomorphic property of wave functions with $\Phi(\lambda;0,0)=\mathbb{I}$. For the AKNS system with the ${\rm su}(2)$ symmetry, the elementary Darboux matrix can be rewritten as \cite{TerngU-00}:
\begin{equation}
\mathbf{T}(\lambda;x,t)=\mathbb{I}-\frac{\lambda_1-\lambda_1^*}{\lambda-\lambda_1^*}\mathbf{P}(x,t),\qquad \mathbf{P}(x,t)=\frac{\phi_1\phi_1^{\dag}}{\phi_1^{\dag}\phi_1},\qquad \phi_1=\Phi(\lambda_1;x,t)\mathbf{v}_1,
\end{equation}
where $\mathbf{v}_1$ is a constant vector.
And the new wave function $\Phi^{[1]}(\lambda;x,t)=\mathbf{T}(\lambda;x,t)\Phi(\lambda;x,t)\mathbf{T}^{-1}(
\lambda;0,0)$ will satisfy a new Lax pair with the same shape.
The corresponding B\"acklund transformation between old and new potential functions is \begin{equation}
q^{[1]}(x,t)=q(x,t)-2(\lambda_1-\lambda_1^*)\frac{\phi_{1,2}\phi_{1,1}^*}{\phi_1^{\dag}\phi_1},\qquad \phi_1=(\phi_{1,1},\phi_{1,2})^{\T}.
\end{equation}
If $|q(0,0)|=\max(|q(x,t)|)$, the module of new solution $q^{[1]}(x,t)$
will attain the maximal value at $(x,t)=(0,0)$ with $|q^{[1]}(0,0)|=|q(0,0)|+2{\rm Im}(\lambda_1)$ by taking $\mathbf{v}_1=[1,\ii]^{\T}$. This simple proposition is beneficial to construct the solitons with maximal peak. If we want to iterate the Darboux transformation at $\lambda=\lambda_1$, the new wave function $\Phi^{[1]}(\lambda;x,t)$ will appear the removable singularity. So we should redefine the Darboux transformation with limit technique so as to get the high-order soliton at the same spectral point. Based on this method, the high-order soliton to derivative NLS equation \cite{Ling-2013-Study}, the Landau-Lifshitz equation \cite{Ling-2014-Study}, the rogue wave to NLS equation\cite{Ling-2012-PRE}, the spatial discrete Hirota equation\cite{Zhu-CPL-2018} are given systematically. Apart from the generalized Darboux transformation method, some other algebraic method, such the KP reduction method can also be used to construct the high-order soliton or rogue wave\cite{Zhang-CTP-2012,Chen-JPSJ-2018}.

Furthermore, after giving the Darboux transformation and the corresponding B\"{a}cklund transformation, we begin to analyze the long-time asymptotics of high-order soliton. There are three steps in the whole analysis procedure. Firstly, we extract the leading order term from the exact high-order soliton solution determinant with single spectral parameter directly, which is a crucial factor to the long-time asymptotics. Then we find an interesting fact that if the high-order soliton moves along a special characteristic curve, then the limit of the high-order can be reduced to a single soliton. Otherwise, it has a vanishing limitation. Therefore, we give the long-time asymptotics on the basis of moving along the characteristic curve. \lm{In contrast to the analysis for the NLS asymptotics, the asymptotics for the Hirota equation becomes more complicated due to the extra polynomial functions about the $x$ and $t$. Take the second order soliton as an instance:
\begin{equation}\label{eq:sec-H}
\begin{split}
q_{H}&=\frac{\left[192\ii\delta\lambda_{1}^{*2}\lambda_{1I}^2t+64\ii\lambda_{1I}^2\lambda_{1}^*\gamma t+8\ii\left(2\lambda_{1I}x-1\right)\lambda_{1I}\right]\ee^{-2\lambda_{1I}\left(x-v_{H}t\right)-2\lambda_{1R}\ii\left(x-v_{1H}t\right)}}{\ee^{-8\lambda_{1I}(x-v_Ht)}+
\left\{p(\delta; x, t)+16\lambda_{1I}^2\left[\left(x+4\lambda_{1R}\gamma t\right)^2+16\lambda_{1I}^2\gamma^2t^2\right]+2\right\}\ee^{-4\lambda_{1I}(x-v_{H}t)}+1}\\
&-\frac{\left[192\ii\delta\lambda_{1}^2\lambda_{1I}^2t+64\ii\lambda_{1I}^2\lambda_1\gamma t+8\ii\left(2\lambda_{1I}x+1\right)\lambda_{1I}\right]\ee^{-6\lambda_{1I}\left(x-v_{H}t\right)-2\lambda_{1R}\ii\left(x-v_{1H}t\right)}}{\ee^{-8\lambda_{1I}(x-v_Ht)}+
\left\{p(\delta; x, t)+16\lambda_{1I}^2\left[\left(x+4\lambda_{1R}\gamma t\right)^2+16\lambda_{1I}^2\gamma^2t^2\right]+2\right\}\ee^{-4\lambda_{1I}(x-v_{H}t)}+1},
\end{split}
\end{equation}
where
\begin{equation*}
\begin{split}
p(\delta; x,t)=768\lambda_{1I}^2|\lambda_1|^2\left(3|\lambda_1|^2\delta^2+2\lambda_{1R}\delta\gamma\right)t^2
+384\lambda_{1I}^2\left(\lambda_{1R}^2-\lambda_{1I}^2\right)\delta xt\\
v_{H}=4\left(\lambda_{1I}^2-3\lambda_{1R}^2\right)\delta-4\lambda_{1R}\gamma,\qquad v_{1H}=4\left(3\lambda_{1I}^2-\lambda_{1R}^2\right)\delta+2\left(\frac{\lambda_{1I}^2}{\lambda_{1R}}-\lambda_{1R}\right)\gamma,
\end{split}
\end{equation*}
and $\lambda_{1R}, \lambda_{1I}$ are the real and imaginary part of $\lambda_1$ respectively. When $\delta=0, \gamma=1$, it reduces to the soliton to NLS equation. It can be seen that the solution for Hirota equation is more complicated than the NLS equation due to the extra terms with the $\delta$. Especially these factors appear in the polynomial function and the exponent term.  The key idea to analyze the asymptotics is trying to balance the polynomial factor and the exponent factor when $t$ is large, thus we have to consider how these new factors will affect the asymptotics. It should be noted that Eq.\eqref{eq:sec-H} is just a simple second order soliton, as the order increases, the high-order soliton solution of Hirota equation is much more complicated than the one of the NLS equation. The previous article \cite{Olmedilla-PhysD-1987} reported that if order is larger than $9$, the asymptotic behavior about the NLS equation is hard to calculate, let alone the Hirota equation. Thus a systematic tool or method to analyze the high-order soliton replacing  with the tedious calculation is urgently needed for distinct methods in the integrable system. As we know that, the different methods (Hirota method, inverse scattering method, Darboux transformation and many others) in the integrable method will give the distinct formulas, and the equivalence between the formulas is not a trivial work. In our work, we propose the approach to analyze the high-order soliton under the frame of Darboux transformation.
As far as the Hirota equation is concerned,} with the aid of the asymptotics of Darboux matrix and the result of the asymptotics with single spectral parameter, we give a general long-time asymptotics of high-order soliton with $k$ spectral parameters, which has never been reported before.

The outline of this paper is organized as follows. In section \ref{sec2}, the Hirota equation is derived from the AKNS hierarchy. In section \ref{sec3}, the Darboux matrix involving $k$ spectral parameters $\lambda_1, \lambda_2, \cdots, \lambda_k$ of the Hirota equation is given. With the aid of B\"acklund transformation, the high-order soliton could be constructed by taking the zero seed solution. Afterwards, the RHP for the Darboux matrix is derived to establish the relationship between two approaches. In section \ref{sec4}, the asymptotic analysis of high-order soliton with single spectral parameter $\lambda_1$ is shown by the Theorem \ref{theorem1}. Then the long-time asymptotic analysis can be extended to the general case with $k$ spectral parameters, which is given by the Theorem \ref{theorem-k}. The conclusions and discussions are involved in the final section. \lm{This paper contains one appendix as the complement material, which gives some auxiliary proof for the theorem \ref{theorem1}}.

\section{Darboux transformation and the corresponding Riemann-Hilbert representation}
\lm{Our goal in this paper is analyzing the asymptotic behavior when $t$ is large. In contrast to the traditional inverse scattering transformation method, we want to analyze it through the leading order term  from the exact high-order soliton solution. As a first major step towards this goal, we should give the Darboux transformation and the corresponding B\"{a}cklund transformation to construct the high-order soliton. Before the detailed analysis, we first give some prior knowledge in the following:}
\subsection{Integrable hierarchy for the AKNS system and Hirota equation}\label{sec2}
It is well known the Hirota equation\eqref{eq:Hequation} is a mixture of second order and third order flow from the classical AKNS system. Considering the following lax pair
\begin{equation}\label{eq:akns}
\frac{\partial}{\partial x}\pmb{\Phi}(\lambda;x,\mathbf{t})=\mathbf{U}(\lambda;x,\mathbf{t})\pmb{\Phi}(\lambda;x,\mathbf{t}),\qquad \mathbf{U}(\lambda;x,\mathbf{t})=\ii\left(\lambda \sigma_3+\mathbf{Q}(x,\mathbf{t})\right),\qquad \mathbf{Q}(x,\mathbf{t})=\begin{bmatrix}
0 & r(x,\mathbf{t})\\
q(x,\mathbf{t}) &0 \\
\end{bmatrix},
\end{equation}
where $x\in\mathbb{R}$, $\mathbf{t}=(t_1,t_2,\cdots)\in\mathbb{R}^\infty$, $\sigma_3={\rm diag}(1,-1)$ is the third Pauli matrix and $\Phi(\lambda;x,\mathbf{t})$ is the wave function. Now we introduce the real infinite-dimensional iso-spectral manifold \cite{AdlerM-94} $\mathscr{U}$ for the AKNS spectral problem \eqref{eq:akns}. The flow was defined by the following iso-spectral equations
\begin{equation}\label{eq:evolution}
\frac{\partial}{\partial t_i}\pmb{\Phi}(\lambda;x,\mathbf{t})=\ii\left(\lambda^{i}\mathbf{m}(\lambda;x,\mathbf{t})\sigma_3\mathbf{m}^{-1}(\lambda;x,\mathbf{t})\right)_+\pmb{\Phi}(\lambda;x,\mathbf{t}),
\end{equation}
where the subscript $_+$ denotes the positive power of polynomial with respect to $\lambda$, and
\begin{equation}\label{eq:ansatz}
\pmb{\Phi}(\lambda;x,\mathbf{t})=\mathbf{m}(\lambda;x,\mathbf{t})\exp\left[\ii\sigma_3 \left(\lambda(x+t_1)+\sum_{i=2}^{+\infty}t_i\lambda^i \right)\right],\qquad \mathbf{m}(\lambda;x,\mathbf{t})=\mathbb{I}+\mathbf{m}_1(x,\mathbf{t})\lambda^{-1}+\cdots,
\end{equation}
for $\lambda$ in the neighborhood of $\infty$.  Plugging the ansatz \eqref{eq:ansatz} into the spectral problem \eqref{eq:akns}, we have
\lm{
\begin{equation}\label{eq:akns-m}
\ii\lambda \mathbf{m}(x,\mathbf{t};\lambda)\sigma_3\mathbf{m}^{-1}(x,\mathbf{t};\lambda)+\frac{\partial}{\partial x}\mathbf{m}(x,\mathbf{t};\lambda)\mathbf{m}^{-1}(x,\mathbf{t};\lambda)=\ii\left(\lambda\sigma_3+\mathbf{Q}(x,\mathbf{t})\right).
\end{equation}}
Comparing the coefficients of \eqref{eq:akns-m} with respect to $\lambda$, we arrive at
\lm{
\begin{equation}\label{eq:m-q}
\mathbf{Q}(x;\mathbf{t})=\left[\mathbf{m}_1(x,\mathbf{t}),\sigma_3\right],
\end{equation}}
which is useful to reconstruct the solution of integrable equations, where the commutator $[A,B]=AB-BA$. Furthermore, we show that the holomorphic function of $\lambda$: $\mathbf{A}(\lambda;x,\mathbf{t})=\mathbf{m}(\lambda;x,\mathbf{t})\sigma_3\mathbf{m}^{-1}(\lambda;x,\mathbf{t})$ can be expanded in the neighborhood of $\infty$ with the form
\begin{equation}\label{eq:Aexpand}
\mathbf{A}(\lambda;x,\mathbf{t})=\sum_{i=0}^{\infty}\mathbf{A}_i(x,\mathbf{t})\lambda^{-i},\,\,\,\,\, \mathbf{A}_0=\sigma_3,
\end{equation}
where the matrices $\mathbf{A}_i(x,\mathbf{t})$ can be determined by the recursion relationship
\begin{equation}\label{eq:recursion}
\frac{\partial}{\partial x}\mathbf{A}_i(x,\mathbf{t})=\ii[\sigma_3,\mathbf{A}_{i+1}(x,\mathbf{t})]+\ii[\mathbf{Q}(x,\mathbf{t}),\mathbf{A}_{i}(x,\mathbf{t})],\qquad i\geq1; \qquad \mathbf{A}_1^{\rm off}=\mathbf{Q},
\end{equation}
which was implied by the stationary zero-curvature equation $\mathbf{A}_x=[\mathbf{U},\mathbf{A}]$. Taking the off-diagonal part of equation \eqref{eq:recursion}, it will reduce into
\begin{equation}\label{eq:recursion-off}
\mathbf{A}_{i+1}^{\rm off}=-\frac{1}{2}\sigma_3\left(\ii\frac{\partial}{\partial x}\mathbf{A}_i^{\rm off}+\left[\mathbf{Q},\mathbf{A}_i^{\rm diag}\right]\right).
\end{equation} On the other hand, due to a simple identity $\mathbf{m}(\lambda;x,\mathbf{t})[\sigma_3^2-\mathbb{I}]\mathbf{m}^{-1}(\lambda;x,\mathbf{t})=\mathbf{A}^2(\lambda;x,\mathbf{t})-\mathbb{I}=0$, comparing the coefficient of $\lambda$, we give the diagonal elements of $\mathbf{A}_{i}$ as
\begin{equation}\label{eq:coeff-indentity}
\mathbf{A}_{i+1}^{\rm diag}=-\frac{1}{2}\sigma_3\sum_{k=1}^{i}\left(\mathbf{A}_{k}\mathbf{A}_{i+1-k}\right)^{\rm diag},\qquad i\geq 1;\qquad \mathbf{A}_1^{\rm diag}=0.
\end{equation}
Due to the recursive equations \eqref{eq:recursion-off} and \eqref{eq:coeff-indentity}, then we have the following proposition:
\begin{prop}
The matrices $\mathbf{A}_i$ in the equation \eqref{eq:Aexpand} can be represented as the differential polynomials of $\mathbf{Q}$ with the derivative on the variable $x$.
\end{prop}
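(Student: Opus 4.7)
The plan is a straightforward induction on $i$, driving it with the two recursions \eqref{eq:recursion-off} and \eqref{eq:coeff-indentity}. I would set up the induction hypothesis as: for every $j \le i$, the matrix $\mathbf{A}_j(x,\mathbf{t})$ is a polynomial in $\mathbf{Q}, \partial_x \mathbf{Q}, \partial_x^2 \mathbf{Q}, \ldots$ (with constant matrix coefficients), and then show that this property is inherited by $\mathbf{A}_{i+1}$.

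The base case comes from the data already stated in the excerpt: $\mathbf{A}_0=\sigma_3$ is constant, and $\mathbf{A}_1$ has $\mathbf{A}_1^{\rm diag}=0$ together with $\mathbf{A}_1^{\rm off}=\mathbf{Q}$, so $\mathbf{A}_1=\mathbf{Q}$. Both are trivially differential polynomials in $\mathbf{Q}$. For the inductive step I would first compute the diagonal part of $\mathbf{A}_{i+1}$ using \eqref{eq:coeff-indentity}: since the sum only involves $\mathbf{A}_1,\ldots,\mathbf{A}_i$, by the inductive hypothesis every factor is a differential polynomial in $\mathbf{Q}$, and ordinary matrix multiplication and taking the diagonal part preserve this class, hence so does left-multiplication by $-\tfrac12\sigma_3$. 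Therefore $\mathbf{A}_{i+1}^{\rm diag}$ lies in the same class.

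Next I would compute the off-diagonal part via \eqref{eq:recursion-off}. The two inputs needed are $\mathbf{A}_i^{\rm off}$ and $\mathbf{A}_i^{\rm diag}$, both differential polynomials in $\mathbf{Q}$ by the inductive hypothesis. The operation $\ii\partial_x$ keeps this property (it just raises the derivative order), the commutator with $\mathbf{Q}$ adds one factor of $\mathbf{Q}$, and left-multiplication by $-\tfrac12\sigma_3$ is harmless. Thus $\mathbf{A}_{i+1}^{\rm off}$ is again a differential polynomial in $\mathbf{Q}$, completing the induction and proving that $\mathbf{A}_{i+1}=\mathbf{A}_{i+1}^{\rm diag}+\mathbf{A}_{i+1}^{\rm off}$ has the claimed form.

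There is essentially no obstacle here; the work is bookkeeping. The one place that deserves a sentence of justification is that the recursions \eqref{eq:recursion-off} and \eqref{eq:coeff-indentity} really do determine $\mathbf{A}_{i+1}$ uniquely from $\mathbf{A}_1,\ldots,\mathbf{A}_i$, so that the inductively constructed differential polynomial coincides with the coefficient appearing in the expansion \eqref{eq:Aexpand}; this is immediate from the derivation of those two recursions from $\mathbf{A}_x=[\mathbf{U},\mathbf{A}]$ and $\mathbf{A}^2=\mathbb{I}$, which the excerpt already records.
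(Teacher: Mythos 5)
Your induction on $i$ via the recursions \eqref{eq:recursion-off} and \eqref{eq:coeff-indentity} is correct and is precisely the argument the paper intends: it states the proposition as an immediate consequence of those two recursions (``Due to the recursive equations\dots'') without writing out the induction. Your version simply makes the bookkeeping explicit, so there is nothing to object to.
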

The first three elements can be represented as
\begin{equation}\label{eq:first-three-flows}
\begin{split}
\mathbf{A}_1=&\mathbf{Q}, \\
\mathbf{A}_2=&-\frac{1}{2}\left(\sigma_3\mathbf{Q}^2+\ii\sigma_3 \mathbf{Q}_x\right), \\
\mathbf{A}_3=&-\frac{1}{4}\left(\mathbf{Q}_{xx}+2\mathbf{Q}^3-\ii\mathbf{Q}_x\mathbf{Q}+\ii\mathbf{Q}\mathbf{Q}_x\right). \\
\end{split}
\end{equation}
Furthermore, if we suppose the matrix function $\mathbf{m}(\lambda;x,\mathbf{t})$ is holomorphic in $\mathbb{C}/\Gamma_0$, where $\Gamma_0$ is the contour in the complex plane, then we have the following propositions:
\begin{prop}\label{prop:laxpair}
The wave functions $\pmb{\Phi}(\lambda;x,\mathbf{t})$ define the following evolution equations:
\begin{equation}\label{eq:evolution}
\frac{\partial}{\partial t_k}\pmb{\Phi}(\lambda;x,\mathbf{t})=\ii \left(\sum_{i=0}^{k}\mathbf{A}_i\lambda^{k-i} \right)\pmb{\Phi}(\lambda;x,\mathbf{t}).
\end{equation}
\end{prop}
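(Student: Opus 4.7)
The plan is to read off Proposition \ref{prop:laxpair} directly from the iso-spectral flow definition already given in equation \eqref{eq:evolution} together with the expansion \eqref{eq:Aexpand}. By construction, the matrix $\mathbf{A}(\lambda;x,\mathbf{t})=\mathbf{m}(\lambda;x,\mathbf{t})\sigma_3\mathbf{m}^{-1}(\lambda;x,\mathbf{t})$ appearing inside the projector $(\,\cdot\,)_+$ admits the convergent Laurent expansion $\mathbf{A}(\lambda)=\sum_{i=0}^{\infty}\mathbf{A}_i\lambda^{-i}$ near $\lambda=\infty$, with $\mathbf{A}_0=\sigma_3$. Multiplying by $\lambda^k$ yields $\lambda^k\mathbf{A}(\lambda)=\sum_{i=0}^{\infty}\mathbf{A}_i\lambda^{k-i}$, which splits into a polynomial part of non-negative degree, namely $\sum_{i=0}^{k}\mathbf{A}_i\lambda^{k-i}$, and a tail $\sum_{i=k+1}^{\infty}\mathbf{A}_i\lambda^{k-i}$ of strictly negative powers.

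The first step of the proof is therefore to justify term-by-term validity of this splitting: since $\mathbf{m}(\lambda;x,\mathbf{t})$ is, by assumption, holomorphic in $\mathbb{C}\setminus\Gamma_0$ and normalized to $\mathbb{I}$ at $\lambda=\infty$, the product $\mathbf{m}\sigma_3\mathbf{m}^{-1}$ is holomorphic in a punctured neighborhood of infinity, so its Laurent series converges there and the $(\,\cdot\,)_+$ projection is unambiguously the polynomial part displayed above. Substituting this polynomial part into \eqref{eq:evolution} immediately produces the asserted identity
\[
\frac{\partial}{\partial t_k}\pmb{\Phi}(\lambda;x,\mathbf{t})=\ii\left(\sum_{i=0}^{k}\mathbf{A}_i\lambda^{k-i}\right)\pmb{\Phi}(\lambda;x,\mathbf{t}).
\]

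The second and slightly less trivial step is to check that this evolution is genuinely compatible with the spatial spectral problem \eqref{eq:akns}, i.e.\ that truncating to the polynomial part does not spoil anything. Here I would invoke the recursion \eqref{eq:recursion} (and its off-diagonal/diagonal refinements \eqref{eq:recursion-off}, \eqref{eq:coeff-indentity}): the stationary zero-curvature equation $\mathbf{A}_x=[\mathbf{U},\mathbf{A}]$ translates the discarded negative-power tail of $\lambda^k\mathbf{A}(\lambda)$ into a telescoping identity, so that the residual commutator between $\mathbf{U}$ and $\sum_{i=0}^{k}\mathbf{A}_i\lambda^{k-i}$ exactly reproduces the $x$-derivative of the polynomial coefficients. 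This is the standard AKNS dressing computation.

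The only real obstacle is bookkeeping, namely fixing the convention that $(\,\cdot\,)_+$ includes the $\lambda^0$ term (consistent with the proposition's lower limit $i=0$) and tracking that $\mathbf{A}_0=\sigma_3$ contributes the leading $\lambda^k\sigma_3$ term matching the free-evolution exponent in \eqref{eq:ansatz}. Once these conventions are fixed, the derivation is a one-line consequence of the Laurent expansion, with the compatibility check reduced to the recursion already established in the preceding paragraphs.
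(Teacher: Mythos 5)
Your opening step --- identifying $\left(\lambda^{k}\mathbf{m}\sigma_3\mathbf{m}^{-1}\right)_{+}$ with $\sum_{i=0}^{k}\mathbf{A}_{i}\lambda^{k-i}$ from the Laurent expansion \eqref{eq:Aexpand} --- is correct but essentially definitional, and it only yields the evolution equation in the neighborhood of $\lambda=\infty$ where that expansion converges. The content of the proposition, as the paper proves it, is that the coefficient matrix of the $t_k$-flow is \emph{globally} this polynomial, in particular on and inside the contour $\Gamma_0$ where $\mathbf{m}$ is not represented by its series at infinity. The paper's route is a Liouville argument: because $\pmb{\Phi}$ solves a second-order ODE in $x$, its boundary values on $\Gamma_0$ are related by a jump matrix $\mathbf{V}(\lambda)$ independent of $t_k$, so $\left(\partial_{t_k}\pmb{\Phi}\right)\pmb{\Phi}^{-1}$ has no jump across $\Gamma_0$ and is therefore entire; its expansion at infinity equals $\ii\sum_{i=0}^{k}\mathbf{A}_i\lambda^{k-i}+O(\lambda^{-1})$ since $\left(\partial_{t_k}\mathbf{m}\right)\mathbf{m}^{-1}=O(\lambda^{-1})$; Liouville's theorem then identifies the entire function with the polynomial everywhere. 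Your proposal never explains why the identity persists beyond the domain of convergence of the Laurent series, and that is the genuine gap: without the jump-relation/Liouville step you have only restated the definition of the flow near infinity.

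Your second step is aimed at the wrong target. Compatibility of the $t_k$-flow with the spatial problem \eqref{eq:akns} via the recursion \eqref{eq:recursion} is the zero-curvature computation that produces the hierarchy equations for $\mathbf{Q}$; it is not part of, and does not substitute for, the proof that $\left(\partial_{t_k}\pmb{\Phi}\right)\pmb{\Phi}^{-1}$ equals the stated polynomial for all $\lambda$. I would drop that paragraph and supply the analyticity/Liouville argument instead.
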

\begin{proof}
Since we assume that the matrix function $\mathbf{m}(\lambda;x,\mathbf{t})$ is holomorphic in $\mathbb{C}/\Gamma_0$, then the matrix function $\pmb{\Phi}(\lambda;x,\mathbf{t})=\mathbf{m}(\lambda;x,\mathbf{t})\exp\left[\ii\sigma_3 \left(\lambda(x+t_1)+\sum_{i=2}^{+\infty}t_i\lambda^i \right)\right]$ is also analytic in the complex region $\mathbb{C}/\Gamma_0$. On the other hand, the matrix function $\pmb{\Phi}(\lambda;x,\mathbf{t})$ satisfies the second order differential equation with respect to the variable $x$, then the non-tangential limit of $\pmb{\Phi}(\lambda;x,\mathbf{t})$ on the contour $\Gamma_0$ to the different sides will satisfy the relation $\pmb{\Phi}_+(\lambda;x,\mathbf{t})=\pmb{\Phi}_-(\lambda;x,\mathbf{t})\mathbf{V}(\lambda)$, which implies that the derivative of $\pmb{\Phi}_{\pm}$ with respect to the variable $t_k$ also satisfy the following same jump relationship $\frac{\partial}{\partial t_k}\pmb{\Phi}_+(\lambda;x,\mathbf{t})=\frac{\partial}{\partial t_k}\pmb{\Phi}_-(\lambda;x,\mathbf{t})\mathbf{V}(\lambda)$. Then we know that \[\left(\frac{\partial}{\partial t_k}\pmb{\Phi}_+(\lambda;x,\mathbf{t})\right)\pmb{\Phi}_+^{-1}(\lambda;x,\mathbf{t})=\left(\frac{\partial}{\partial t_k}\pmb{\Phi}_-(\lambda;x,\mathbf{t})\right)\pmb{\Phi}_-^{-1}(\lambda;x,\mathbf{t})\]
on the whole complex plane $\mathbb{C}$, which implies the matrix function $\left(\frac{\partial}{\partial t_k}\pmb{\Phi}(\lambda;x,\mathbf{t})\right)\pmb{\Phi}^{-1}(\lambda;x,\mathbf{t})$
is holomorphic in $\mathbb{C}$. By the calculations, we give the asymptotics in the neighborhood of $\infty$:
\begin{equation}
\begin{split}
\left(\frac{\partial}{\partial t_k}\pmb{\Phi}(\lambda;x,\mathbf{t})\right)\pmb{\Phi}^{-1}(\lambda;x,\mathbf{t})=&
\left(\frac{\partial}{\partial t_k}\mathbf{m}(\lambda;x,\mathbf{t})\right) \mathbf{m}^{-1}(\lambda;x,\mathbf{t})+\ii\lambda^{k}\mathbf{m}(\lambda;x,\mathbf{t})\sigma_3\mathbf{m}^{-1}(\lambda;x,\mathbf{t})\\
=&\left(\ii\lambda^{k}\mathbf{m}(\lambda;x,\mathbf{t})\sigma_3\mathbf{m}^{-1}(\lambda;x,\mathbf{t})\right)_+\\
=&\ii \left(\sum_{i=0}^{k}\mathbf{A}_i\lambda^{k-i} \right).
\end{split}
\end{equation}
By virtue of the Liouville theorem, we get the equations \eqref{eq:evolution}.
\end{proof}

The symmetric reduction for the AKNS spectral problem \eqref{eq:akns} with $q(x,\mathbf{t})=r^*(x,\mathbf{t})$ can be converted into the coefficient matrix through the following ${\rm su}(2)$ reality condition:
\begin{equation}\label{eq:reality}
\mathbf{U}^{\dag}(\lambda^*;x,\mathbf{t})=-\mathbf{U}(\lambda;x,\mathbf{t}).
\end{equation}
Moreover, it can be converted into the symmetric relationship for the wave function $\pmb{\Phi}(\lambda;x,\mathbf{t})$:
\begin{equation}\label{eq:reality-1}
\pmb{\Phi}^{\dag}(\lambda^*;x,\mathbf{t})=\left[\pmb{\Phi}(\lambda;x,\mathbf{t})\right]^{-1}.
\end{equation}

Actually the Hirota equation\eqref{eq:Hequation} is a mixture of second order and third order flow in the above integrable hierarchy under the reduction $q^*(x,\mathbf{t})=r(x,\mathbf{t})$, $t_2=2\gamma t$, $t_3=4\delta t$, where $t_1$ and $t_i$, $i\geq 4$ can be considered as the free parameters for the solutions. Thus the evolution part of Lax pair for the Hirota equation can be represented as the following form:
\begin{equation}\label{eq:hirota-lax-t}
\frac{\partial}{\partial t}\pmb{\Phi}(\lambda;x,t)=2\gamma \frac{\partial}{\partial t_2}\pmb{\Phi}(\lambda;x,t)+4\delta \frac{\partial}{\partial t_3}\pmb{\Phi}(\lambda;x,t)=2\ii\left[\gamma \left(\sum_{i=0}^{2}\mathbf{A}_i\lambda^{2-i} \right)+2\delta\left(\sum_{i=0}^{3}\mathbf{A}_i\lambda^{3-i} \right)\right]\pmb{\Phi}(\lambda;x,t).
\end{equation}
\lm{Generally speaking, the evolution part of Lax pair for the AKNS hierarchy can be obtained by the AKNS scheme \cite{Ablowitz-study-1974}
\begin{equation}\label{eq:anks-t}
\frac{\partial}{\partial t_{n}}\pmb{\Phi}(\lambda;x,\mathbf{t})=\mathbf{V}(\lambda;x,\mathbf{t})\pmb{\Phi}(\lambda;x,\mathbf{t}), \qquad \mathbf{V}(\lambda;x,\mathbf{t})=\begin{bmatrix}A(\lambda;x,\mathbf{t})&B(\lambda;x,\mathbf{t})\\
C(\lambda;x,\mathbf{t})&-A(\lambda;x,\mathbf{t}),
\end{bmatrix}
\end{equation}
where $A(\lambda;x,\mathbf{t}), B(\lambda;x,\mathbf{t}), C(\lambda;x,\mathbf{t})$ are all polynomial functions with respect to $\lambda$. Suppose these matrices as
\begin{equation}
A(\lambda;x,\mathbf{t})=\sum\limits_{j=0}^{n}A_{j}(x,\mathbf{t})\lambda^j, \qquad B(\lambda;x,\mathbf{t})=\sum_{j=0}^{n}B_{j}(x,\mathbf{t})\lambda^j, \qquad C(\lambda;x,\mathbf{t})=\sum\limits_{j=0}^{n}C_j(x,\mathbf{t})\lambda^j.
\end{equation}
The coefficients of $A_j(x,\mathbf{t}), B_{j}(x,\mathbf{t}), C_{j}(x,\mathbf{t})(j=0, 1, \cdots n)$ can be calculated in turn. In particular, the evolution equation is derived from the equations of the coefficient of $\lambda^0$. By choosing different $n$, we can get different evolution equations. The derivation of Lax pair for Hirota equation is well known for us. The goal for this above derivation is to simplify the following proposition \ref{prop3} in the next section.}
\subsection{Darboux transformation and high-order soliton solutions} \label{sec3}
In this subsection, we prepare to construct the exact high-order soliton by Darboux transformation method. Last subsection we have derived the Hirota equation from AKNS hierarchy, the Darboux transformation for the AKNS system with $su(2)$ symmetry is well known in the literature \cite{GuHZ,Matveev-Salle}. The multi-fold Darboux matrix in the frame of loop group, the spectral parameters are different, was given in reference \cite{TerngU-00}. For the general high-order case, the Darboux matrix was given in the literature \cite{Ling-2012-PRE,Peter-Duke-2019} with the following theorem:
\begin{theorem}\label{thm1}
Suppose we have a smooth solution ${q}\in L^{\infty}(\mathbb{R}^2)\cup C^{\infty}(\mathbb{R}^2)$, and the matrix solution $\Phi(\lambda;x,t)$ is holomorphic in the whole complex plane $\mathbb{C}$, the Darboux transformation
\begin{equation}\label{eq:darboux}
\mathbf{T}_N(\lambda;x,t)=\mathbb{I}+\mathbf{Y}_N\mathbf{M}^{-1}\mathbf{D}\mathbf{Y}_N^{\dag},\qquad \mathbf{M}=\mathbf{X}^{\dag}\mathbf{S}\mathbf{X},
\end{equation}
where
\begin{equation}
\begin{split}
\mathbf{Y}_N =&\left[\Phi_1^{[0]},\Phi_1^{[1]},\cdots,\Phi_1^{[n_1-1]}, \Phi_2^{[0]},\Phi_2^{[1]},\cdots,\Phi_2^{[n_2-1]},\cdots, \Phi_k^{[0]},\Phi_k^{[1]},\cdots,\Phi_k^{[n_k-1]} \right], \\
\mathbf{D}=&\begin{bmatrix}
\mathbf{D}_1 &0 &\cdots &0 \\
0 &\mathbf{D}_2 &\cdots & 0 \\
\vdots &\vdots &\vdots & \vdots \\
0 &0 &\cdots &\mathbf{D}_k\\
\end{bmatrix},\qquad \mathbf{D}_i=\begin{bmatrix}
 \frac{1}{\lambda-\lambda_i^*}&0 &\cdots & 0 \\
\frac{1}{(\lambda-\lambda_i^*)^2}&\frac{1}{\lambda-\lambda_i^*} &\cdots & 0 \\
\vdots &\vdots & &\vdots \\
\frac{1}{(\lambda-\lambda_i^*)^{n_i-1}}&\frac{1}{(\lambda-\lambda_i^*)^{n_i-1}} &\cdots &\frac{1}{\lambda-\lambda_i^*} \\
\end{bmatrix},\\
\mathbf{X}=&\begin{bmatrix}
\mathbf{X}_1 &0 &\cdots &0 \\
0 &\mathbf{X}_2 &\cdots & 0 \\
\vdots &\vdots &\vdots & \vdots \\
0 &0 &\cdots &\mathbf{X}_k\\
\end{bmatrix},\qquad \mathbf{X}_i=\begin{bmatrix}
\Phi_i^{[0]}&\Phi_i^{[1]}&\cdots&\Phi_i^{[n_i-1]}\\
0&\Phi_i^{[0]}&\cdots&\Phi_i^{[n_i-2]} \\
\vdots&\vdots&\ddots&\vdots\\
0&0&\cdots& \Phi_i^{[0]}\\
\end{bmatrix},\qquad
\mathbf{S}=\begin{bmatrix}
\mathbf{S}_{11} & \mathbf{S}_{12} &\cdots &\mathbf{S}_{1k} \\
\mathbf{S}_{21} &\mathbf{S}_{22} &\cdots &\mathbf{S}_{2k} \\
\vdots &\vdots &\ddots & \vdots \\
\mathbf{S}_{k1} &\mathbf{S}_{k2} &\cdots &\mathbf{S}_{kk} \\
\end{bmatrix},\\
\end{split}
\end{equation}
and $\Phi_i^{[k]}=\frac{1}{k!}\left(\frac{\rm d}{{\rm d}\lambda}\right)^{k}\Phi_i(\lambda)|_{\lambda=\lambda_i}$, $\Phi_i(\lambda_i)$ belongs to the one dimension linear space: $$\mathrm{span}\{\text{vector solution for Lax pair at }\lambda=\lambda_i\},$$
and \[ \mathbf{S}_{i,j}=\begin{bmatrix}
\binom{0}{0}\frac{\mathbb{I}_2}{\lambda_i^*-\lambda_j}&\binom{1}{0} \frac{\mathbb{I}_2}{(\lambda_i^*-\lambda_j)^2} &\cdots&\binom{n_j-1}{0}\frac{\mathbb{I}_2}{(\lambda_i^*-\lambda_j)^{n_j}} \\
\binom{1}{1}\frac{(-1)\mathbb{I}_2}{(\lambda_i^*-\lambda_j)^2}& \binom{2}{1}\frac{(-1)\mathbb{I}_2}{(\lambda_i^*-\lambda_j)^3} &\cdots&\binom{n_j}{1}\frac{(-1)\mathbb{I}_2}{(\lambda_i^*-\lambda_j)^{n_j+1}} \\\vdots&\vdots&\ddots&\vdots\\
\binom{n_i-1}{n_i-1}\frac{(-1)^{n_i-1}\mathbb{I}_2}{(\lambda_i^*-\lambda_j)^{n_i}}& \binom{n_i}{n_i-1}\frac{(-1)^{n_i-1}\mathbb{I}_2}{\lambda_i^*-\lambda_j)^{n_i+1}} &\cdots&\binom{n_i+n_j-2}{n_i-1}\frac{(-1)^{n_i-1}\mathbb{I}_2}{(\lambda_i^*-\lambda_j)^{n_i+n_j-1}}\\
\end{bmatrix},\]
converts the Lax pair \eqref{eq:akns} and \eqref{eq:hirota-lax-t} into a new one by replacing the potential functions
\begin{equation}\label{eq:back}
{q}^{[N]}={q}+2 \mathbf{Y}_{N,2}\mathbf{M}^{-1}\mathbf{Y}_{N,1}^{\dag},
\end{equation}
which is the B\"acklund transformation, where the subscript $\mathbf{Y}_{N,2}$ denotes the second row vector of $\mathbf{Y}_N$, and $\mathbf{Y}_{N,1}$ denotes the first row vector of $\mathbf{Y}_N$.
\end{theorem}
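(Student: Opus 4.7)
The plan is to verify, in order, that $\mathbf{T}_N$ obeys the ${\rm su}(2)$ reality symmetry, that it intertwines the old Lax operators with a new Lax pair of the same shape, and that its expansion at $\lambda=\infty$ yields exactly the Bäcklund formula \eqref{eq:back}. First I would establish the symmetry $\mathbf{T}_N^{\dagger}(\lambda^*;x,t)\mathbf{T}_N(\lambda;x,t)=\mathbb{I}$, which guarantees that the transformed potential $\mathbf{Q}^{[N]}$ still satisfies \eqref{eq:reality} with $r^{[N]}=(q^{[N]})^*$. This reduces to a direct residue computation at each $\lambda=\lambda_i$: the $(i,j)$ sub-block $\mathbf{S}_{ij}$ is engineered so that its binomial entries $\binom{a+b}{a}(-1)^a/(\lambda_i^*-\lambda_j)^{a+b+1}$ are precisely the Taylor coefficients obtained by expanding $(\lambda-\lambda_i^*)^{-j}$ about $\lambda_j$, while the Jordan-block form of $\mathbf{X}_i$ absorbs the multiplicity.

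Second, I would show that
\[
\mathbf{U}^{[N]}(\lambda;x,t):=\mathbf{T}_{N,x}\mathbf{T}_N^{-1}+\mathbf{T}_N\mathbf{U}\mathbf{T}_N^{-1}
\]
is polynomial of degree one in $\lambda$. A priori this is rational with poles at $\{\lambda_i^*\}$ (from $\mathbf{T}_N$) and, by step one, at $\{\lambda_i\}$ (from $\mathbf{T}_N^{-1}=\mathbf{T}_N^{\dagger}(\lambda^*)$). To cancel the residues at $\lambda_i^*$, I would differentiate the spectral equation \eqref{eq:akns} $k$ times in $\lambda$ to get the recursion
\[
\partial_x\Phi_i^{[k]}=\ii\lambda_i\sigma_3\Phi_i^{[k]}+\ii\sigma_3\Phi_i^{[k-1]}+\ii\mathbf{Q}\Phi_i^{[k]},\qquad \Phi_i^{[-1]}:=0,
\]
equivalently $\mathbf{Y}_{N,x}=\ii\sigma_3\mathbf{Y}_N\mathbf{J}+\ii\mathbf{Q}\mathbf{Y}_N$ for a block-diagonal Jordan matrix $\mathbf{J}$ built from the $\lambda_i$ with multiplicities $n_i$. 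Combined with the Lyapunov-type identity for $\mathbf{M}_x$ implied by this and by its Hermitian counterpart, the principal parts at $\lambda_i^*$ telescope to zero; the poles at $\lambda_i$ follow by the same argument applied to $\mathbf{T}_N^{\dagger}(\lambda^*;x,t)$.

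Third, having shown $\mathbf{U}^{[N]}$ is polynomial of degree one, I would expand
\[
\mathbf{T}_N(\lambda;x,t)=\mathbb{I}+\lambda^{-1}\mathbf{Y}_N\mathbf{M}^{-1}\mathbf{E}\mathbf{Y}_N^{\dagger}+O(\lambda^{-2}),
\]
where $\mathbf{E}$ is the block-diagonal matrix extracting the $\lambda^{-1}$ coefficient of each $\mathbf{D}_i$. Matching the $\lambda^0$ term of the intertwining identity gives $\mathbf{Q}^{[N]}=\mathbf{Q}+[\sigma_3,\mathbf{Y}_N\mathbf{M}^{-1}\mathbf{E}\mathbf{Y}_N^{\dagger}]$, and reading off the $(2,1)$-entry produces exactly \eqref{eq:back}. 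The $t$-part \eqref{eq:hirota-lax-t} is handled by the identical residue argument, now requiring the intertwined expression to be polynomial of degree three in $\lambda$; the time derivatives $\partial_t\Phi_i^{[k]}$ satisfy the analogous higher-order recursion, so no new mechanism is needed.

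The main obstacle will be step two. The $n_i$-fold poles at $\lambda_i^*$ (and, after inversion, at $\lambda_i$) force a cancellation that succeeds only because the Jordan-block form of $\mathbf{X}_i$, the Hankel-triangular form of $\mathbf{D}_i$, and the binomial weights in $\mathbf{S}_{ij}$ are mutually calibrated. The cleanest route is to recognize these residues as partial-fraction decompositions at coinciding poles and to check the combinatorics first in the elementary case $k=1$, $n_1=1$ and then in $k=1$, $n_1=2$, from which the general accounting follows by induction on $\sum n_i$ rather than by direct expansion.
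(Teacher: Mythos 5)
First, a point of reference: the paper does not actually prove Theorem \ref{thm1} --- it is imported from the cited literature, so there is no internal proof to compare yours against. Judged on its own terms, your three-step plan (the symmetry $\mathbf{T}_N^{\dag}(\lambda^*;x,t)\mathbf{T}_N(\lambda;x,t)=\mathbb{I}$; pole cancellation of $\mathbf{T}_{N,x}\mathbf{T}_N^{-1}+\mathbf{T}_N\mathbf{U}\mathbf{T}_N^{-1}$ followed by Liouville; expansion at $\lambda=\infty$ to read off \eqref{eq:back}) is exactly the standard argument used in those references, and your identification of the entries of $\mathbf{S}_{ij}$ as the mixed Taylor coefficients of the Cauchy kernel $(\mu-\lambda)^{-1}$ at $(\mu,\lambda)=(\lambda_i^*,\lambda_j)$ is the right structural observation. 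The derivative recursion $\partial_x\Phi_i^{[k]}=\ii\lambda_i\sigma_3\Phi_i^{[k]}+\ii\sigma_3\Phi_i^{[k-1]}+\ii\mathbf{Q}\Phi_i^{[k]}$ and the resulting $\mathbf{Y}_{N,x}=\ii\sigma_3\mathbf{Y}_N\mathbf{J}+\ii\mathbf{Q}\mathbf{Y}_N$ are correct, and you are right that the $t$-part \eqref{eq:hirota-lax-t} needs no new mechanism beyond allowing a cubic polynomial in $\lambda$.

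Two caveats. The minor one: in step three the commutator is in the wrong order. Matching the $\lambda^0$ coefficient with $\mathbf{T}_N=\mathbb{I}+\mathbf{T}_{N,1}\lambda^{-1}+O(\lambda^{-2})$, $\mathbf{T}_{N,1}=\mathbf{Y}_N\mathbf{M}^{-1}\mathbf{Y}_N^{\dag}$, gives $\mathbf{Q}^{[N]}=\mathbf{Q}+[\mathbf{T}_{N,1},\sigma_3]$, whose $(2,1)$ entry is $q+2\mathbf{Y}_{N,2}\mathbf{M}^{-1}\mathbf{Y}_{N,1}^{\dag}$ as in \eqref{eq:back}; with $[\sigma_3,\cdot]$ as you wrote it you would land on the opposite sign and contradict the theorem. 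The substantive one: everything hinges on the ``mutual calibration'' of $\mathbf{X}$, $\mathbf{S}$, $\mathbf{D}$ that you defer to an induction on $\sum n_i$, and that verification is the actual content of the theorem rather than a routine check, so as it stands the proposal is a plan with its hardest step outstanding. The efficient way to close it is to prove once that $\mathbf{T}_N(\lambda)\Phi_i(\lambda)\mathbf{v}_i=O\bigl((\lambda-\lambda_i)^{n_i}\bigr)$ and $\det\mathbf{T}_N=\prod_i\bigl(\tfrac{\lambda-\lambda_i}{\lambda-\lambda_i^*}\bigr)^{n_i}$: together with the normalization at $\infty$ these characterize $\mathbf{T}_N$ uniquely, they turn the residue cancellation in your step two and the symmetry in your step one into consequences of uniqueness rather than of brute-force partial fractions, and they reduce your induction to the factorization into elementary Darboux matrices that the paper itself writes down in Lemma \ref{lemma2spectral} and whose kernel condition it invokes in Lemma \ref{lemma-M}.
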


Now we proceed to construct the soliton solutions by the above B\"acklund transformation \eqref{eq:back}. We start with a seed solution: $q(x,t)=0,$
solving the Lax pair Eq.\eqref{eq:akns} and Eq.\eqref{eq:hirota-lax-t} gives a vector from the span space of the fundamental solution:
\begin{equation}\label{eq:funsolution}
 \Phi_{i}^{[k]}=\begin{bmatrix}\frac{1}{k!}\left(\frac{d}{d\lambda}\right)^k\ee^{2\theta^{[i]}}\Big|_{\lambda=\lambda_i}\\
\delta_{k,0}
\end{bmatrix},\qquad  \theta^{[i]}:=\ii\lambda(x+(2\gamma\lambda+4\delta\lambda^2)t)+\sum\limits_{j=0}^{n_i-1}a_i^{[j]}(\lambda-\lambda_i)^j-\frac{\ii}{4}\pi,
\end{equation}
where $\delta_{k,0}$ is the standard Chrestoffel symbol.

By the B\"acklund transformation \eqref{eq:back}, the single soliton reads
\begin{equation}\label{eq:one-soliton}
q^{[1]}=2\lambda_{1I}\sech\left(2\lambda_{1I}\left(x-v_1t\right)-2a_1^{[0]}\right)
\ee^{-2\ii{\rm Im}(\theta_1)-\frac{\ii}{2}\pi},
\end{equation}
where
\begin{equation}
\begin{split}
v_1&=-4\left(\lambda_{1R}\gamma-\lambda_{1I}^2\delta+3\lambda_{1R}^2\delta\right), \quad \theta_1:=\theta^{[1]}\big|_{\lambda=\lambda_1}=\ii\lambda_1(x+(2\gamma\lambda_1+4\delta\lambda_1^2)t)
+a_1^{[0]}-\frac{\ii}{4}\pi,\\
\lambda_1&=\lambda_{1R}+\ii\lambda_{1I},\quad {\rm Im}(\theta_1)=\lambda_{1R}x+2\left(\lambda_{1R}^2-\lambda_{1I}^2\right)\gamma t+4\left(\lambda_{1R}^2-3\lambda_{1I}^2\right)\lambda_{1R}\delta t-\frac{\pi}{4},
\end{split}
\end{equation}
and the script $_1$ in $v_1, \theta_1$ stands for $v, \theta^{[i]}$ depending on $\lambda_1$. Clearly, the velocity of single soliton $v_1$ is a quadratic function with respect to $\lambda_{1R}$ and $\lambda_{1I}$, so it has a minimum $v_{min}=\delta\left(4\lambda_{1I}^2+\frac{\gamma^2}{3\delta^2}\right)$.

Besides, with the aid of the fourth symmetry in Eq.\eqref{eq:symmetry-equation}, if $q^{[1]}$ is a solution to Hirota equation \eqref{eq:Hequation}, then
\begin{equation}
\begin{split}
\tilde{q}^{[1]}(x, t)&=2\lambda_{1I}\ee^{\epsilon}\sech\left(2\lambda_{1I}\ee^{\epsilon}\left(x-\tilde{v}_1 t\right)-2a_1^{[0]}\right)
\ee^{-2\ii{\rm Im}(\tilde{\theta}_1)-\frac{\ii}{2}\pi}
\end{split}
\end{equation}
is also the solution, where
\begin{equation*}
\begin{split}
\tilde{v}_1&=v_1\left(\lambda_{1I}\to\lambda_{1I}\ee^{\epsilon}, \quad \lambda_{1R}\to\lambda_{1R}\ee^{\epsilon}+\frac{ \left(\ee^{\epsilon}-1\right)}{6\delta}\gamma\right),\\
{\rm Im}(\tilde{\theta}_1)&={\rm Im}\left(\theta_1\left(\lambda_{1I}\to\lambda_{1I}\ee^{\epsilon}, \quad \lambda_{1R}\to\lambda_{1R}\ee^{\epsilon}+\frac{ \left(\ee^{\epsilon}-1\right)}{6\delta}\gamma\right)\right).
\end{split}
\end{equation*}

Furthermore, we could derive the high-order soliton with the B\"{a}cklund transformation \eqref{eq:back}. By choosing some special parameters, we give some examples to describe the dynamical behavior for high-order soliton. Suppose $N=2$, then there will appear two kinds of soliton, one is $n_1=2$, the other one is $n_1=1, n_2=1$. The first one depends on single spectral parameter $\lambda_1$, and these two solitons have the same amplitude and velocity. The second one depends on two spectral parameters $\lambda_1$ and $\lambda_2$, so their dynamical properties will be different from the first one, their velocities and the amplitudes are decided by $\lambda_1$ and $\lambda_2$. Moreover, we also give some high-order soliton under $N=4$.
\begin{figure}[!h]
{
\includegraphics[height=0.4\textwidth]{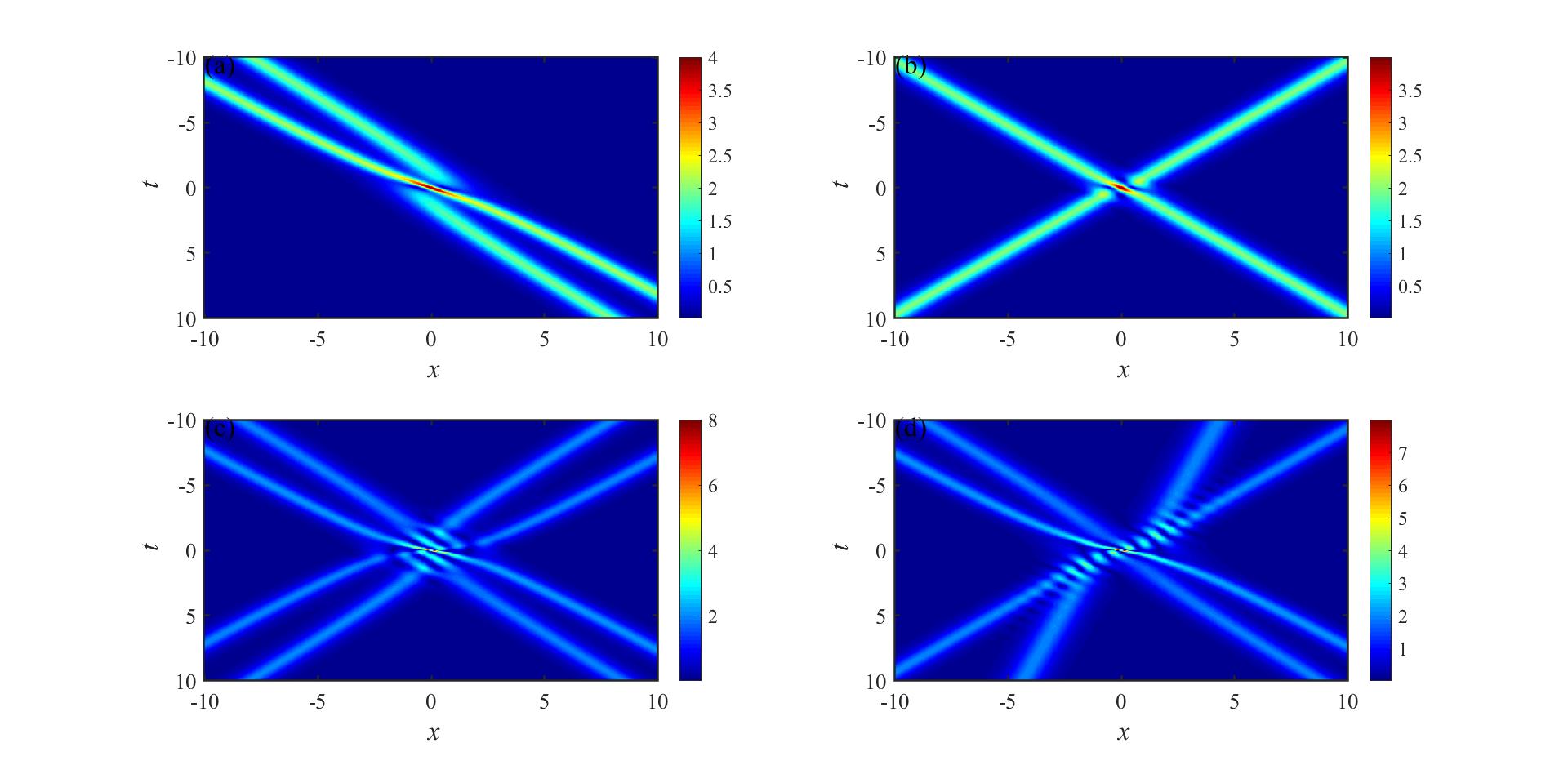}}
\caption{\small Four types of solitons with $\delta=1/8, \gamma=1/2$, (a) is second order solion with $\lambda_1=-1/3+\ii, n_1=2, a_1^{[0]}=a_{1}^{[1]}=0$, (b) is a multi-soliton, with $\lambda_1=-1/3+\ii, \lambda_2=-2/3-\sqrt{13}/3+\ii, n_1=1, n_2=1, a_{1}^{[0]}=a_{2}^{[0]}=0,$ (c) is a high-order soltion with $\lambda_1=-1/3+\ii, \lambda_2=-2/3-\sqrt{13}/3+\ii, n_1=2, n_2=2, a_{1}^{[0]}=a_{1}^{[1]}=a_{2}^{[0]}=a_{2}^{[1]}=0$, (d) is also high-order soliton with $\lambda_1=-1/3+\ii, \lambda_2=-2/3-\sqrt{13}/3+\ii, \lambda_3=1/3+\ii, n_1=2, n_2=1, n_3=1, a_{1}^{[0]}=a_{1}^{[1]}=a_{2}^{[0]}=a_{3}^{[0]}=0.$}\label{fig:higher-order}
\end{figure}

From the pictures, we can see the maximum amplitude in Fig.\ref{fig:higher-order} (a) and (b) is $4$, (c) and (d) is $8$. In fact, from the B\"{a}cklund transformation \eqref{eq:back}, we can calculate the maximum with $k$ spectral parameters $\lambda_1, \lambda_2, \cdots, \lambda_k$ with order $n_1, n_2, \cdots, n_k$ respectively, and the maximum amplitude is $$|q|^{[N]}_{\max}=\sum_{i=1}^{k}2|\lambda_{iI}|n_{i}.$$
\begin{figure}[!h]
{
\includegraphics[height=0.4\textwidth]{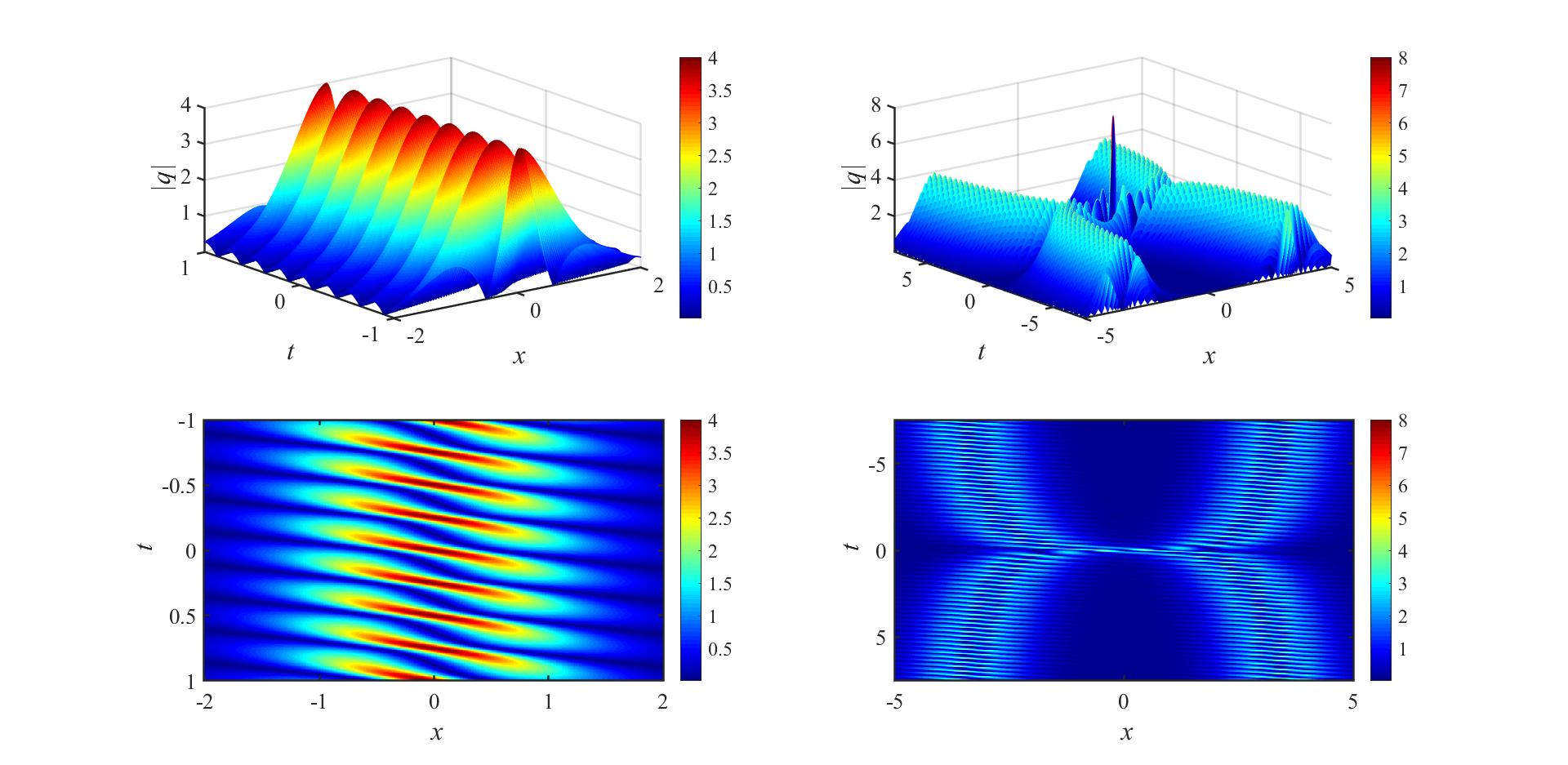}}
\caption{\small Two types of soliton with the same velocity, one is $n_1=1, n_2=1,$ the other is $n_1=2, n_2=2.$ The parameters is $\lambda_1=-2/3+\ii, \lambda_2=1/2+\ii, \delta=1, \gamma=1/2, a_1^{[0]}=a_{1}^{[1]}=a_{2}^{[0]}=a_{2}^{[1]}=0.$}\label{fig:same-v}
\end{figure}

From the figures, if the soliton depends on the same spectral parameter $\lambda_i$, they will move along with the same direction. But if the soliton depends on different spectral parameters, their moving directions can be changed by choosing different spectral values. Especially, if the velocity $v_1=v_2$, then there will appear a more interesting phenomenon, two solitons will come into being a breather, four solitons will become two breathers, which can be seen from Fig.\ref{fig:same-v}. It is clear that the dynamical behavior between the multi-soltion and the high-order soliton is different. However, they have some common features in some aspects, such as when $t\to\infty$, they both break up into individual soliton. Thus what does the long-time asymptotics of high-order soliton will be? How to obtain the asymptotic expression when $t\to\infty$? We would like to disclose the answers in the following.

\subsection{Riemann-Hilbert representation for the Darboux matrix of high-order soliton}

Before the analysis on the high-order soliton, we would like to establish the relationship between the Darboux transformation and Riemann-Hilbert method.
%\begin{remark}
In the literature \cite{Peter-Duke-2019,Deniz-JNS-2019,Deniz-arXiv-2019}, the authors discussed the asymptotic behavior of infinity order of rogue wave and soliton with the robust inverse scattering method. The jump matrix is only related to the Darboux matrix and the phase terms, that is to say, the Darboux matrix can be regarded as a special $\mathbf{M}(\lambda; x, t)$ matrix appeared in the RHP. Inspired by this idea, we construct two sectional analytic matrix
\begin{equation}
\mathbf{M}^{[N]}(\lambda; x, t)=\left\{\begin{split}
\mathbf{M}_{+}^{[N]}(\lambda; x, t)&=\left(\frac{\lambda-\lambda_1}{\lambda-\lambda_1^*}\right)^{-N/2}\mathbf{T}_{N}(\lambda; x, t),\quad \lambda\notin D_0, \\
\mathbf{M}_{-}^{[N]}(\lambda; x, t)&=\mathbf{T}_{N}(\lambda; x, t)\mathbf{\Phi}(\lambda; x, t)\mathbf{T}_{N}^{-1}(\lambda; 0, 0)\mathbf{\Phi}^{-1}(\lambda; x, t),\quad \lambda\in D_0,
\end{split}\right.
\end{equation}
where $D_0$ is a big disk centered at $zero$ involving the parameter $\lambda_1$, then this matrix solve the following RHP:
\begin{rhp}\label{rhp-TM}
Let $(x, t)\in \mathbb{R}^2$ be arbitrary parameters, and let $N\in\mathbb{Z}_{\geq0}$. Find a $2\times 2$ matrix function $\mathbf{M}^{[N]}(\lambda; x, t)$ satisfying the following properties
\begin{itemize}

\item \textbf{Analyticity}: $\mathbf{M}^{[N]}(\lambda; x, t)$ is analytic for $\lambda\in \mathbb{C}\setminus \partial D_{0}$, and it takes continuous boundary values from the interior and exterior of $\partial D_0$.
\item \textbf{Jump condition}: The boundary values on the jump contour $\partial D_0$ are related as
\begin{equation*}
\mathbf{M}^{[N]}_+(\lambda; x, t)=\mathbf{M}_{-}^{[N]}(\lambda; x, t)\ee^{\theta\sigma_3}\mathbf{Q}\left(\frac{\lambda-\lambda_1}{\lambda-\lambda_1^*}\right)^{\frac{N}{2}\sigma_3}
\mathbf{Q}^{-1}\ee^{-\theta\sigma_3},
\end{equation*}
\item \textbf{Normalization}: $\mathbf{M}^{[N]}(\lambda; x, t)=\mathbb{I}+O(\lambda^{-1})$.
\end{itemize}
where $\mathbf{Q}=\frac{1}{\sqrt{2}}\begin{bmatrix}1&\ii\\
\ii&1
\end{bmatrix}, \theta=\ii\lambda(x+(2\gamma\lambda+4\delta\lambda^2)t).$
\end{rhp}
It needs to be emphasized that the RHP constructed here and in the following are all reflectionless, and the Darboux matrix depends on single spectral parameter $\lambda_1$.
%\end{remark}

Based on the method in \cite{Deniz-JNS-2019,Deniz-arXiv-2019}, we can define the following sectional holomorphic matrix function \begin{equation}
\mathbf{\widetilde{M}}^{[N]}(\lambda; x, t):=\left\{\begin{split}
&\mathbf{M}^{[N]}(\lambda; x, t)\ee^{\theta\sigma_3}\mathbf{Q}\ee^{-\theta\sigma_3}, &\lambda\in D_0\\
&\mathbf{M}^{[N]}(\lambda; x, t)\left(\frac{\lambda-\lambda_1^*}{\lambda-\lambda_1}\right)^{\frac{N}{2}\sigma_3},&\lambda \notin D_0
\end{split}\right.
\end{equation}
which solves the following RHP:
\begin{rhp}\label{rhp-M}
Define $D_0\in \mathbb{C}$ be a disk centered at the origin containing $\lambda_1$ in its interior. Find a unique $2\times 2$ matrix function satisfying the following properties:
\begin{itemize}
\item \textbf{Analyticity}: $\mathbf{\widetilde{M}}^{[N]}(\lambda; x, t)$ is analytic for $\lambda\in \mathbb{C}\setminus \partial D_{0}$, and it takes continuous boundary values from the interior and exterior of $\partial D_0$.
\item \textbf{Jump condition}: The boundary values on the jump contour $\partial D_0$ are related as
\begin{equation*}
\mathbf{\widetilde{M}}^{[N]}_+(\lambda; x, t)=\mathbf{\widetilde{M}}_{-}^{[N]}(\lambda; x, t)\ee^{\left(\theta+ \frac{N}{2}\log\left(\frac{\lambda-\lambda_1}{\lambda-\lambda_1^*}\right)\right)\sigma_3}
\mathbf{Q}^{-1}\ee^{-\left(\theta+ \frac{N}{2}\log\left(\frac{\lambda-\lambda_1}{\lambda-\lambda_1^*}\right)\right)\sigma_3},
\end{equation*}
\item \textbf{Normalization}: $\mathbf{\widetilde{M}}^{[N]}(\lambda; x, t)=\mathbb{I}+O(\lambda^{-1})$.
\end{itemize}
\end{rhp}
In RHP \ref{rhp-TM}, we construct two sectional analytic matrices $\mathbf{M}^{[N]}(\lambda; x, t)$, and the jump matrix exist in a big circle and the parameter is involved in the circle. The classical RHP without reflection coefficients can be constructed by two sectional analytic matrices in and out of two small circle centered at $\lambda_1$ and $\lambda_1^*$, as shown in Fig.\ref{fig:jump}.
\begin{figure}[!h]
{\includegraphics[height=0.35\textwidth]{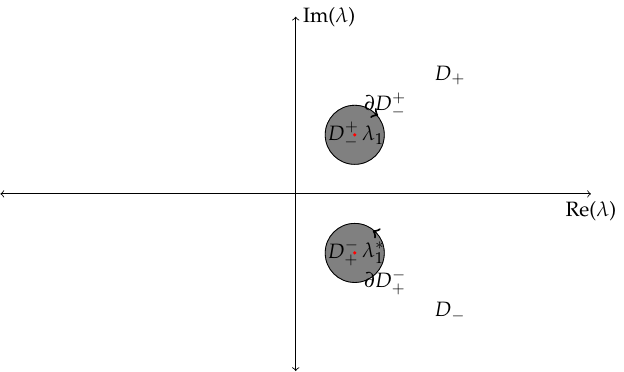}}
\caption{\small Definition of $D_{\pm}, D_{+}^{-}, D_{-}^+$, and the contour $\partial D_{+}^{-}, \partial D_{-}^{+}.$}\label{fig:jump}
\end{figure}
So we can construct another RHP with the jump in Fig.\ref{fig:jump}.
\begin{lemma}\label{lemma-M}
For the Darboux matrix $\mathbf{T}_{N}(\lambda; x, t)$ with $n_1=N$, two analytic matrices can be constructed as follows:
\begin{equation}\label{eq:negative-M}
\mathbf{T}_{N}(\lambda; x, t)\begin{bmatrix}1&0\\0&\left(\frac{\lambda-\lambda_1^*}{\lambda-\lambda_1}\right)^N
\end{bmatrix}\begin{bmatrix}1&-\sum\limits_{i=1}^{N}\frac{\alpha_i}{(\lambda-\lambda_1)^i}\\0&1
\end{bmatrix},
\end{equation}
and
\begin{equation}\label{eq:positive-M}
\mathbf{T}_{N}(\lambda; x, t)\begin{bmatrix}1&0\\0&\left(\frac{\lambda-\lambda_1^*}{\lambda-\lambda_1}\right)^N
\end{bmatrix}\begin{bmatrix}1&0\\\sum\limits_{i=1}^{N}\frac{\alpha_i^*}{(\lambda-\lambda_1^*)^i}&1
\end{bmatrix},
\end{equation}
which are analytic at the neighborhood of $\lambda=\lambda_1$, $\lambda=\lambda_1^*$ respectively, where $\alpha_i=\ii\frac{1}{(N-i)!}\frac{\partial^{N-i} (\lambda-\lambda_1^*)^N\ee^{2\theta}}{\partial\lambda^{N-i}}\big|_{\lambda=\lambda_1}$.
\end{lemma}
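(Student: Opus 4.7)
The plan is to verify analyticity of each of \eqref{eq:negative-M}, \eqref{eq:positive-M} by matching principal parts at the relevant pole ($\lambda_1$ for the former, $\lambda_1^*$ for the latter). Throughout I shall use that, since $\mathbf{D}(\lambda)$ is the only $\lambda$-dependent factor entering $\mathbf{T}_N$ and its poles all lie at $\lambda_1^*$ with orders $\le N$, the matrix $\mathbf{T}_N(\lambda)$ is meromorphic in $\lambda$ with singularities confined to $\lambda_1^*$; in particular both columns $\mathbf{T}_N^{(1)}(\lambda), \mathbf{T}_N^{(2)}(\lambda)$ are holomorphic at $\lambda_1$.

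For \eqref{eq:negative-M}, write the product as $\mathbf{T}_N(\lambda) D(\lambda) U(\lambda)$ with $D(\lambda)=\mathrm{diag}(1,f(\lambda))$, $f(\lambda)=\left((\lambda-\lambda_1^*)/(\lambda-\lambda_1)\right)^N$, and $U(\lambda)$ the upper-triangular factor. Its first column is $\mathbf{T}_N^{(1)}(\lambda)$, manifestly holomorphic at $\lambda_1$. The second column reads
\begin{equation*}
-\Big(\sum_{i=1}^{N}\frac{\alpha_i}{(\lambda-\lambda_1)^i}\Big)\mathbf{T}_N^{(1)}(\lambda)+f(\lambda)\mathbf{T}_N^{(2)}(\lambda),
\end{equation*}
and has an apparent pole of order $N$ at $\lambda_1$. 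To produce the cancellation I would invoke the defining kernel identity of the $N$-fold generalized Darboux transformation built in Theorem \ref{thm1}, namely
\begin{equation*}
\mathbf{T}_N(\lambda)\Phi_1(\lambda)=\mathrm{O}\!\left((\lambda-\lambda_1)^N\right) \text{ as } \lambda\to\lambda_1, \qquad \Phi_1(\lambda):=\begin{bmatrix}\ee^{2\theta^{[1]}(\lambda)}\\ 1\end{bmatrix},
\end{equation*}
which expresses that the Jordan chain $\Phi_1^{[0]},\dots,\Phi_1^{[N-1]}$ built from $\Phi_1$ is annihilated by $\mathbf{T}_N$. Solving this identity for $\mathbf{T}_N^{(2)}$ gives $\mathbf{T}_N^{(2)}(\lambda)=-\ee^{2\theta^{[1]}(\lambda)}\mathbf{T}_N^{(1)}(\lambda)+(\lambda-\lambda_1)^N W(\lambda)$ with $W$ holomorphic at $\lambda_1$. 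Substituting, the displayed second column becomes $\bigl(-\sum_i \alpha_i/(\lambda-\lambda_1)^i - f(\lambda)\ee^{2\theta^{[1]}(\lambda)}\bigr)\mathbf{T}_N^{(1)}(\lambda)+(\lambda-\lambda_1^*)^N W(\lambda)$; the second summand is automatically analytic at $\lambda_1$. Using $\ee^{2\theta^{[1]}}=-\ii\,\ee^{2\theta}$ coming from the $-\ii\pi/4$ shift present in $\theta^{[1]}$ (and the canonical choice $a_1^{[j]}=0$ tacit in the statement), the principal part of $-f(\lambda)\ee^{2\theta^{[1]}(\lambda)}=\ii(\lambda-\lambda_1^*)^N\ee^{2\theta(\lambda)}/(\lambda-\lambda_1)^N$ at $\lambda_1$ equals $\sum_i \alpha_i/(\lambda-\lambda_1)^i$ term by term by the very definition of $\alpha_i$, so the $\mathbf{T}_N^{(1)}$-coefficient is holomorphic and the second column is analytic at $\lambda_1$.

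For \eqref{eq:positive-M} I would avoid redoing the computation by invoking the ${\rm su}(2)$ reality symmetry \eqref{eq:reality}, which is inherited by the Darboux matrix in the form $\mathbf{T}_N(\lambda^*)^{\dag}=\mathbf{T}_N^{-1}(\lambda)$. A short calculation using $D(\lambda^*)^{\dag}=D(\lambda)^{-1}$ and $U(\lambda^*)^{\dag}=L(\lambda)^{-1}$ (where $L(\lambda)$ denotes the lower-triangular factor in \eqref{eq:positive-M}) yields $[\mathbf{T}_N(\lambda^*)D(\lambda^*)U(\lambda^*)]^{\dag}=L(\lambda)^{-1}D(\lambda)^{-1}\mathbf{T}_N^{-1}(\lambda)$, whose inverse is precisely \eqref{eq:positive-M}. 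Since the matrix in \eqref{eq:negative-M} is analytic at $\lambda_1$ with unit determinant, its Hermitian adjoint evaluated at $\lambda^*$ is analytic at $\lambda_1^*$ with unit determinant, and hence so is its inverse \eqref{eq:positive-M}. The expected main obstacle of the whole argument is establishing the kernel identity $\mathbf{T}_N(\lambda)\Phi_1(\lambda)=\mathrm{O}((\lambda-\lambda_1)^N)$ directly from the block-determinantal formula of Theorem \ref{thm1}; this is classical in loop-group Darboux theory \cite{TerngU-00}, but, spelled out with the explicit $\mathbf{Y}_N, \mathbf{M}, \mathbf{D}, \mathbf{X}, \mathbf{S}$ of this paper, it requires organizing Taylor expansions of the composite $\mathbf{Y}_N\mathbf{M}^{-1}\mathbf{D}(\lambda)\mathbf{Y}_N^{\dag}\Phi_1(\lambda)$ against the block structure, and is the most computation-intensive step.
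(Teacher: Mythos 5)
Your argument for \eqref{eq:negative-M} is essentially the paper's own: both rest on the kernel condition $\mathbf{T}_N(\lambda)\left[\ee^{2\theta},\ii\right]^{\T}=O\left((\lambda-\lambda_1)^N\right)$ inherited from Theorem \ref{thm1}, and both identify the $\alpha_i$ as exactly the Taylor coefficients of $\ii(\lambda-\lambda_1^*)^N\ee^{2\theta}$ at $\lambda_1$ needed to cancel the principal part of the second column; your reformulation via $\mathbf{T}_N^{(2)}=-\ee^{2\theta^{[1]}}\mathbf{T}_N^{(1)}+(\lambda-\lambda_1)^N W$ is just a rearrangement of the same cancellation, and your parenthetical caveat about the tacit choice $a_1^{[j]}=0$ (so that $\ee^{2\theta^{[1]}}=-\ii\,\ee^{2\theta}$) is a fair observation about the statement as written. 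Where you genuinely diverge is \eqref{eq:positive-M}: the paper simply repeats the computation ``in a similar manner'' with the conjugate kernel vector, whereas you deduce it from \eqref{eq:negative-M} via the ${\rm su}(2)$ symmetry $\mathbf{T}_N(\lambda^*)^{\dag}=\mathbf{T}_N^{-1}(\lambda)$ together with $D(\lambda^*)^{\dag}=D(\lambda)^{-1}$, $U(\lambda^*)^{\dag}=L(\lambda)^{-1}$ and the unit-determinant observation. This is correct and arguably cleaner, since it makes the two halves of the lemma manifestly equivalent rather than parallel; the small price is that it relies on two facts the paper never states explicitly for $\mathbf{T}_N$ (the symmetry, which follows by iterating it for elementary factors, and $\det\mathbf{T}_N=\left(\frac{\lambda-\lambda_1}{\lambda-\lambda_1^*}\right)^N$), both standard but each deserving a line of justification. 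Like the paper, you take the kernel identity itself for granted; flagging its verification from the block-determinantal formula as the real computational burden is honest and does not constitute a gap relative to the paper's own level of rigor.
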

\begin{proof}
Through the kernel condition of the Darboux matrix in theorem \ref{thm1}, we know that the meromorphic vector
$\mathbf{T}_N(\lambda;x,t) \left[\ee^{2\theta},\ii\right]^{\T}$ can be expanded with the form $O((\lambda-\lambda_1)^N)$ at the neighborhood of $\lambda=\lambda_1$. Furthermore, we know that
\begin{equation}\label{eq:sing}
\frac{\mathbf{T}_{N}(\lambda; x, t)}{\left(\lambda-\lambda_1\right)^N}\begin{bmatrix}-\ii\left(\lambda-\lambda_1^*\right)^N\ee^{2\theta}\\\left(\lambda-\lambda_1^*\right)^N
\end{bmatrix}\end{equation} is holomorphic at the neighborhood of $\lambda=\lambda_1$. The principal part at $\lambda=\lambda_1$ for the matrix $$ \mathbf{T}_{N}(\lambda; x, t)\begin{bmatrix}1&0\\0&\left(\frac{\lambda-\lambda_1^*}{\lambda-\lambda_1}\right)^N
\end{bmatrix}$$ can be eliminated by the analytic property of \eqref{eq:sing}: i.e. $$\mathbf{T}_{N}(\lambda; x, t)\begin{bmatrix}1&-\frac{\sum\limits_{i=1}^{N}\alpha_i\left(\lambda-\lambda_1\right)^{N-i}}{\left(\lambda-\lambda_1\right)^{N}}\\0&\left(\frac{\lambda-\lambda_1^*}{\lambda-\lambda_1}\right)^N
\end{bmatrix},$$ is analytic at the neighborhood of $\lambda=\lambda_1, $ where $-\ii\left(\lambda-\lambda_1^*\right)^N\ee^{2\theta}=\sum\limits_{i=1}^{N}-\alpha_i(\lambda-\lambda_1)^{N-i}+O((\lambda-\lambda_1)^{N}).$
In a similar manner, the equation \eqref{eq:positive-M} is analytic at the neighborhood of $\lambda=\lambda_1^*$ can be proved.
\end{proof}

%\begin{remark}
With the aid of lemma \ref{lemma-M}, we can construct two kinds of sectional analytic matrix as
\begin{equation}
\widehat{\mathbf{M}}^{[N]}(\lambda; x, t)=
\left\{\begin{split}&\widehat{\mathbf{M}}^{[N]}_{+}(\lambda; x, t)=\mathbf{T}_{N}(\lambda; x, t)\begin{bmatrix}1&0\\
0&\left(\frac{\lambda-\lambda_1^*}{\lambda-\lambda_1}\right)^{N}
\end{bmatrix}, \quad &\lambda\in D_{+}\cup D_-\\
&\widehat{\mathbf{M}}^{[N]}_{-}(\lambda; x, t)=\mathbf{T}_{N}(\lambda; x, t)\begin{bmatrix}1&0\\
0&\left(\frac{\lambda-\lambda_1^*}{\lambda-\lambda_1}\right)^{N}
\end{bmatrix}\begin{bmatrix}
1&-\sum\limits_{i=1}^{N}\frac{\alpha_i}{(\lambda-\lambda_1)^i}\\
0&1
\end{bmatrix},\quad &\lambda\in D^{+}_{-}\\
&\widehat{\mathbf{M}}^{[N]}_{-}(\lambda; x, t)=\mathbf{T}_{N}(\lambda; x, t)\begin{bmatrix}1&0\\
0&\left(\frac{\lambda-\lambda_1^*}{\lambda-\lambda_1}\right)^{N}
\end{bmatrix}\begin{bmatrix}
1&0\\
\sum\limits_{i=1}^{N}\frac{\alpha_i^*}{(\lambda-\lambda_1^*)^i}&1
\end{bmatrix},\quad &\lambda\in D_{+}^-
\end{split}\right.
\end{equation}
which solves the following RHP:
\begin{rhp}\label{rhp-M1}
Define two small circles $D_{-}^{+},D_{+}^{-}$ centered at $\lambda=\lambda_1$ and $\lambda=\lambda_1^*$ respectively. Find a unique $2\times 2$ Matrix $\widehat{\mathbf{M}}^{[N]}(\lambda; x, t)$, satisfying the following properties:
\begin{itemize}

\item \textbf{Analyticity}: $\widehat{\mathbf{M}}^{[N]}(\lambda; x, t)$ is analytic for $\lambda\in \mathbb{C}\setminus \left(\partial D_{-}^+\bigcup\partial D_{+}^-\right)$, and it takes continuous boundary values from the interior and exterior of $D_{+}^{-}$ and $D_{-}^+$.
\item \textbf{Jump condition}: The boundary values on the jump contour $\partial D_{-}^{+}, \partial D_{+}^{-}$ are related as
\begin{equation*}
\begin{split}
&\widehat{\mathbf{M}}^{[N]}_{+}(\lambda; x, t)=\widehat{\mathbf{M}}^{[N]}_{-}(\lambda; x, t)\mathbf{v}_{+}(\lambda; x, t),\qquad\lambda\in\partial D_{-}^{+}\\
&\widehat{\mathbf{M}}^{[N]}_{+}(\lambda; x, t)=\widehat{\mathbf{M}}^{[N]}_{-}(\lambda; x, t)\mathbf{v}_{-}(\lambda; x, t),\qquad\lambda\in\partial D_{+}^{-}
\end{split}
\end{equation*}
where
\begin{equation}\label{eq:alphaj}
\begin{split}
&\mathbf{v}_{+}(\lambda; x, t)=\begin{bmatrix}
1&\sum\limits_{i=1}^{N}\frac{\alpha_i}{(\lambda-\lambda_1)^i}\\
0&1
\end{bmatrix}, \quad \mathbf{v}_{-}(\lambda; x, t)=\begin{bmatrix}
1&0\\
\sum\limits_{i=1}^{N}\frac{\alpha_i^*}{(\lambda-\lambda_1^*)^i}&1
\end{bmatrix},
\end{split}
\end{equation}
\item \textbf{Normalization}: $\widehat{\mathbf{M}}^{[N]}(\lambda; x, t)=\mathbb{I}+O(\lambda^{-1})$.
\end{itemize}
\end{rhp}
%\end{remark}
Compared with RHP \ref{rhp-M}, the jump matrix in the RHP  \ref{rhp-M1} is different in essence, which is a polynomial function with respect to $x$ and $t$. In the literature \cite{Peter-Duke-2019}, Bilman, the second author of this work and Miller derived the Lax pair from the RHP  \ref{rhp-M}. Here we consider how to derive the Lax pair by the RHP  \ref{rhp-M1}.
\begin{prop}\label{prop3}
The RHP  \ref{rhp-M1} can derive the Lax pair of Hirota equation.
\end{prop}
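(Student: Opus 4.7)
The plan is to construct from the RHP solution $\widehat{\mathbf{M}}^{[N]}(\lambda;x,t)$ a matrix wave function
\begin{equation*}
\pmb{\Psi}(\lambda;x,t):=\widehat{\mathbf{M}}^{[N]}(\lambda;x,t)\,\ee^{\theta(\lambda;x,t)\sigma_3},
\end{equation*}
and then to verify that the matrices $\mathbf{U}(\lambda;x,t):=(\partial_x\pmb{\Psi})\pmb{\Psi}^{-1}$ and $\mathbf{V}(\lambda;x,t):=(\partial_t\pmb{\Psi})\pmb{\Psi}^{-1}$ coincide with the AKNS spectral matrix \eqref{eq:akns} and the Hirota temporal matrix \eqref{eq:hirota-lax-t}. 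Using the normalization $\widehat{\mathbf{M}}^{[N]}=\mathbb{I}+\mathbf{m}_1(x,t)\lambda^{-1}+\cdots$ at infinity together with $\partial_x\theta=\ii\lambda$ and $\partial_t\theta=2\ii\gamma\lambda^2+4\ii\delta\lambda^3$, one obtains
\begin{equation*}
\mathbf{U}(\lambda;x,t)=\ii\lambda\sigma_3+\ii[\sigma_3,\mathbf{m}_1(x,t)]+O(\lambda^{-1}),
\end{equation*}
and a degree-$3$ polynomial expression for $\mathbf{V}$ modulo $O(\lambda^{-1})$ matching the right-hand side of \eqref{eq:hirota-lax-t}. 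Setting $\mathbf{Q}(x,t):=[\sigma_3,\mathbf{m}_1(x,t)]$ in the spirit of \eqref{eq:m-q} identifies the extracted potential.

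The decisive point is then to show that $\mathbf{U}$ and $\mathbf{V}$ are entire functions of $\lambda$; once established, Liouville's theorem combined with the expansions above forces the $O(\lambda^{-1})$ remainders to vanish identically, and $\mathbf{U},\mathbf{V}$ take the claimed polynomial form. Two things must be verified: continuity across $\partial D_{-}^{+}$ and $\partial D_{+}^{-}$, and removability of the apparent singularities at $\lambda=\lambda_1,\lambda_1^*$. For the jumps, differentiate the jump condition of $\widehat{\mathbf{M}}^{[N]}$ and use the identities $\partial_x\alpha_i=2\ii(\alpha_{i+1}+\lambda_1\alpha_i)$ for $1\le i<N$ together with $\partial_x\alpha_N=2\ii\lambda_1\alpha_N$, obtained by applying Leibniz to the definition of $\alpha_i$ in \eqref{eq:alphaj} and using $\partial_x\theta=\ii\lambda$; together with their $\partial_t$-analogues, these show that $(\partial_x\mathbf{v}_{\pm})\mathbf{v}_{\pm}^{-1}$ and $(\partial_t\mathbf{v}_{\pm})\mathbf{v}_{\pm}^{-1}$ have exactly the structure needed to cancel the jumps of $\mathbf{U}$ and $\mathbf{V}$ respectively after the $\ee^{\theta\sigma_3}$ conjugation is carried out. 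For the singularities at $\lambda_1,\lambda_1^*$, the pole-cancellation argument of Lemma~\ref{lemma-M} applies verbatim: the principal parts of $\widehat{\mathbf{M}}^{[N]}$ at these points are matched exactly by $\sum_i\alpha_i/(\lambda-\lambda_1)^i$ and $\sum_i\alpha_i^*/(\lambda-\lambda_1^*)^i$, so once the exponential factor $\ee^{\theta\sigma_3}$ is included the logarithmic derivatives are regular at both spectral points.

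The main obstacle is precisely this pole-and-jump cancellation, because the $(x,t)$-dependence of the jump matrices $\mathbf{v}_{\pm}$ through $\theta$ is what distinguishes this RHP from the standard reflectionless AKNS one: one cannot simply multiply by a constant-in-$(x,t)$ dressing factor to eliminate the jumps, and one has to exploit the precise polynomial-in-$\lambda$ structure of $\theta$ for the Hirota flow (degree three in $\lambda$), which is exactly what delivers a temporal Lax matrix $\mathbf{V}$ of degree three. The remaining matching of coefficients with the AKNS hierarchy flows in \eqref{eq:first-three-flows} is then routine asymptotic bookkeeping.
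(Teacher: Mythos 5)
Your proposal follows essentially the same route as the paper: form $\pmb{\Psi}=\widehat{\mathbf{M}}^{[N]}\ee^{\theta\sigma_3}$, show that the contributions $\left[\mathbf{v}_{\pm,x}+\mathbf{v}_{\pm}\,\ii\lambda\sigma_3\right]\mathbf{v}_{\pm}^{-1}$ (and their $t$-analogues) coming from differentiating the jump relation are analytic at $\lambda_1$, $\lambda_1^*$, and then invoke Liouville's theorem with the normalization at infinity; your explicit recursion $\partial_x\alpha_i=2\ii(\alpha_{i+1}+\lambda_1\alpha_i)$ is just a more concrete form of the paper's appeal to the definition of $\alpha_i$ as Taylor coefficients of $-\ii(\lambda-\lambda_1^*)^N\ee^{2\theta}$, and it does deliver the required cancellation (indeed $p_x-2\ii\lambda p=-2\ii\alpha_1$). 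The only slip is your separate ``removability at $\lambda_1,\lambda_1^*$'' step: $\widehat{\mathbf{M}}^{[N]}$ has no singularities there by construction (that is precisely Lemma~\ref{lemma-M}), so the entire burden lies on the jump-matrix computation you already carried out, and this redundancy does not affect correctness.
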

\begin{proof}
According to the the proposition \ref{prop:laxpair}, we only need to prove $$\left(\widehat{\mathbf{M}}_+(\lambda; x, t)\ee^{\theta\sigma_3}\right)_x\left(\widehat{\mathbf{M}}_+(\lambda; x, t)\ee^{\theta\sigma_3}\right)^{-1}\quad\text{and}\quad\left(\widehat{\mathbf{M}}_+(\lambda; x, t)\ee^{\theta\sigma_3}\right)_t\left(\widehat{\mathbf{M}}_+(\lambda; x, t)\ee^{\theta\sigma_3}\right)^{-1}$$ are holomorphic function in the whole complex plane $\mathbb{C}$. With the jump condition in RHP  \ref{rhp-M1}, in the region $D_-^+$, we have
\begin{equation}
\begin{split}
&\left(\widehat{\mathbf{M}}^{[N]}_{+}(\lambda; x, t)\ee^{\theta\sigma_3}\right)_x\left(\widehat{\mathbf{M}}^{[N]}_{+}(\lambda; x, t)\ee^{\theta\sigma_3}\right)^{-1}
\\=&\left(\widehat{\mathbf{M}}^{[N]}_{-}(\lambda; x, t)\mathbf{v}_{+}(\lambda; x, t)\ee^{\theta\sigma_3}\right)_x\left(\widehat{\mathbf{M}}^{[N]}_{-}(\lambda; x, t)
\mathbf{v}_{+}(\lambda; x, t)\ee^{\theta\sigma_3}\right)^{-1}\\=&
\widehat{\mathbf{M}}^{[N]}_{-}(\lambda; x, t)\left\{\left[\mathbf{v}_{+, x}(\lambda; x, t){+}\mathbf{v}_{+}(\lambda; x, t)(\ii\lambda\sigma_3)\right]\mathbf{v}_{+}^{-1}(\lambda; x, t)\right\}\left(\widehat{\mathbf{M}}_{-}^{[N]}\right)^{-1}(\lambda; x, t)\\+&\widehat{\mathbf{M}}^{[N]}_{-, x}(\lambda; x, t)(\widehat{\mathbf{M}}^{[N]}_{-})^{-1}(\lambda; x, t).
\end{split}
\end{equation}
Furthermore, with the expression of $\mathbf{v}_{+}(\lambda; x, t)$, we know
\begin{equation}
\begin{split}
\left[\mathbf{v}_{+, x}(\lambda; x, t){+}\mathbf{v}_{+}(\lambda; x, t)(\ii\lambda\sigma_3)\right]\mathbf{v}_{+}^{-1}(\lambda; x, t)=
\begin{bmatrix}
\ii\lambda&\left(\sum\limits_{i=1}^{N}\frac{\alpha_i}{(\lambda-\lambda_1)^i}\right)_x
-2\ii\lambda\left(\sum\limits_{i=1}^{N}\frac{\alpha_i}{(\lambda-\lambda_1)^i}\right)\\
0&-\ii\lambda
\end{bmatrix}.
\end{split}
\end{equation}
Based on the definition of $\alpha_j$ in Lemma \ref{lemma-M}, we know that the $(1,2)$ element of the matrix at the right hand side in the above equation is analytic in the neighborhood of $\lambda=\lambda_1$, which deduces that the matrix $\left(\widehat{\mathbf{M}}^{[N]}_{+}(\lambda; x, t)\ee^{\theta\sigma_3}\right)_x\left(\widehat{\mathbf{M}}^{[N]}_{+}(\lambda; x, t)\ee^{\theta\sigma_3}\right)^{-1}$ is analytic in the region $D_-^+$. Similarly, we can prove that $\left(\widehat{\mathbf{M}}^{[N]}_{+}(\lambda; x, t)\ee^{\theta\sigma_3}\right)_x\left(\widehat{\mathbf{M}}^{[N]}_{+}(\lambda; x, t)\ee^{\theta\sigma_3}\right)^{-1}$ is analytic in the region $D_+^-$. Therefore, we conclude that the matrix $\left(\widehat{\mathbf{M}}^{[N]}_{+}(\lambda; x, t)\ee^{\theta\sigma_3}\right)_x\left(\widehat{\mathbf{M}}^{[N]}_{+}(\lambda; x, t)\ee^{\theta\sigma_3}\right)^{-1}$ is holomorphic on the whole complex plane $\mathbb{C}$.

In a similar manner, we consider the $t$-part. By the similar property of $x$-part, we have \begin{equation}
\begin{split}
&\left[\mathbf{v}_{+, t}(\lambda; x, t){+}\mathbf{v}_{+}(\lambda; x, t)\left(2\ii\lambda^2(\gamma+2\delta\lambda)\sigma_3\right)\right]\mathbf{v}_{+}^{-1}(\lambda; x, t)\\=&
\begin{bmatrix}
2\ii\lambda^2(\gamma+2\delta\lambda)&\left(\sum\limits_{i=1}^{N}\frac{\alpha_i}{(\lambda-\lambda_1)^i}\right)_t
-4\ii\lambda^2(\gamma+2\delta\lambda)\left(\sum\limits_{i=1}^{N}\frac{\alpha_i}{(\lambda-\lambda_1)^i}\right)\\
0&-2\ii\lambda^2(\gamma+2\delta\lambda)
\end{bmatrix},
\end{split}
\end{equation}
which implies that $\left(\widehat{\mathbf{M}}_{+}(\lambda; x, t)\ee^{\theta\sigma_3}\right)_t\left(\widehat{\mathbf{M}}_{+}(\lambda; x, t)\ee^{\theta\sigma_3}\right)^{-1}$ is analytic in the region $D_-^+$. With the same method, we can also prove that it is analytic in the region $D_+^-$. Then the Lax pair of Hirota equation can be derived by the proposition \ref{prop:laxpair}.
\end{proof}
\lm{These two kinds of RHP both are constructed from the given Darboux matrix in Eq.\eqref{eq:darboux} directly, which produces a new way to establish the relation between the Darboux transformation and the Riemann-Hilbert problem. Namely, under the reflectionless case, we can give a RHP as long as we have a Darboux matrix, which did not reported before to the best of our knowledge}. Moreover, this new Riemann-Hilbert representation will be useful in the analysis to the large order soliton \cite{Deniz-JNS-2019,Deniz-arXiv-2019}. The further analysis on the large order soliton in different region will be another topic, whose difficulty is that the phase term $\theta$ is a cubic polynomial with respect to $\lambda$.

Although we give two kinds of RHP of high-order soliton, we intend to analyze the long-time asymptotics from the solutions determinant directly. Observing the Fig.\ref{fig:higher-order}, when $t\to\infty$, Fig.\ref{fig:higher-order} (a) and (b) has two individual solitons, and (c) and (d) has four individual solitons. We hope to get the asymptotic individual solitons with the form of single soliton. That is, when $t\to\infty$, if there are $N$ individual solitons, we should construct $N$ single solitons as the asymptotic expression for $N$-th order soliton.
\section{The asymptotic analysis to high-order soliton} \label{sec4}
\lm{Last section, we exhibit some graphics on the high-order soltion. Now we are going to proceed to analyze the long-time asymptotic behavior about the high-order soliton. From the figures, we can see that the $N$-th order soliton can break up into $N$ individual solitons as $t\to\infty$. We would like to give a strict proof on this issue. So how to find the $N$ individual solitons to match the $N$-th order soliton?  In order to tackle with this problem, we first gives the asymptotic expression under one spectral parameter $\lambda_1$. To realize it, we extract the leading order term from the high-order soliton determinant given in the last section \eqref{eq:back}, which contains polynomial function and exponent function. If both two kinds of function have the same degree $t$, then the long-time asymptotics can be expressed by a single soliton. Otherwise, it is a vanishing solution. Afterwards, the general asymptotic behavior with $k$ spectral parameters can be given on the basis of the former result. }
\subsection{The asymptotic behavior under one spectral parameter}
\lm{In this subsection, we prepare to give the asymptotic behavior with only one spectral parameter $\lambda_1$. Under this condition, the high-order soliton in Eq.\eqref{eq:back} seems easier than multi-parameters case. We will try to get the leading order term when $t$ is large for further study. During the calculation, we need a lot of determinant techniques to obtain the leading order terms. To give a detailed analysis, we gives some prior lemmas in the following:}
\begin{lemma}\label{lemma1}
Denote $\vartheta_{1,0}=-2\left[x+4\lambda_1(\gamma+3\delta\lambda_1)t\right]+2\ii a_{1}^{[1]}$, $\vartheta_{1,1}=-4\ii(\gamma+6\delta\lambda_1)t-2a_{1}^{[2]}$, $\vartheta_{1,2}=8\delta t-2\ii a_{1}^{[3]}$, then we have the following expansion for the function
\begin{equation}\label{eq:expand-exp}
\exp(\epsilon(\vartheta_{1,0}+\epsilon\vartheta_{1,1}+\epsilon^2\vartheta_{1,2}))=1+\sum_{i=1}^{\infty}f_{1,l}\epsilon^l,
\end{equation}
where
\begin{equation}\label{eq:f-exp}
f_{1,l}=\sum_{m=0}^{[2l/3]}\sum_{k_0=\max(l-2m,0)}^{[l-3m/2]}\frac{\vartheta_{1,0}^{k_0}}{k_0!}\frac{\vartheta_{1,1}^{2l-2k_0-3m}}{(2l-2k_0-3m)!}\frac{\vartheta_{1,2}^{k_0-l+2m}}{(k_0-l+2m)!}.
\end{equation}
\end{lemma}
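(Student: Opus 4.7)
Because the three scalars $\vartheta_{1,0}, \vartheta_{1,1}, \vartheta_{1,2}$ commute, the plan is to factor
\[
\exp\bigl(\epsilon(\vartheta_{1,0}+\epsilon\vartheta_{1,1}+\epsilon^2\vartheta_{1,2})\bigr)
=\ee^{\epsilon\vartheta_{1,0}}\,\ee^{\epsilon^2\vartheta_{1,1}}\,\ee^{\epsilon^3\vartheta_{1,2}},
\]
expand each factor as its own Taylor series in powers of $\epsilon,\epsilon^2,\epsilon^3$ respectively, and then collect the coefficient of $\epsilon^l$. This reduces \eqref{eq:expand-exp} to a finite combinatorial rearrangement; there is no analysis, only index bookkeeping.

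Carrying this out, the product of the three Taylor series equals
\[
\sum_{n_0,n_1,n_2\geq 0}\frac{\vartheta_{1,0}^{n_0}}{n_0!}\frac{\vartheta_{1,1}^{n_1}}{n_1!}\frac{\vartheta_{1,2}^{n_2}}{n_2!}\,\epsilon^{\,n_0+2n_1+3n_2},
\]
so the coefficient of $\epsilon^l$ (for $l\geq 1$) is the sum over non-negative triples satisfying $n_0+2n_1+3n_2=l$. The next step is to re-index this sum with the pair $(m,k_0)$ appearing in the lemma by setting $k_0:=n_0$ and $m:=n_1+2n_2$. Inverting this substitution one finds $n_1=2l-2k_0-3m$ and $n_2=k_0-l+2m$, and a direct check gives $n_0+2n_1+3n_2=l$ as an identity in $(k_0,m)$. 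The map $(n_0,n_1,n_2)\mapsto(k_0,m)$ is therefore a bijection between admissible triples and pairs $(m,k_0)$ of non-negative integers subject to $n_1\geq 0$ (equivalently $k_0\leq l-3m/2$) and $n_2\geq 0$ (equivalently $k_0\geq l-2m$). Combining with $k_0\geq 0$ and requiring that the $k_0$-range be non-empty yields exactly the ranges $0\leq m\leq \lfloor 2l/3\rfloor$ and $\max(l-2m,0)\leq k_0\leq \lfloor l-3m/2\rfloor$ stated in \eqref{eq:f-exp}.

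The only mildly delicate point is the index-range bookkeeping, namely verifying that the floors $\lfloor 2l/3\rfloor$ and $\lfloor l-3m/2\rfloor$ are the correct tight integer bounds forced by $n_1,n_2\in\mathbb{Z}_{\geq 0}$, and that the lower bound $\max(l-2m,0)$ correctly encodes $n_2\geq 0$ together with $n_0\geq 0$. Once this is in place, substituting the parametrization $(n_0,n_1,n_2)=(k_0,\,2l-2k_0-3m,\,k_0-l+2m)$ back into the product of Taylor series produces the closed form \eqref{eq:f-exp}; together with the constant term $1$ from $(n_0,n_1,n_2)=(0,0,0)$ at $l=0$, this establishes \eqref{eq:expand-exp}.
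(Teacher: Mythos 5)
Your proposal is correct and takes essentially the same route as the paper: both arguments reduce the claim to the triple sum $\sum_{n_0,n_1,n_2\geq 0}\frac{\vartheta_{1,0}^{n_0}\vartheta_{1,1}^{n_1}\vartheta_{1,2}^{n_2}}{n_0!\,n_1!\,n_2!}\,\epsilon^{n_0+2n_1+3n_2}$ and then perform the identical change of variables $k_0=n_0$, $m=n_1+2n_2$ with the same index-range bookkeeping. The only cosmetic difference is that you obtain this sum by factoring the exponential into three commuting factors, whereas the paper applies the multinomial theorem to $\left(\vartheta_{1,0}+\epsilon\vartheta_{1,1}+\epsilon^2\vartheta_{1,2}\right)^i$ inside the series for $\exp$.
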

\begin{proof} By the formula of combination, we have
	\begin{equation}
	\begin{split}
	 &\sum_{i=1}^{\infty}\frac{\epsilon^i}{i!}\left(\vartheta_{1,0}+\epsilon\vartheta_{1,1}+\epsilon^2\vartheta_{1,2} \right)^i \\
	=&\sum_{i=1}^{\infty}\sum_{m=0}^{i}\sum_{k_0+k_1+k_2=i}^{ k_1+2k_2=m,\,k_s\geq 0,\,s=1,2,3}\frac{\vartheta_{1,0}^{k_0}}{k_0!}\frac{\vartheta_{1,1}^{k_1}}{k_1!}\frac{\vartheta_{1,2}^{k_2}}{k_2!}\epsilon^{m+i} \\
	 =&\sum_{i=1}^{\infty}\sum_{m=0}^{i}\sum_{k_0=i-m}^{[i-m/2]}\frac{\vartheta_{1,0}^{k_0}}{k_0!}\frac{\vartheta_{1,1}^{2i-m-2k_0}}{(2i-m-2k_0)!}\frac{\vartheta_{1,2}^{k_0+m-i}}{(k_0-i+m)!}\epsilon^{m+i}  \\
	 =&\sum_{l=1}^{\infty}\sum_{m=0}^{[2l/3]}\sum_{k_0=\max(l-2m,0)}^{[l-3m/2]}\frac{\vartheta_{1,0}^{k_0}}{k_0!}\frac{\vartheta_{1,1}^{2l-2k_0-3m}}{(2l-2k_0-3m)!}\frac{\vartheta_{1,2}^{k_0-l+2m}}{(k_0-l+2m)!}\epsilon^{l}
	\end{split}
	\end{equation}
which deduces the equation \eqref{eq:expand-exp}.
\end{proof}
We apply the zero seed solutions to the B\"acklund transformation \eqref{eq:back}. With the aid of Lemma \ref{lemma1}, we obtain the following determinant representation of high-order soliton under single spectral parameter $\lambda_1$:
\begin{equation}\label{eq:high order-soliton}
q^{[N]}=2\frac{\det{\mathbf{G}}}{\det{\mathbf{M}}}
\end{equation}
where
\begin{equation}\label{F1function}
\begin{split}
\mathbf{M}:&=\mathbf{F}_1^{\dag}\mathbf{B}_1\mathbf{F}_1+\mathbf{B}_1, \qquad \mathbf{G}:=\begin{bmatrix}\mathbf{M}&\mathbf{F}_{1,1}^{\dag}\\
{\bf e}_1&0
\end{bmatrix},
\\
{\bf e}_1:&=[1,0,0,\cdots, 0],\qquad \mathbf{B}_1:=\left(\frac{1}{\lambda_1-\lambda_1^*}\binom{i+j-2}{i-1}\left(\frac{\ii}{\lambda_1^*-\lambda_1}\right)^{i+j-2}\right)_{1\leq i,j\leq N},\\
\mathbf{F}_1:&=\ee^{2\theta_1}\begin{bmatrix}1&f_{1,1}&\cdots&f_{1,N-1}\\
0&1&\cdots&f_{1,N-2}\\
\vdots&\vdots&\vdots&\vdots\\
0&0&\cdots&1
\end{bmatrix}=\ee^{2\theta_1}\left(\mathbb{I}+\sum_{i=1}^{N-1}f_{1,i}\mathbf{E}_N^i\right),\qquad \mathbf{E}_N:=\left(\delta_{i,j-1}\right)_{1\leq i,j \leq N},
\end{split}
\end{equation}
and the subscripts $_{1}$ in $\mathbf{F}_1$ and $\mathbf{B}_1$ stand for the matrices depending on the spectral parameter $\lambda_1$, $\mathbf{F}_{1,1}$ represents the first row of matrix $\mathbf{F}_1$, $f_{1,l}$ is given in equation \eqref{eq:f-exp}. To analyze the asymptotic behavior of the high-order single-pole soliton, we rewrite the relative matrices $\mathbf{B}_1$ of the determinant solutions:
\begin{equation}\label{eq:b1-representation}
\mathbf{B}_1=\frac{\mathbf{D}_1\mathbf{P}_N\mathbf{D}_1}{\lambda_1-\lambda_1^*}=\frac{\mathbf{D}_1\mathbf{S}_N^{\dag}\mathbf{S}_N\mathbf{D}_1}{\lambda_1-\lambda_1^*},
\end{equation}
where $\mathbf{P}_N=\left(\binom{i+j-2}{i-1}\right)_{1\leq i,j\leq N}$ stands for the $N$-th order Pascal matrix which can be performed the $LU$ decomposition $\mathbf{P}_N=\mathbf{S}_N^{\dag}\mathbf{S}_N$, and
$$\mathbf{D}_1={\rm diag}\left(1,\left(-\frac{1}{2\lambda_{1I}}\right) ,\cdots, \left(-\frac{1}{2\lambda_{1I}}\right) ^{N-1}\right), \quad\mathbf{S}_N=\left(\binom{j-1}{i-1}\right)_{1\leq i\leq j\leq N}.$$ Due to the Lemma \ref{lemma1}, the leading polynomial of $f_{1,l\geq 1}$ is
$$f_{1,j}=\frac{\vartheta_{1,0}^{j}}{j!}+\text{lower order terms (l.o.t) of $x$ and $t$}.$$

\begin{lemma}\label{lemmaGamma}
\begin{equation}\label{eq:determinant}
\Gamma_j:=\left|\begin{matrix}\frac{1}{j!}&
\frac{1}{(j+1)!}&\cdots&\frac{1}{(N-1)!}\\
\frac{1}{(j-1)!}&\frac{1}{j!}&\cdots&\frac{1}{(N-2)!}\\
\vdots&\vdots&\vdots&\vdots\\
\frac{1}{(2j-N+1)!}&\cdots&\cdots&\frac{1}{j!}
\end{matrix}\right|
=
\left(\frac{0!1!2!\cdots (N-1-j)!}{j!(j+1)!\cdots(N-1)!}\right)
\end{equation}
\end{lemma}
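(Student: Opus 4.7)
The plan is to reduce the evaluation of $\Gamma_j$ to the well-known fact that a certain binomial-coefficient matrix has determinant $1$. Set $n:=N-j$, so that the matrix whose determinant is $\Gamma_j$ has size $n\times n$ and, if we index rows and columns by $r,c\in\{1,\dots,n\}$, its $(r,c)$-entry is $1/(j+c-r)!$ (with the convention $1/m!=0$ for $m<0$, so that the possibly problematic lower-left entries vanish exactly as required).

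First, I would factor each entry via the identity
\begin{equation*}
\frac{1}{(j+c-r)!}=\frac{(r-1)!}{(j+c-1)!}\binom{j+c-1}{r-1},
\end{equation*}
which is valid under the same convention (both sides vanish when $r-1>j+c-1$). Pulling the factor $(r-1)!$ out of row $r$ and the factor $1/(j+c-1)!$ out of column $c$, the determinant becomes
\begin{equation*}
\Gamma_j=\Bigl(\prod_{r=1}^{n}(r-1)!\Bigr)\Bigl(\prod_{c=1}^{n}\frac{1}{(j+c-1)!}\Bigr)\,\det\!\Bigl(\binom{j+c-1}{r-1}\Bigr)_{1\le r,c\le n}.
\end{equation*}
The two prefactors already produce the desired ratio $0!1!\cdots(n-1)!/\bigl(j!(j+1)!\cdots(N-1)!\bigr)$, so the entire claim reduces to showing
\begin{equation*}
B_n(j):=\det\!\Bigl(\binom{j+c-1}{r-1}\Bigr)_{1\le r,c\le n}=1.
\end{equation*}

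Next, I would prove $B_n(j)=1$ by induction on $n$. The base case $n=1$ gives $\binom{j}{0}=1$. For the inductive step, perform the column operations $C_c\leftarrow C_c-C_{c-1}$ for $c=n,n-1,\dots,2$ (in that order, so each uses the unmodified $C_{c-1}$). Pascal's rule $\binom{m}{k}-\binom{m-1}{k}=\binom{m-1}{k-1}$ then gives, for $c\ge 2$, the new column entries
\begin{equation*}
\binom{j+c-1}{r-1}-\binom{j+c-2}{r-1}=\binom{j+c-2}{r-2},
\end{equation*}
which is $0$ for $r=1$ (both columns had a $1$ there). Thus the first row of the transformed matrix is $(1,0,\dots,0)$, and Laplace expansion along it leaves the $(n-1)\times(n-1)$ determinant $\det\bigl(\binom{j+c-1}{r-1}\bigr)_{1\le r,c\le n-1}=B_{n-1}(j)$, which equals $1$ by the inductive hypothesis. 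Combining this with the factorization above yields the claimed formula for $\Gamma_j$.

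The only real obstacle is bookkeeping: one must check that the boundary entries where $j+c-r$ is negative are handled consistently on both sides of the binomial factorization, and that the column-operation reduction is performed in the correct order so that Pascal's rule applies at each step. Both are routine once the conventions $1/m!=0$ and $\binom{m}{k}=0$ for $k>m$ (with $m\ge0$) are fixed.
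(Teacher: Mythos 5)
Your proof is correct, but it takes a genuinely different route from the paper's. The paper evaluates $\Gamma_j$ by a direct row-reduction recursion: it subtracts a suitable multiple of each row from the one above so that the last column collapses to a single nonzero entry $\frac{1}{j!}$, extracts the common factor $\frac{(N-1-j)!}{(N-1)!}$, and is left with the same determinant with $N$ replaced by $N-1$; iterating yields the product formula. You instead factor each entry as $\frac{1}{(j+c-r)!}=\frac{(r-1)!}{(j+c-1)!}\binom{j+c-1}{r-1}$, pull the diagonal factors out of the rows and columns (these already produce the claimed ratio of factorials), and reduce everything to the unimodularity $\det\bigl(\binom{j+c-1}{r-1}\bigr)=1$, proved by Pascal's rule and induction. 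Your decomposition is slightly more structural --- it isolates \emph{why} the answer is a ratio of superfactorials (the nontrivial part of the matrix is a unimodular Pascal-type matrix), and the unimodularity statement is reusable elsewhere --- while the paper's recursion is more self-contained and mirrors the row-operation style used in its other determinant computations (e.g.\ Lemmas \ref{lemma-denom} and \ref{lemma-num}). Your handling of the vanishing lower-left entries via the convention $\frac{1}{m!}=0$ for $m<0$, and the right-to-left order of the column operations so that Pascal's rule applies to unmodified columns, are both correct; no gaps.
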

\begin{proof}This determinant can be calculated by subtracting the $i-1$-th row from the $i$-th row subsequently, that is
\begin{equation}\label{eq:determinant}
\begin{split}
\left|\begin{matrix}\frac{1}{j!}&
\frac{1}{(j+1)!}&\cdots&\frac{1}{(N-1)!}\\
\frac{1}{(j-1)!}&\frac{1}{j!}&\cdots&\frac{1}{(N-2)!}\\
\vdots&\vdots&\vdots&\vdots\\
\frac{1}{(2j-N+1)!}&\cdots&\cdots&\frac{1}{j!}
\end{matrix}\right|
=&\left|\begin{matrix}\frac{1}{j!}\frac{N-1-j}{(N-1)}&\frac{1}{(j+1)!}\frac{N-1-(j+1)}{(N-1)}&\cdots&\frac{1}{(N-2)!}&0\\
\frac{1}{(j-1)!}\frac{N-1-j}{(N-2)}&\frac{1}{j!}\frac{N-2-j}{(N-2)}&\cdots&\frac{1}{(N-3)!}&0\\
\vdots&\vdots&\vdots&\vdots&\vdots\\
\frac{1}{(2j-N+2)!}\frac{N-1-j}{j+1}&\frac{1}{(2j-N+3)!}\frac{N-2-j}{j+1}&\cdots&\frac{1}{(j+1)!}&0\\
*&*&*&*&\frac{1}{j!}
\end{matrix}\right|\\
=&\frac{(N-1-j)!}{(N-1)!}\left|\begin{matrix}\frac{1}{j!}&\frac{1}{(j+1)!}&\cdots&\frac{1}{(N-2)!}\\
\frac{1}{(j-1)!}&\frac{1}{j!}&\cdots&\frac{1}{(N-3)!}\\
\vdots&\vdots&\vdots&\vdots\\
\frac{1}{(2j-N+2)!}&\frac{1}{(2j-N+3)!}&\cdots&\frac{1}{j!}
\end{matrix}\right|\\
=&\frac{(N-1-j)!(N-2-j)!}{(N-1)!(N-2)!}\left|\begin{matrix}\frac{1}{j!}&\frac{1}{(j+1)!}&\cdots&\frac{1}{(N-3)!}\\
\frac{1}{(j-1)!}&\frac{1}{j!}&\cdots&\frac{1}{(N-4)!}\\
\vdots&\vdots&\vdots&\vdots\\
\frac{1}{(2j-N+3)!}&\frac{1}{(2j-N+4)!}&\cdots&\frac{1}{j!}
\end{matrix}\right|\\
=&
\left(\frac{0!1!2!\cdots (N-1-j)!}{j!(j+1)!\cdots(N-1)!}\right).
\end{split}
\end{equation}
\end{proof}
Through the expression \eqref{eq:high order-soliton}, we find the leading order term of $\det(\mathbf{M})$ can be expanded as the linear combinations of $\exp(2j(\theta_1+\theta_1^*))$ and $\det(\mathbf{G})$ can be expanded as the linear combinations of $\exp(2j(\theta_1+\theta_1^*)+2\theta_1^*)$. The coefficients of $\exp(2j(\theta_1+\theta_1^*))$ and $\exp(2j(\theta_1+\theta_1^*)+2\theta_1^*)$ are polynomial function with respect to $x$ and $t$. In order to study the asymptotic behavior, we need to give the leading order terms of $\det(\mathbf{M})$ and $\det(\mathbf{G})$.
\begin{lemma}\label{lemma-denom}
The denominator $\det(\mathbf{F}_1^{\dag}\mathbf{B}_1\mathbf{F}_1+\mathbf{B}_1)$ and the numerator $\det{\mathbf{G}}$ of $q^{[N]}$ can be represented as
\begin{equation}\label{eq:det-for-denom}
\begin{split}
\det(\mathbf{F}_1^{\dag}\mathbf{B}_1\mathbf{F}_1+\mathbf{B}_1)&=\sum_{j=0}^{N-1}\ee^{2(N-j)(\theta_1+\theta_1^*)}\mathscr{A}_{j}(x, t)
+\mathscr{A}_0(x, t)+{\rm l.o.t}\\
\det{\mathbf{G}}&=\sum_{j=0}^{N-1}\ee^{2\theta_1^*}\ee^{2(N-j-1)(\theta_1+\theta_1^*)}\mathscr{B}_j(x, t)
+{\rm l.o.t}
\end{split}
\end{equation}
where
\begin{equation}
\begin{split}
\mathscr{A}_j(x, t)&=\left(-\ii\right)^N\left(2\lambda_{1I}\right)^{-(N-j)^2-j^2}\Gamma_j^2|\vartheta_{1, 0}|^{2j(N-j)},\\
\mathscr{B}_j(x, t)&=(-1)^{N-j}(-\ii)^{N-1}(\vartheta_{1,0})^{(N-j-1)(j+1)}(\vartheta^*_{1,0})^{j(N-j)}\left(2\lambda_{1I}\right)^{-(N-j-1)^2-j^2-N+1}
\Gamma_{j}\Gamma_{j+1}.
\end{split}
\end{equation}
\end{lemma}

Based on the leading order term of the numerator and the denominator in lemma \ref{lemma-denom}, we begin to analyze the long-time asymptotics of the soliton.
\begin{lemma}\label{lemmamatch}
The high-order soliton $q^{[N]}$ has a non-zero limit along the following characteristic curves:
$$x=s+v_1t+\frac{N+1-2\kappa}{2\lambda_{1I}}\log(|t|),\qquad (\kappa=1,2,\cdots, N),$$
as $t\to\pm\infty$. Otherwise, it has the vanishing limitation.
\end{lemma}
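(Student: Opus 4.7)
The plan is to substitute the candidate characteristic $x=s+v_1t+\frac{N+1-2\kappa}{2\lambda_{1I}}\log|t|$ directly into the determinantal expansions of Lemmas~\ref{lemma-denom} and~\ref{lemma-num}, determine the dominant $t$-power of each summand, and match them. The core inputs are: (i) the identity $\theta_1+\theta_1^{*}=-2\lambda_{1I}(x-v_1t)+2\,\mathrm{Re}(a_1^{[0]})$, which on the curve becomes $-(N+1-2\kappa)\log|t|-2\lambda_{1I}s+2\,\mathrm{Re}(a_1^{[0]})$, so that $\ee^{c(\theta_1+\theta_1^*)}=|t|^{-c(N+1-2\kappa)}$ times an $s$- and $c$-dependent constant; and (ii) the expansion $\vartheta_{1,0}=-2\mu\,t+O(\log|t|)$ with $\mu:=v_1+4\lambda_1(\gamma+3\delta\lambda_1)=-8\delta\lambda_{1I}^{2}+4\ii\lambda_{1I}(\gamma+6\delta\lambda_{1R})\neq 0$, giving $|\vartheta_{1,0}|^{2m}\sim(2|\mu|)^{2m}|t|^{2m}$.

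Applying Lemma~\ref{lemma-denom}, the $j$-th term of $\det\mathbf{M}$ scales on the curve as $|t|^{2j(N-j)-2(N-j)(N+1-2\kappa)}$; setting $a=N-j$ reduces the exponent to the concave quadratic $-2a^{2}+2(2\kappa-1)a$, whose continuous maximum sits at the half-integer $a=\kappa-1/2$. The integer maximum $2\kappa(\kappa-1)$ is therefore attained at \emph{both} neighbors $a=\kappa-1$ and $a=\kappa$ (i.e.\ at $j=N-\kappa+1$ and $j=N-\kappa$), and strictly exceeds the contribution of every other $j$. For the numerator, using $|\ee^{2\theta_1^{*}}|=\ee^{\theta_1+\theta_1^{*}}$, Lemma~\ref{lemma-num} shows the $j$-th summand modulus scales as $|t|^{(N-j-1)(j+1)+j(N-j)-(2N-2j-1)(N+1-2\kappa)}$; the same substitution simplifies this to $-2(a-\kappa)^{2}+2\kappa(\kappa-1)$, uniquely maximized over integers at $a=\kappa$ (i.e.\ $j=N-\kappa$). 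The decisive observation is that numerator and denominator grow at exactly the same rate $|t|^{2\kappa(\kappa-1)}$, which is precisely the matching condition yielding a non-zero limit.

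To extract the limiting profile, I would divide out the common factor $\ee^{(2\kappa-1)(\theta_1+\theta_1^{*})}$ from the three leading contributions, reducing the ratio to
\begin{equation*}
q^{[N]}\sim\frac{2\,\mathscr{B}_{N-\kappa}\,\ee^{\theta_1^{*}-\theta_1}}{\mathscr{A}_{N-\kappa}\,\ee^{\theta_1+\theta_1^{*}}+\mathscr{A}_{N-\kappa+1}\,\ee^{-(\theta_1+\theta_1^{*})}}.
\end{equation*}
Substituting $\vartheta_{1,0}\sim-2\mu t$ and $\vartheta_{1,0}^{*}\sim-2\mu^{*}t$, the factor $(\vartheta_{1,0})^{(\kappa-1)(N-\kappa+1)}(\vartheta_{1,0}^{*})^{\kappa(N-\kappa)}$ in $\mathscr{B}_{N-\kappa}$ separates into $|\vartheta_{1,0}|^{\text{tot}}$ times a purely oscillating phase fixed by $\mu/|\mu|$; the closed-form ratio $\Gamma_{\kappa-1}/\Gamma_{\kappa}$ supplied by Lemma~\ref{lemmaGamma} then recombines the two denominator terms into a single $\cosh$, turning the whole quotient into a $\sech$-times-phase of the single-soliton shape \eqref{eq:one-soliton}, with a position shift governed by $s$, $\mathrm{Re}(a_1^{[0]})$, and $\log(\Gamma_{\kappa-1}/\Gamma_\kappa)$, together with a residual $t$-independent phase from $\mu$.

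To conclude the "otherwise" part, parametrize $x-v_1t=\tau\log|t|+o(\log|t|)$ and set $\beta:=2\lambda_{1I}\tau$. The continuous maximizers of $p_M(a)=-2a^{2}+2a(N-\beta)$ and $h(a)=-2a^{2}+2a(N+1-\beta)-(N+1-\beta)$ differ by exactly $1/2$, and their continuous values satisfy $\max h-\max p_M=-1/2$. The integer optima can therefore balance only when the $h$-continuous max is an integer \emph{and} the $p_M$-continuous max is a half-integer, which forces $\beta=N+1-2\kappa$ for integer $\kappa\in\{1,\ldots,N\}$; every other $\beta$ produces a strict deficit of at least $1/2$ in the $t$-exponent of the ratio. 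The complementary regime $|x-v_1t|\gg\log|t|$ is handled directly: either the numerator decays exponentially, or the $j=0$ (resp. $j=N$) terms dominate numerator and denominator with a residual factor $\ee^{\pm(\theta_1+\theta_1^{*})}$, forcing $q^{[N]}\to 0$. The principal technical obstacle is the coefficient recombination in the third paragraph: although the $t$-power balance is clean, distilling two denominator terms whose $|\vartheta_{1,0}|$-degrees differ by $2(N-2\kappa+1)$ together with the single $\mathscr{B}_{N-\kappa}$ into the exact shape of \eqref{eq:one-soliton} hinges on the $\Gamma_j$ identity of Lemma~\ref{lemmaGamma} and a careful tracking of the phase from $\mu$.
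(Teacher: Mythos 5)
Your proposal is correct and follows essentially the same route as the paper: both arguments are a direct power-counting on the leading-order expansions of Lemmas~\ref{lemma-denom} and~\ref{lemma-num}, identifying the two adjacent dominant denominator terms ($a=\kappa-1,\kappa$) and the single dominant numerator term ($a=\kappa$), all of order $t^{2\kappa(\kappa-1)}$, with the ``otherwise'' case settled by the same $1/2$-offset between the continuous maxima of the two concave quadratics. Your converse argument (requiring $d_p^2-d_h^2\ge 1/4$, hence $\beta=N+1-2\kappa$) is a slightly sharper packaging of the paper's observation that $\frac{j_1+j_2-1}{2}$ must be an integer and $|j_1-j_2|=1$, but the mathematical content is identical.
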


From Lemma \ref{lemmamatch}, we get a conclusion that if the high-order soliton $q^{[N]}$ has a non-zero limit, then the high-order soliton must move along a specific characteristic curve, and the matching terms must be two adjacent terms. Now we give an example to verify this fact.
\begin{example}\label{eg:q2}
When $N=2, \lambda_1=\ii, a_1^{[0]}=a_1^{[1]}=0$, the second order soliton is
\begin{equation}\label{eq:2ordersoliton}
q^{[2]}=\frac{\left(\mathscr{B}_0(x, t)+\frac{1}{2}\ii\right)\ee^{-6x+24\delta t+4\ii\gamma t+\frac{\ii}{2}\pi}+\left(\mathscr{B}_1(x, t)+\frac{1}{2}\ii\right)\ee^{-2x+8\delta t+4\ii\gamma t+\frac{\ii}{2}\pi}}{\mathscr{A}_{0}(x, t)\ee^{-8x+32\delta t}+\left(\mathscr{A}_{1}(x, t)-\frac{1}{8}\right)\ee^{-4x+16\delta t}+\mathscr{A}_{0}(x, t)},
\end{equation}
where
\begin{equation*}
\begin{split}
\mathscr{B}_0(x, t)&=\ii x-4\gamma t-12\ii\delta t, \quad \mathscr{B}_1(x, t)=-\ii x-4\gamma t+12\ii\delta t,\\
\mathscr{A}_{0}(x, t)&=-\frac{1}{16}, \quad \mathscr{A}_{1}(x, t)=-\left(x^2-24\delta tx+144\delta^2t^2+16\gamma^2t^2\right).
\end{split}
\end{equation*}
As given in the lemma \ref{lemmamatch}, taking $N=2, \kappa=1$ and $N=2, \kappa=2$, then the characteristic curves are
\begin{equation}\label{eq:curve1}
\begin{split}
x&=s+4\delta t-\frac{1}{2}\log(|t|),\\
x&=s+4\delta t+\frac{1}{2}\log(|t|).
\end{split}
\end{equation}
Along the first curve, $q^{[2]}$ becomes
\begin{equation}
\begin{split}
q^{[2]}_{\pm}&=\frac{\left[\pm64\left(\ii\gamma-2\delta\right)t^4+O(t^3\log(t))\right]\ee^{-6s+4\ii\gamma t}\pm\left[64\left(\ii\gamma+2\delta\right)t^2+O(t\log(t))\right]\ee^{-2s+4\ii\gamma t}}{t^4\ee^{-8s}+\left[256\left(\gamma^2+4\delta^2\right)t^4+O(t^3\log(t))\right]\ee^{-4s}+1}\\
&=2\ee^{4\ii\gamma t\pm\ii\arctan\left(\frac{-\gamma}{2\delta}\right)+\ii\pi}\sech\left(2\left(x-4\delta t\pm\frac{1}{2}\log(|t|)\right)\pm4\log(2)\pm\frac{1}{2}\log(4\delta^2+\gamma^2)\right)+O(\log(t)/t).
\end{split}
\end{equation}
Along the second curve, $q^{[2]}$ becomes
\begin{equation}
\begin{split}
q^{[2]}_{\pm}&=\frac{\left[\pm64\left(\ii\gamma-2\delta\right)t^{-2}+O(t^{-3}\log(t))\right]\ee^{-6s+4\ii\gamma t}\pm\left[64\left(\ii\gamma+2\delta\right)+O(t^{-1}\log(t))\right]\ee^{-2s+4\ii\gamma t}}{t^{-4}\ee^{-8s}+\left[256\left(\gamma^2+4\delta^2\right)+O(t^{-1}\log(t))\right]\ee^{-4s}+1}\\
&=2\ee^{4\ii\gamma t\pm\ii\arctan\left(\frac{\gamma}{2\delta}\right)}\sech\left(2\left(x-4\delta t\mp\frac{1}{2}\log(|t|)\right)\mp4\log(2)\mp\frac{1}{2}\log(4\delta^2+\gamma^2)\right)+O(\log(t)/t).
\end{split}
\end{equation}
Therefore, if the high-order soliton moves along these two characteristic curves, the long-time asymptotics of $q^{[2]}$ is
\begin{equation}\label{q2:asym}
\begin{split}
q^{[2]}_{\pm}&=2\ee^{4\ii\gamma t\pm\ii\arctan\left(\frac{-\gamma}{2\delta}\right)+\ii\pi}\sech\left(2\left(x-4\delta t\pm\frac{1}{2}\log(|t|)\right)\pm4\log(2)\pm\frac{1}{2}\log(4\delta^2+\gamma^2)\right)\\
&+2\ee^{4\ii\gamma t+\ii\arctan\left(\frac{\gamma}{2\delta}\right)}\sech\left(2\left(x-4\delta t\mp\frac{1}{2}\log(|t|)\right)\mp4\log(2)\mp\frac{1}{2}\log(4\delta^2+\gamma^2)\right)+O(\log(t)/t).
\end{split}
\end{equation}
\end{example}
\ys{In \cite{Cen-PhysD-2019}, the authors utilized two different methods to get the degenerate multi-soliton and analyzed their asymptotics. By choosing a special parameter $\lambda_1$ in present work and the corresponding parameter $\mu$ in \cite{Cen-PhysD-2019}, such as, we set $\lambda_1=\frac{\ii}{2}\mu, x\to x-\frac{\log(4\lambda_{1I})}{2\lambda_{1I}},$ then we have $ q^{[1]}\ee^{-\ii \frac{\lambda_{1R}\log{(4\lambda_{1I})}}{\lambda_{1I}}}=q_{1}^{\mu}$, where $q^{[1]}$ is taken by $N=1$ in Eq.\eqref{eq:high order-soliton} and $q_1^{\mu}$ is Eq.(3.7) in \cite{Cen-PhysD-2019}. Under this parameter setting, the asymptotics of the second order solitons in Example \ref{eg:q2} is consistent with the Eq.(4.8) in \cite{Cen-PhysD-2019}. In \cite{Xu-JPSJ-2020}, the authors gave the second order soliton to this Hirota equation and analyzed its asymptotics when $t\to\infty$. Our result in Example \ref{eg:q2} is also consistent with the second order soliton in Eq. (19) by choosing $\chi=1-\frac{1}{12\epsilon}\ii, \gamma_1=1, s_1=0$ in \cite{Xu-JPSJ-2020} and $\gamma=\frac{1}{2}, \delta=\varepsilon$ in Example \ref{eg:q2}. Correspondingly, they have the same asymptotic expression when $t\to\pm\infty$, which can be seen from Eq. \eqref{q2:asym} in our work and Eq.(26) in \cite{Xu-JPSJ-2020}. Moreover, the asymptotics for the third order soliton is also compatible.}
With the Lemma \ref{lemmamatch} and the Example \ref{eg:q2}, we give a theorem about the asymptotic expression of the $N$-th order soliton with single spectral parameter $\lambda_1$.
\begin{theorem}\label{theorem1}
%If $(x, t)$ moves along the following characteristic curves,
%$$x=s+v_1t+\frac{N+1-2k}{2\lambda_{1I}}\log(\pm t),\qquad (k=1,2,\cdots, N),$$
As $t\to\pm\infty$, the long-time asymptotics of the high-order soliton $q^{[N]}$ can be represented as
\begin{equation}
\begin{split}
q^{[N]}_{\pm}&{=}\sum_{\kappa{=}1}^{\left[\frac{N+1}{2}\right]}2\lambda_{1I} \sech\left(2\lambda_{1I}s_{1, l,\pm}{\pm}\Delta_{1,l}^{[\kappa]}-2a_1^{[0]}\right)\ee^{-2\ii{\rm Im}(\theta_1)\pm\ii\frac{N+1-2\kappa}{2}\left(\arg\left(\frac{\mathscr{C}_1}{\mathscr{C}_1^*}\right)\right)+\ii\pi\left(\frac{1}{2}+\kappa \right)}
\\&{+}\sum_{\kappa=1}^{\left[\frac{N}{2}\right]}2\lambda_{1I}
 \sech\left(2\lambda_{1I}s_{1, r,\pm}\mp\Delta_{1, r}^{[\kappa]}-2a_1^{[0]}\right)\ee^{-2\ii{\rm Im}(\theta_1)
 \mp\ii\frac{N+1-2\kappa}{2}\left(\arg\left(\frac{\mathscr{C}_1}{\mathscr{C}_1^*}\right)\right)+\ii\pi\left(N-\frac{1}{2}+\kappa\right)}+O(\log(t)/t),
\end{split}
\end{equation}
\normalsize
where the subscript $_\pm$ stands for $t\to\pm\infty$ and
\begin{equation*}
\begin{split}
s_{i, l, \pm}&=x\mp\left(\frac{N+1-2\kappa}{2\lambda_{iI}}\right)\log(|t|)-v_{i}t,\\\
s_{i, r, \pm}&=x\pm\left(\frac{N+1-2\kappa}{2\lambda_{iI}}\right)\log(|t|)-v_it,\\
\Delta_{1, l}^{[\kappa]}&=\log\left(\frac{(N-\kappa)!}{(\kappa-1)!}\right)-(N+1-2\kappa)
\left(\log\left(2\lambda_{1I}\right)+\log\left(|\mathscr{C}_1|\right)\right),\\
\Delta_{1, r}^{[\kappa]}&=\log\left(\frac{(N-\kappa)!}{(\kappa-1)!}\right) {-}(N+1-2k)\left(\log\left(2\lambda_{1I}\right){+}
\log\left(|\mathscr{C}_1|\right)\right),\\
\mathscr{C}_i&=16\delta\lambda_{iI}^2+8\ii\gamma\lambda_{iI}+48\ii\delta\lambda_{iR}\lambda_{iI}.
\end{split}
\end{equation*}
and the subscript $_{l,r}$ means the the left characteristic curves or the right one.
\end{theorem}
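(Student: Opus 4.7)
The plan is to evaluate $q^{[N]}=2\det(\mathbf{G})/\det(\mathbf{M})$ on each of the $N$ characteristic curves furnished by Lemma \ref{lemmamatch} and to show that each curve contributes exactly one copy of the single-soliton shape with the parameters advertised. Substituting $x=s+v_{1}t+(N+1-2\kappa)(2\lambda_{1I})^{-1}\log|t|$ into the basic quantities, one computes directly that $\theta_{1}+\theta_{1}^{*}=-2\lambda_{1I}s-(N+1-2\kappa)\log|t|+2a_{1}^{[0]}$ and $\vartheta_{1,0}=t\mathscr{C}_{1}^{*}+O(\log|t|)$. The quadratic analysis already used in the proof of Lemma \ref{lemmamatch} then shows that in Lemma \ref{lemma-denom} the maximal $|t|$-power $2\kappa(\kappa-1)$ is attained precisely at the two adjacent indices $j=N-\kappa$ and $j=N-\kappa+1$, whereas in Lemma \ref{lemma-num} the same maximum is attained at the single index $j=N-\kappa$; every other contribution is smaller by at least a factor of $|t|^{-2}$.

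Next I would combine the two leading denominator terms. After factoring out the common $(-\ii)^{N}\ee^{-(2\kappa-1)(2\lambda_{1I}s-2a_{1}^{[0]})}|t|^{2\kappa(\kappa-1)}$, they take the form $\alpha_{\kappa}\,\ee^{-2\lambda_{1I}s+2a_{1}^{[0]}}+\alpha_{\kappa-1}\,\ee^{2\lambda_{1I}s-2a_{1}^{[0]}}$ with $\alpha_{i}\propto(2\lambda_{1I})^{-i^{2}-(N-i)^{2}}\Gamma_{N-i}^{2}|\mathscr{C}_{1}|^{2i(N-i)}$. This reassembles as $2\sqrt{\alpha_{\kappa}\alpha_{\kappa-1}}\cosh(2\lambda_{1I}s-2a_{1}^{[0]}+\Delta_{1,l}^{[\kappa]})$, and the closed form from Lemma \ref{lemmaGamma} reduces $\Gamma_{N-\kappa+1}/\Gamma_{N-\kappa}$ to $(N-\kappa)!/(\kappa-1)!$, so the shift inside the cosh is precisely $\log[(N-\kappa)!/(\kappa-1)!]-(N+1-2\kappa)(\log(2\lambda_{1I})+\log|\mathscr{C}_{1}|)=\Delta_{1,l}^{[\kappa]}$. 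Dividing the single numerator term by this denominator cancels both the factor $|t|^{2\kappa(\kappa-1)}$ and all $s$-dependent exponentials, so the amplitude check reduces to the algebraic identity $|\beta_{\kappa}|/\sqrt{\alpha_{\kappa}\alpha_{\kappa-1}}=2\lambda_{1I}$, which follows at once from $-[(\kappa-1)^{2}+(N-\kappa)^{2}+N-1]+\tfrac{1}{2}[\kappa^{2}+(\kappa-1)^{2}+(N-\kappa)^{2}+(N-\kappa+1)^{2}]=1$.

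The remaining ingredient is the overall phase. The constant prefactors of $\mathscr{B}_{N-\kappa}$ versus the pair $\mathscr{A}_{\cdot}$ produce $(-1)^{\kappa}(-\ii)^{N-1}/(-\ii)^{N}=\ii(-1)^{\kappa}=\ee^{\ii\pi(\kappa+1/2)}$, and the product $(\vartheta_{1,0})^{(\kappa-1)(N-\kappa+1)}(\vartheta_{1,0}^{*})^{(N-\kappa)\kappa}$ supplies, via $\vartheta_{1,0}\sim t\mathscr{C}_{1}^{*}$ and the identity $\arg(\mathscr{C}_{1}/\mathscr{C}_{1}^{*})=2\arg\mathscr{C}_{1}$, the oscillatory factor $\exp(\ii\tfrac{N+1-2\kappa}{2}\arg(\mathscr{C}_{1}/\mathscr{C}_{1}^{*}))$ for $t>0$ together with an extra sign $(-1)^{N+1}$ coming from $\arg(-1)=\pi$ when $t<0$. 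I would then sort the $N$ curves into the two sums of the theorem: for $t\to+\infty$ the curves with $\mu>0$ are exactly the ``left'' ones indexed by the theorem's $\kappa$, while those with $\mu<0$ become ``right'' ones after the substitution $\kappa\mapsto N+1-\kappa$, which yields $\Delta_{1,l}^{[N+1-\kappa]}=-\Delta_{1,l}^{[\kappa]}$ and converts the $+\Delta$ into the stated $\mp\Delta_{1,r}^{[\kappa]}$; the residual shift $\ee^{2\pi\ii(1-\kappa)}=1$ reconciles the $\ii\pi(1/2+\kappa)$ of the left sum with the $\ii\pi(N-1/2+\kappa)$ of the right sum. For $t\to-\infty$ the roles of left and right swap and the additional sign $(-1)^{N+1}$ is once more absorbed by a $2\pi\ii$-shift. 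The $O(\log(t)/t)$ error comes uniformly from the correction to $\vartheta_{1,0}=t\mathscr{C}_{1}^{*}+O(\log|t|)$ and from the dropped contributions of Lemmas \ref{lemma-denom}--\ref{lemma-num}.

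The step I expect to be the main obstacle is the phase bookkeeping in the last paragraph: reconciling the signs produced by $(-\ii)^{N}$, $(-1)^{N-j}$, the $\vartheta_{1,0}$ powers, and the curve relabelling $\kappa\leftrightarrow N+1-\kappa$ so that every integer multiple of $\pi$ lands in the stated $\ii\pi(1/2+\kappa)$ or $\ii\pi(N-1/2+\kappa)$ modulo $2\pi$ requires careful case analysis on the parity of $N$ and $\kappa$, and must be checked separately for the two limits $t\to\pm\infty$.
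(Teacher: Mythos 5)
Your proposal is correct and follows essentially the same route as the paper: substitute the characteristic curves of Lemma \ref{lemmamatch} into the leading-order expansions of Lemmas \ref{lemma-denom} and \ref{lemma-num}, observe that the denominator maximum $|t|^{2\kappa(\kappa-1)}$ is attained at the adjacent indices $j=N-\kappa,\,N-\kappa+1$ while the numerator maximum sits at $j=N-\kappa$, reassemble the two denominator terms into a $\cosh$ with shift $\Delta_{1,l}^{[\kappa]}$, and track the amplitude and phase factors before relabelling $\kappa\mapsto N+1-\kappa$ for the right-hand curves. Your bookkeeping (the $(2\lambda_{1I})$-exponent identity, $\Gamma_{N-\kappa+1}/\Gamma_{N-\kappa}=(N-\kappa)!/(\kappa-1)!$, the prefactor $\ee^{\ii\pi(\kappa+1/2)}$, and the $2\pi\ii$ reconciliation of the two phase conventions) all check out against the paper's computation, which simply states the resulting ratio without spelling these steps out.
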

\begin{proof}
With Lemma \ref{lemmamatch}, if the high-order soliton moves along a specific curve
\begin{equation}\label{eq:curve3}
x=s_{1, l,+}+\left(\frac{N+1-2\kappa}{2\lambda_{1I}}\right)\log(t)
+v_1t,
\end{equation}
for every $\kappa=1,2,\cdots, \left[\frac{N+1}{2}\right],$ then the high-order soliton $q^{[N]}$ becomes
\small
\begin{equation}
\begin{split}
q^{[N]}&=\frac{(-1)^{\kappa} 2\ii\lambda_{1I}\mathscr{C}_1^{N+1-2\kappa}\ee^{2\lambda_{1I}s_{1,l,+}-2a_{1}^{[0]}-2\ii{\rm Im}(\theta_1)}
t^{2\kappa(\kappa-1)}+O(t^{2\kappa(\kappa-1)-1}\log(t))}{\left[\ee^{4\lambda_{1I}s_{1, l,+}-4a_{1}^{[0]}}
(2\lambda_{1I})^{2\kappa-N-1}\left(\frac{(N-\kappa)!}{(\kappa-1)!}\right)
+|\mathscr{C}_1|^{2N+2-4\kappa}(2\lambda_{1I})^{N+1-2\kappa}
\left(\frac{(\kappa-1)!}{(N-\kappa)!}\right)\right]t^{2\kappa(\kappa-1)}+O(t^{2\kappa(\kappa-1)-1}\log(t))}\\
&=2\lambda_{1I} \sech\left(2\lambda_{1I}s_{1, l,+}{+}\Delta_{1,l}^{[\kappa]}-2a_{1}^{[0]}\right)\ee^{-2\ii{\rm Im}(\theta_1)+\ii\frac{N+1-2\kappa}{2}
\left(\arg\left(\frac{\mathscr{C}_1}{\mathscr{C}_1^*}\right)\right)+\ii\pi\left(\frac{1}{2}+\kappa \right)}+O(\log(t)/t).
\end{split}
\end{equation}
\normalsize
Similarly, along the other characteristic curve
$$x=s_{1, r,+}-\left(\frac{N+1-2\kappa}{2\lambda_{1I}}\right)\log(t)
+v_1t,\qquad \left(\kappa=1,2,\cdots, \left[\frac{N}{2}\right]\right),$$
the asymptotic expression is
\small
\begin{equation}
\begin{split}
q^{[N]}&=\frac{\ii(-1)^{N-1+\kappa}2\lambda_{1I}(\mathscr{C}_1^*)^{N+1-2\kappa}
\ee^{2\lambda_{1I}s_{1, r,+}-2a_{1}^{[0]}-2\ii{\rm Im}(\theta_1)}t^{2(\kappa-N)(\kappa-N-1)}+O(t^{2(\kappa-N)(\kappa-N-1)-1}\log(t))}
{\left[\frac{(\kappa-1)!}{(N-\kappa)!}\ee^{4\lambda_{1I}s_{1, r,+}-4a_{1}^{[0]}}|\mathscr{C}_1|^{2N+2-4\kappa}(2\lambda_{1I})^{N+1-2\kappa}
+\frac{(N-\kappa)!}{(\kappa-1)!}(2\lambda_{1I})^{2\kappa-N-1}\right]t^{2(\kappa-N)(\kappa-N-1)}+O(t^{2(\kappa-N)(\kappa-N-1)-1}\log(t))}\\
&=2\lambda_{1I}
\sech\left(2\lambda_{1I}s_{1, r,+}-\Delta_{1, r}^{[\kappa]}-2a_{1}^{[0]}\right)\ee^{-2\ii{\rm Im}(\theta_1)-\ii\frac{N+1-2\kappa}{2}\left(\arg\left(\frac{\mathscr{C}_1}{\mathscr{C}_1^*}\right)\right)
+\ii\pi\left(N-\frac{1}{2}+\kappa\right)}+O(\log(t)/t).
\end{split}
\end{equation}
\normalsize
If the high-order soliton moves along the other curves $x=s+vt+\beta\log(t)$ with $v\neq v_1$ or $\beta\neq\frac{N+1-2\kappa}{2\lambda_{1I}}$, then its asymptotic solution is
\begin{equation}
q^{[N]}=O(\ee^{-2\lambda_{1I}|v-v_1|t}),
\end{equation}
and
\begin{equation}
q^{[N]}=O(t^{-1}),
\end{equation}
respectively, both of which are vanishing when $t\to\infty$.
Hence, the global long-time asymptotics of $q^{[N]}$ as $t\to+\infty$ is
\begin{equation}
\begin{split}
q^{[N]}&{=}\sum_{\kappa{=}1}^{\left[\frac{N+1}{2}\right]}2\lambda_{1I} \sech\left(2\lambda_{1I}s_{1, l,+}{+}\Delta_{1,l}^{[\kappa]}{-}2a_{0}^{[1]}\right)\ee^{-2\ii{\rm Im}(\theta_1)+\ii\frac{N+1-2\kappa}{2}\left(\arg\left(\frac{\mathscr{C}_1}{\mathscr{C}_1^*}\right)\right)+\ii\pi\left(\frac{1}{2}+\kappa \right)}
\\&{+}\sum_{\kappa=1}^{\left[\frac{N}{2}\right]}2\lambda_{1I}
 \sech\left(2\lambda_{1I}s_{1, r,+}{-}\Delta_{1, r}^{[\kappa]}{-}2a_{0}^{[1]}\right)\ee^{-2\ii{\rm Im}(\theta_1)
 -\ii\frac{N+1-2\kappa}{2}\left(\arg\left(\frac{\mathscr{C}_1}{\mathscr{C}_1^*}\right)\right)+\ii\pi\left(N-\frac{1}{2}+\kappa \right)}+O(\log(t)/t).
\end{split}
\end{equation}
When $t\to-\infty$, its asymptotic expression can be given in a similar manner.
\end{proof}

\ys{\begin{remark}
From the asymptotic expression in above theorem, we can see the position and phase-shifts clearly. When $t\to \pm\infty$, we first give the time-dependent displacement between two adjacent solitons about the left characteristic curves, the right ones can be analyzed similarly. For every $\kappa=1,2 \cdots, \left[\frac{N+1}{2}\right]$, denote
\begin{equation}
\begin{split}
x_{\pm}^{[\kappa]}&=v_1t\mp\left(\frac{1}{2\lambda_{1I}}\log\left(\frac{(N-\kappa)!}{(\kappa-1)!}\right)-\frac{1}{2\lambda_{1I}}(N+1-2\kappa)\log(2\lambda_{1I})\right),\\
\Delta_{t}^{[\kappa]}&=\frac{N+1-2\kappa}{2\lambda_{1I}}\log(\left|\mathscr{C}_1t\right|),
\end{split}
\end{equation}
when $t\to\pm\infty$, we have
\begin{equation}
\lim\limits_{t\to\pm\infty}|q^{[N]}(x_{\pm}^{[\kappa]}\pm\Delta_{t}^{[\kappa]},t)|=2\lambda_{1I}.
\end{equation}
Then for two adjacent two solitons, the time-dependent displacement is
\begin{equation}\label{eq:time-dis}
\Delta^{[\kappa]}=\frac{1}{\lambda_{1I}}\left(\log(|\mathscr{C}_1t|)+\log(2\lambda_{1I})\right)-\frac{1}{2\lambda_{1I}}\log\left(\kappa(N-\kappa)\right),
\end{equation}
which is related to the parameters $\kappa, N, \lambda_{1I}$ and $\mathscr{C}_1$. Clearly, the time-dependent displacement grows logarithmically with $t$. Moreover, for fixed $N, \lambda_{1I}$ and $\mathscr{C}_1$, when $\kappa=\left[\frac{N+1}{2}\right]$, $\Delta^{[\kappa]}$ attains the minimum. That is, the time-displacement between the middle of two solitons is the smallest.  With the above formula \eqref{eq:time-dis}, we can give the time-dependent displacement to arbitrary two asymptotic solitons when $t\to\pm\infty$. Specially, when $\kappa=1, N=2$, Eq.\eqref{eq:time-dis} agrees with the time-dependent displacement $2\delta \ln(2\delta)+2\Delta(t)$ in \cite{Cen-PhysD-2019} by choosing $\lambda_1=\frac{\ii}{2}\mu$. By the way, we can also give the maximum time-dependent displacement as
\begin{equation}
\Delta_{\max}=\frac{N-1}{\lambda_{1I}}\left(\log |\mathscr{C}_1t|+\log\left(2\lambda_{1I}\right)\right)-\frac{\log(N-1)!}{\lambda_{1I}}.
\end{equation}

\end{remark}}
\lm{Through this theorem, we give the asymptotic expression with one spectral parameter $\lambda_1$ completely. More significant, our method can also be applied to analyze the asymptotic behavior for other integrable equation. Especially, to the AKNS hierarchy, we just need a slight change to the velocity $v_i$ in theorem \ref{theorem1} to obtain the similar asymptotic expression under the framework of Darboux transformation. Additionally, this method can be extended to analyze the asymptotics to the derivative NLS equation or other integrable equation. In a word, this method provides a tool for analyzing the asymptotics for high-order soliton.} Next, we will give several examples to verify the above asymptotic expressions by numerical plotting.
\begin{example}
Firstly, we give the even-th order figures in Fig.\ref{fig:even-order} by choosing $\lambda_1=\ii, \delta=1, \gamma=1/2, a_1^{[j]}=0(j=0,1,\cdots, 7)$. With the simple calculation, we know the velocity is $v_1=4.$ If we use a simple transformation $\xi=x-4t$, then the velocity is zero on the $(\xi, t)$-plane.
\begin{figure}[!h]
{\includegraphics[height=0.45\textwidth]{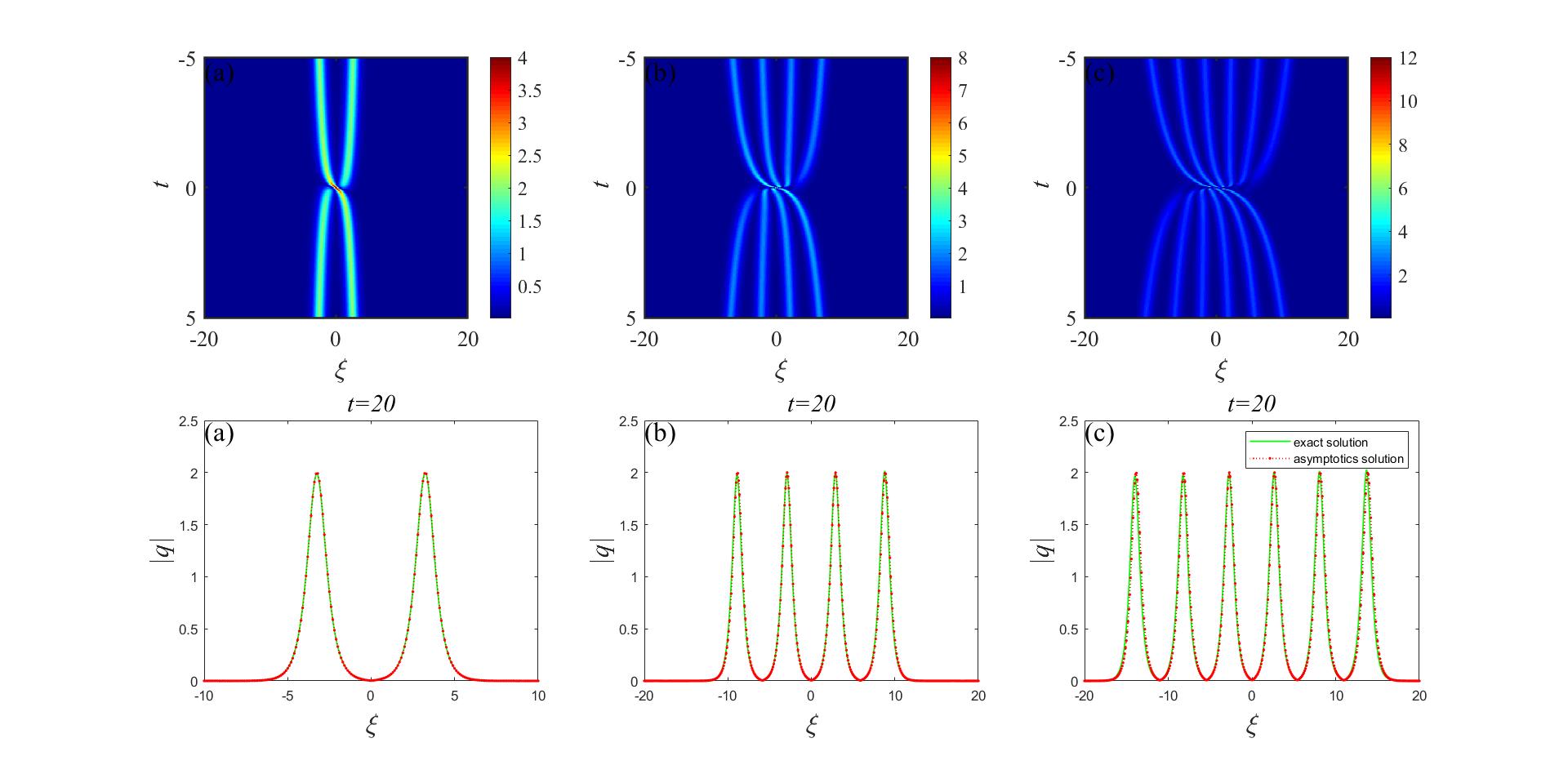}}
\caption{\small The upper is the even-th order soliton, (a) is second-order soliton, (b) is fourth-order soliton, (c) is sixth-order soliton. The below is the comparison between the exact solution and the numerical result.}\label{fig:even-order}
\end{figure}
Secondly, we give the odd-th order figures in Fig.\ref{fig:odd-order} by choosing the same parameters:
\begin{figure}[!h]
{\includegraphics[height=0.45\textwidth]{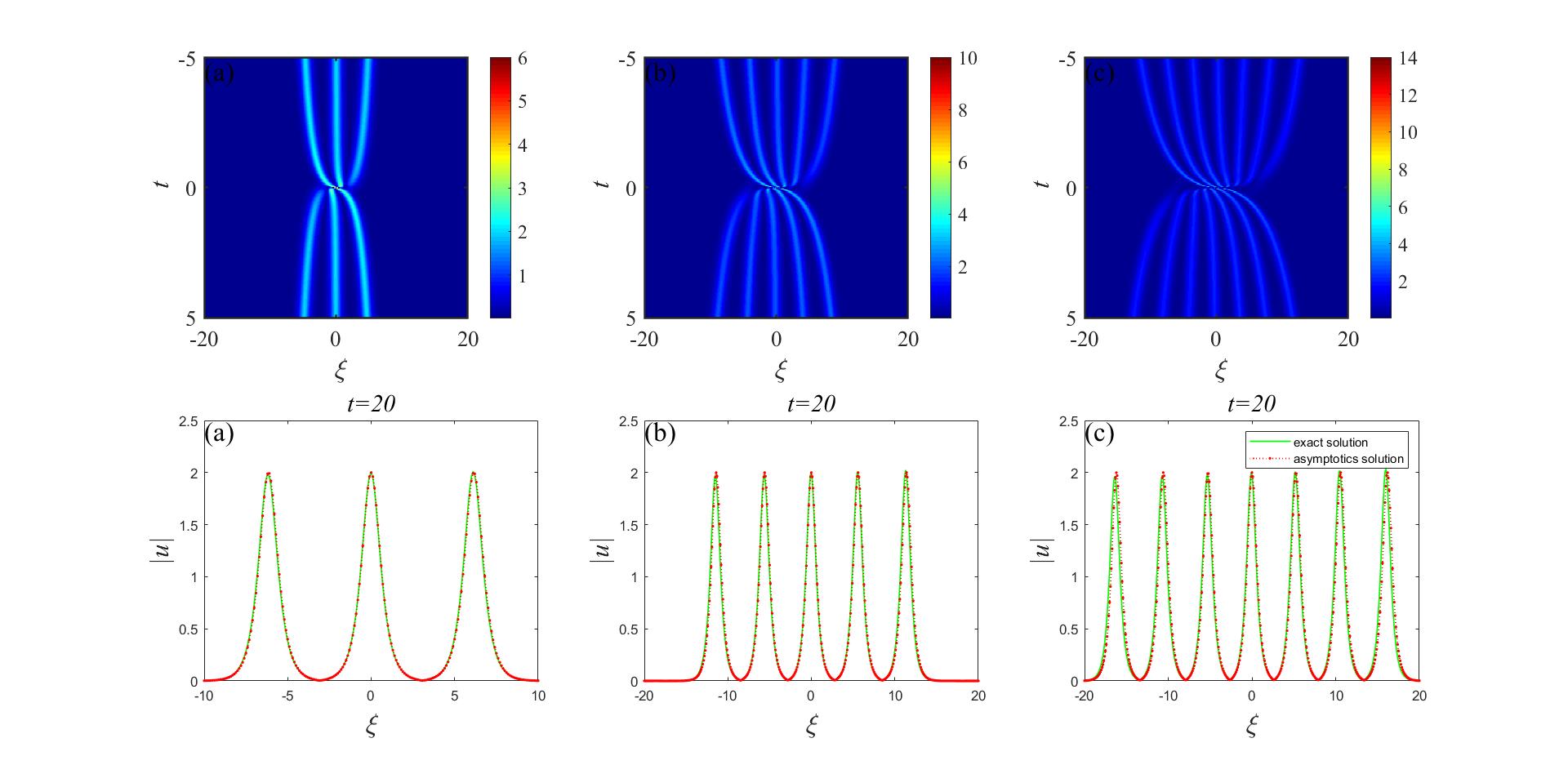}}
\caption{\small  The upper is the odd-th order soliton, (a) is third-order soliton, (b) is fifth-order soliton, (c) is seventh-order soliton. The below is the comparison between the exact solution and the numerical result.}\label{fig:odd-order}
\end{figure}
\end{example}
In the above figures, we choose $a_1^{[j]}=0(j=0,1,\cdots, 7)$ to attain the maximal value at $(x,t)=(0,0)$. Actually, the peak of $N$-order soliton can be calculated with the B\"{a}cklund transformation \eqref{eq:back}.
\begin{prop}
When $x=0, t=0, a_1^{[j]}=0,$ $(j=0,1,\cdots, N-1)$, the high-order solution $q^{[N]}$ has the maximum amplitude $|q^{[N]}(0,0)|=2|\lambda_{1I}|N$.
\end{prop}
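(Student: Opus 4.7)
The plan is to substitute $x=t=0$ and $a_1^{[j]}=0$ for all $j$ directly into the determinant representation \eqref{eq:high order-soliton} and exploit the resulting collapse of the matrices. At this special point, $\theta_1=-\tfrac{\ii}{4}\pi$, so $\ee^{2\theta_1}=-\ii$, and the quantities $\vartheta_{1,0},\vartheta_{1,1},\vartheta_{1,2}$ of Lemma \ref{lemma1} all vanish. Consequently every $f_{1,l}$ with $l\geq 1$ is zero and the full matrix $\mathbf{F}_1$ collapses to the scalar multiple $-\ii\,\mathbb{I}_N$. This immediately gives $\mathbf{M}|_{(0,0)}=\mathbf{F}_1^{\dag}\mathbf{B}_1\mathbf{F}_1+\mathbf{B}_1=2\mathbf{B}_1$, and the bordering column shrinks to $\mathbf{F}_{1,1}^{\dag}|_{(0,0)}=\ii\,\mathbf{e}_1^{\T}$.

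Next I will apply the Schur complement to the bordered determinant $\det\mathbf{G}$, which yields the clean identity
\begin{equation*}
q^{[N]}(0,0)=2\,\frac{\det\mathbf{G}}{\det\mathbf{M}}\bigg|_{(0,0)}=-2\,\mathbf{e}_1\mathbf{M}^{-1}\mathbf{F}_{1,1}^{\dag}\big|_{(0,0)}=-\ii\bigl[\mathbf{B}_1^{-1}\bigr]_{11}.
\end{equation*}
From the factorization $\mathbf{B}_1=(2\ii\lambda_{1I})^{-1}\mathbf{D}_1\mathbf{S}_N^{\dag}\mathbf{S}_N\mathbf{D}_1$ recorded in \eqref{eq:b1-representation}, one has $\mathbf{B}_1^{-1}=2\ii\lambda_{1I}\,\mathbf{D}_1^{-1}\mathbf{S}_N^{-1}\mathbf{S}_N^{-\dag}\mathbf{D}_1^{-1}$. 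Since $\mathbf{D}_1$ is diagonal with $(1,1)$-entry equal to $1$, the calculation reduces to evaluating $[\mathbf{P}_N^{-1}]_{11}$, where $\mathbf{P}_N=\mathbf{S}_N^{\dag}\mathbf{S}_N$ is the symmetric Pascal matrix.

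The key combinatorial input is the identity $[\mathbf{P}_N^{-1}]_{11}=N$. This follows from the well-known inverse of the upper triangular Pascal matrix, $[\mathbf{S}_N^{-1}]_{ij}=(-1)^{i+j}\binom{j-1}{i-1}$ for $i\leq j$, so that $[\mathbf{S}_N^{-1}]_{1k}=(-1)^{k-1}$ and hence
\begin{equation*}
[\mathbf{P}_N^{-1}]_{11}=\sum_{k=1}^{N}\bigl|[\mathbf{S}_N^{-1}]_{1k}\bigr|^{2}=N.
\end{equation*}
Combining these steps gives $[\mathbf{B}_1^{-1}]_{11}=2\ii\lambda_{1I}N$, so that $q^{[N]}(0,0)=-\ii\cdot 2\ii\lambda_{1I}N=2\lambda_{1I}N$ and $|q^{[N]}(0,0)|=2|\lambda_{1I}|N$.

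For the maximality half of the statement, the plan is to invoke the elementary proposition recorded just after the one-fold Darboux formula in Section 1: if a solution attains its modulus maximum at the origin, then a single iteration at $\lambda_1$ with $\mathbf{v}_1=[1,\ii]^{\T}$ preserves the location of the maximum and raises its value by $2\,\mathrm{Im}(\lambda_1)$. Starting from $q\equiv 0$ and iterating, a straightforward induction on $N$ delivers the peak $2N\,\mathrm{Im}(\lambda_1)$ at the origin. I expect the only subtle point to be that the order-$N$ soliton is a confluent limit rather than a literal composition of one-fold transformations; this is precisely where the choice $a_1^{[j]}=0$ enters, as it aligns the higher-order chain vectors with the iterated choice $\mathbf{v}_1=[1,\ii]^{\T}$, and the direct determinant computation above independently confirms the value, ensuring consistency.
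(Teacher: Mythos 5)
Your computation of the value at the origin is correct and follows essentially the same route as the paper: both proofs substitute $x=t=0$, $a_1^{[j]}=0$ into the determinant representation \eqref{eq:high order-soliton}, observe that $\mathbf{F}_1$ collapses to $\ee^{2\theta_1}\mathbb{I}=-\ii\,\mathbb{I}$ so that $\mathbf{M}=2\mathbf{B}_1$, and then exploit the factorization \eqref{eq:b1-representation}. Where the paper simply asserts the value of $\det\mathbf{G}$ ``with the Laplace expansion,'' you obtain it via the Schur complement of the bordered determinant, reducing everything to $[\mathbf{B}_1^{-1}]_{11}=2\ii\lambda_{1I}[\mathbf{P}_N^{-1}]_{11}=2\ii\lambda_{1I}N$ through the explicit inverse of the unitriangular Pascal factor $\mathbf{S}_N$; this is a cleaner and more verifiable presentation of the same computation, and your signs and the final value $q^{[N]}(0,0)=2\lambda_{1I}N$ check out against the paper's $\det(2\mathbf{B}_1)=(-\ii)^N2^{N(1-N)}\lambda_{1I}^{-N^2}$ and $\det\mathbf{G}=(-\ii)^N\lambda_{1I}^{1-N^2}2^{N(1-N)}N$. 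One caveat: like the paper's own proof, your argument rigorously establishes only the \emph{value} at the origin; your closing paragraph on why this value is the global maximum (iterating the one-fold peak-raising proposition and passing to the confluent limit) is a plausible sketch but remains heuristic, exactly at the point you flag -- the paper does not supply this step either, so you are not behind it, but the maximality half should not be regarded as proved.
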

\begin{proof}
When $x=0, t=0,$ $F_{1}=\mathbb{I}$, so the determinant $\det(\mathbf{M})=\det(2\mathbf{B}_1)$. With the decomposition Eq.\eqref{eq:b1-representation}, it is obvious that
\begin{equation}
\det(2\mathbf{B}_1)=\left(-\ii\right)^{N}2^{N(1-N)}\left(\lambda_{1I}\right)^{-N^2}.
\end{equation}
Moreover, with the Laplace expansion, we have
\begin{equation}
\begin{split}
\det{\mathbf{G}}&=(-\ii)^{N}\left(\lambda_{1I}\right)^{1-N^2}2^{N(1-N)}N,
\end{split}
\end{equation}
so
\begin{equation*}
\left|q^{[N]}\right|(0,0)=2|\lambda_{1I}|N.
\end{equation*}
\end{proof}
\subsection{The general asymptotics with $k$ spectral parameters}
After giving the long-time asymptotics to the high-order soliton with single spectral parameter $\lambda_1$, we intend to study the asymptotics to the high-order soliton with multiple spectral parameters. For the simplicity, we first discuss the asymptotic expression with two spectral parameters. To avoid the tedious analysis on the determinants, we would like to utilize the iterated Darboux matrix to analyze the dynamics. Then we give the following lemma
\begin{lemma}\label{lemma2spectral}
Suppose there are two spectral parameters $\lambda_1$ and $\lambda_2$, we can rewrite the Darboux matrix in Eq.\eqref{eq:darboux} to the following form
\begin{equation}\label{eq:darboux12}
\mathbf{T}_{N}(\lambda; x, t)=\mathbf{T}_1^{[n_1]}\mathbf{T}_{1}^{[n_1-1]}\cdots\mathbf{T}_{1}^{[1]}
\mathbf{T}^{[n_2]}_2\mathbf{T}_{2}^{[n_2-1]}\cdots\mathbf{T}_{2}^{[1]},
\end{equation}
or
\begin{equation}\label{eq:darboux21}
\mathbf{T}_{N}(\lambda; x, t)=\widetilde{\mathbf{T}}_2^{[n_2]}\widetilde{\mathbf{T}}_{2}^{[n_2-1]}\cdots\widetilde{\mathbf{T}}_{2}^{[1]}
\widetilde{\mathbf{T}}^{[n_1]}_1\widetilde{\mathbf{T}}_{1}^{[n_1-1]}\cdots\widetilde{\mathbf{T}}_{1}^{[1]},
\end{equation}
where
\begin{equation}\label{eq:high-dt}
\begin{split}
\mathbf{T}_{i}^{[j]}&=\left(\mathbb{I}-\frac{\lambda_i-\lambda_i^*}{\lambda-\lambda_i^*}\mathbf{P}^{[j]}_i\right),\quad\left(i=1,2;\,\, j=1,\cdots, n_i\right),\\
\mathbf{P}_{i}^{[j]}&=\frac{\Phi_i^{[j-1]}\left(\Phi_{i}^{[j-1]}\right)^{\dagger}}{\left(\Phi_{i}^{[j-1]}\right)^{\dagger}\Phi_i^{[j-1]}}, \quad \mathbf{\Phi}_2^{[j]}=\lim\limits_{\epsilon\to 0}\frac{\left(\mathbf{T}_{2}^{[j]}\mathbf{T}_2^{[j-1]}\cdots
\mathbf{T}_{2}^{[0]}\right)\Phi(\lambda)}{\epsilon^j}\Bigg|_{\lambda=\lambda_2+\epsilon},\qquad \mathbf{T}_i^{[0]}=\mathbb{I},\\
\mathbf{\Phi}_1^{[j]}&=\lim\limits_{\epsilon\to 0}\frac{\left(\mathbf{T}_{1}^{[j]}\mathbf{T}_1^{[j-1]}\cdots\mathbf{T}_{1}^{[0]}
\mathbf{T}_{2}^{[n_2]}\mathbf{T}_2^{[n_2-1]}\cdots\mathbf{T}_{2}^{[0]}\right)\Phi(\lambda)}{\epsilon^j}\Bigg|_{\lambda=\lambda_1+\epsilon},\\
\widetilde{\mathbf{T}}_{i}^{[j]}&=\left(\mathbb{I}-\frac{\lambda_i-\lambda_i^*}{\lambda-\lambda_i^*}\widetilde{\mathbf{P}}^{[j]}_i\right),\quad\left(i=1,2;\,\, j=1,\cdots, n_i\right),\\
\widetilde{\mathbf{P}}_{i}^{[j]}&=\frac{\widetilde{\Phi}_i^{[j-1]}\left(\widetilde{\Phi}_{i}^{[j-1]}\right)^{\dagger}}{\left(\widetilde{\Phi}_{i}^{[j-1]}\right)^{\dagger}\widetilde{\Phi}_i^{[j-1]}}, \quad\widetilde{ \mathbf{\Phi}}_1^{[j]}=\lim\limits_{\epsilon\to 0}\frac{\left(\widetilde{\mathbf{T}}_{1}^{[j]}\widetilde{\mathbf{T}}_1^{[j-1]}\cdots\widetilde{\mathbf{T}}_{1}^{[0]}\right)\Phi(\lambda)}{\epsilon^j}\Bigg|_{\lambda=\lambda_1+\epsilon},\qquad \widetilde{\mathbf{T}}_{i}^{[0]}=\mathbb{I},\\
\widetilde{\mathbf{\Phi}}_2^{[j]}&=\lim\limits_{\epsilon\to 0}\frac{\left(\widetilde{\mathbf{T}}_{2}^{[j]}\widetilde{\mathbf{T}}_2^{[j-1]}\cdots\widetilde{\mathbf{T}}_{2}^{[0]}\widetilde{\mathbf{T}}_{1}^{[n_1]}
\widetilde{\mathbf{T}}_2^{[n_1-1]}\cdots\widetilde{\mathbf{T}}_{1}^{[0]}\right)\Phi(\lambda)}{\epsilon^j}\Bigg|_{\lambda=\lambda_2+\epsilon},\\
\end{split}
\end{equation}
where $\Phi(\lambda)$ belongs to the one dimension linear space: $\mathrm{span}\{\text{vector solution for Lax pair}\}$.
If the soliton moves along the trajectory $\xi=x-vt$ with $v<v_2$, the asymptotics of the Darboux matrix Eq.\eqref{eq:darboux12} can be represented as
\begin{equation}\label{eq:dtasym}
\mathbf{T}^{[n_2]}_2\mathbf{T}_{2}^{[n_2-1]}\cdots\mathbf{T}_{2}^{[1]}=\left\{\begin{split} &\begin{bmatrix}
\left(\frac{\lambda-\lambda_2}{\lambda-\lambda_2^*}\right)^{n_2}&0\\0&1
\end{bmatrix}+O\left(t^{n_2-1}\ee^{-2\lambda_{2I}|v-v_2|t}\right),\quad  t\to +\infty,\\
&\begin{bmatrix}
1&0\\0&\left(\frac{\lambda-\lambda_2}{\lambda-\lambda_2^*}\right)^{n_2}
\end{bmatrix}+O\left(|t|^{n_2-1}\ee^{2\lambda_{2I}|v-v_2|t}\right), \quad t\to -\infty.
 \end{split}\right.
\end{equation}
Similarly, if the soliton moves along the trajectory $\xi=x-vt$ with $v>v_1$, then the asymptotics of the Darboux matrix in Eq.\eqref{eq:darboux21} is
\begin{equation}\label{eq:dtasym1}
\widetilde{\mathbf{T}}^{[n_1]}_1\widetilde{\mathbf{T}}_{1}^{[n_1-1]}\cdots\widetilde{\mathbf{T}}_{1}^{[1]}=\left\{\begin{split} &\begin{bmatrix}
1&0\\0&\left(\frac{\lambda-\lambda_1}{\lambda-\lambda_1^*}\right)^{n_1}
\end{bmatrix}+O\left(t^{n_1-1}\ee^{-2\lambda_{1I}|v-v_1|t}\right),\quad  t\to +\infty,\\
&\begin{bmatrix}
\left(\frac{\lambda-\lambda_1}{\lambda-\lambda_1^*}\right)^{n_1}&0\\0&1
\end{bmatrix}+O\left(|t|^{n_1-1}\ee^{2\lambda_{1I}|v-v_1|t}\right), \quad t\to -\infty.
 \end{split}\right.
\end{equation}
\end{lemma}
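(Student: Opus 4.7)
The plan has two stages: establish the factorizations \eqref{eq:darboux12}--\eqref{eq:darboux21} by an inductive construction using the generalized (limit) Darboux transformation, then derive the asymptotic formulas by exploiting the exponential behavior of the seed wave function along the trajectory $x=\xi+vt$.

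For the factorization, first apply the single-spectrum generalized Darboux transformation at $\lambda_2$ of order $n_2$. The product $\mathbf{T}_2^{[n_2]}\cdots\mathbf{T}_2^{[1]}$ built inductively by \eqref{eq:high-dt} is a rational matrix function of $\lambda$ with a pole of order exactly $n_2$ at $\lambda_2^*$, identity at $\lambda=\infty$, and kernel vectors at $\lambda_2$ spanned by the Taylor coefficients of $\Phi_2(\lambda)$. By uniqueness of the $n_2$-fold Darboux matrix carrying this data (Theorem \ref{thm1} with $k=1$), the product coincides with the single-spectrum Darboux matrix at $\lambda_2$. Left-multiplying by the order-$n_1$ generalized Darboux transformation at $\lambda_1$ built from the push-forwards $\mathbf{\Phi}_1^{[j]}$ (which are the Taylor coefficients of $(\mathbf{T}_2^{[n_2]}\cdots\mathbf{T}_2^{[1]})\Phi_1(\lambda)$ at $\lambda=\lambda_1$) gives a rational matrix with poles of orders $n_1$ at $\lambda_1^*$ and $n_2$ at $\lambda_2^*$, identity at infinity, and the prescribed kernel at both $\lambda_1$ and $\lambda_2$; the uniqueness clause in Theorem \ref{thm1} then forces this product to equal $\mathbf{T}_N$. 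The factorization \eqref{eq:darboux21} is obtained by exchanging the roles of the two spectral parameters in exactly the same construction.

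To establish \eqref{eq:dtasym}, I first observe that along $x-vt=\xi$ with $v<v_2$, the phase satisfies $\mathrm{Re}(2\theta_2)=-2\lambda_{2I}(x-v_2t)+O(1)=-2\lambda_{2I}\xi+2\lambda_{2I}(v_2-v)t+O(1)$, so $|\ee^{2\theta_2}|$ grows as $\ee^{2\lambda_{2I}(v_2-v)t}$ when $t\to+\infty$ and decays at the same rate when $t\to-\infty$. For the first factor, $\mathbf{P}_2^{[1]}=\Phi_2^{[0]}(\Phi_2^{[0]})^{\dag}/\|\Phi_2^{[0]}\|^2$ with $\Phi_2^{[0]}=[\ee^{2\theta_2},\ii]^{\T}$ converges to $\mathrm{diag}(1,0)$ as $t\to+\infty$ (and to $\mathrm{diag}(0,1)$ as $t\to-\infty$) with error $O(\ee^{-2\lambda_{2I}|v-v_2|t})$; substitution into $\mathbf{T}_2^{[1]}=\mathbb{I}-(\lambda_2-\lambda_2^*)/(\lambda-\lambda_2^*)\mathbf{P}_2^{[1]}$ yields the diagonal limit for this factor. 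For $j\geq 2$, an induction on $j$ shows that $\Phi_2^{[j-1]}$ retains the form $\Phi_2^{[j-1]}=[p_{j-1}(t)\ee^{2\theta_2}(1+o(1)),\;q_{j-1}(t)+O(\ee^{-2\lambda_{2I}(v_2-v)t})]^{\T}$ with $p_{j-1},q_{j-1}$ polynomials in $t$ of degree $j-1$, because each $\lambda$-differentiation of $\ee^{2\theta_2(\lambda)}$ in the limit definition produces a factor $\partial_\lambda(2\theta_2)=O(t)$ while the preceding factors $\mathbf{T}_2^{[i]}$ are already close to diagonal. Consequently $\mathbf{P}_2^{[j]}=\mathrm{diag}(1,0)+O(t^{j-1}\ee^{-2\lambda_{2I}(v_2-v)t})$, and multiplying the $n_2$ near-diagonal factors gives \eqref{eq:dtasym} with aggregate error $O(t^{n_2-1}\ee^{-2\lambda_{2I}|v-v_2|t})$. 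The case $t\to-\infty$ is symmetric, and \eqref{eq:dtasym1} follows verbatim from the analogous argument applied to the second factorization with $v>v_1$.

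The main obstacle lies in the inductive bookkeeping for the iterated wave functions $\Phi_2^{[j-1]}$ in the asymptotic step, since each is defined as an $\epsilon\to 0$ limit of a product of elementary Darboux matrices whose poles approach $\lambda_2$; one must simultaneously track (i) the exponential dominance of $\ee^{2\theta_2}$ in the first component, (ii) the polynomial-in-$t$ prefactors generated by the $j-1$ $\lambda$-differentiations of $\ee^{2\theta_2(\lambda)}$, and (iii) the subleading corrections propagated from the preceding near-diagonal factors $\mathbf{T}_2^{[i]}$ with $i<j$. Organizing the induction with the ansatz displayed above keeps these three contributions separate, so the propagation of errors through the $n_2$-fold product remains controlled and yields the stated polynomial prefactor $t^{n_2-1}$ in the exponentially small remainder.
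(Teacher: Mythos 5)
Your proposal is correct and follows essentially the same route as the paper: along a trajectory with $v<v_2$ the first component of $\Phi_2^{[0]}$ dominates exponentially as $t\to+\infty$, so each projector $\mathbf{P}_2^{[j]}$ tends to $\mathrm{diag}(1,0)$ with an error carrying a polynomial-in-$t$ prefactor produced by the $\lambda$-derivatives of $\ee^{2\theta_2}$ in the limit definitions, and multiplying the $n_2$ near-diagonal factors yields the stated remainder $O(t^{n_2-1}\ee^{-2\lambda_{2I}|v-v_2|t})$, exactly as in the paper's iteration. The only difference is that you also justify the factorizations \eqref{eq:darboux12}--\eqref{eq:darboux21} by a uniqueness argument for the Darboux matrix with prescribed poles and kernel data, a step the paper's proof silently takes for granted.
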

\begin{proof}
We just need to consider one peculiar case. The other cases can be considered similarly. Based on the idea\cite{Faddeev-Book}, when $t\to+\infty$, if the soliton moves along the trajectory $\xi=x-vt$ with $v<v_2, \lambda_{2I}>0$, following Eq.\eqref{eq:high-dt}, we have
\begin{equation}\label{eq:phi20}
\begin{split}
\mathbf{\Phi}_2^{[0]}&=\begin{bmatrix}1\\\ee^{-2\theta_2}
\end{bmatrix}=\begin{bmatrix}1\\0
\end{bmatrix}+O(\ee^{-2\lambda_{2I}|v-v_2|t}),\quad t\to+\infty,\\
\end{split}
\end{equation}
where $\theta_2:=\theta^{[2]}\big|_{\lambda=\lambda_2}=\ii\lambda_2(x+(2\gamma\lambda_2+4\delta\lambda_2^2)t)
+a_2^{[0]}-\frac{\ii}{4}\pi, \lambda_2=\lambda_{2R}+\ii\lambda_{2I}.$ Correspondingly, the asymptotics for $\mathbf{P}_{2}^{[1]}$ and $\mathbf{T}_{2}^{[1]}$ can be represented as
\begin{equation}\label{eq:p20t20}
\begin{split}
\mathbf{P}_2^{[1]}&=\frac{\mathbf{\Phi}_{2}^{[0]}\left(\mathbf{\Phi}_{2}^{[0]}\right)^{\dagger}}
{\left(\mathbf{\Phi}^{[0]}\right)^{\dagger}\mathbf{\Phi}^{[0]}}=\begin{bmatrix}1&0\\0&0
\end{bmatrix}+O(\ee^{-2\lambda_{2I}|v-v_2|t}),\quad t\to+\infty,\\ \mathbf{T}_{2}^{[1]}&=\left(\mathbb{I}-\frac{\lambda_2-\lambda_2^*}{\lambda-\lambda_2^*}\mathbf{P}_{2}^{[1]}\right)=\begin{bmatrix}\left(\frac{\lambda-\lambda_2}{\lambda-\lambda_2^*}\right)&0\\0&1
\end{bmatrix}+O(\ee^{-2\lambda_{2I}|v-v_2|t}), \quad t\to+\infty.\\
\end{split}
\end{equation}
Furthermore, with Eq.\eqref{eq:high-dt}, Eq.\eqref{eq:phi20} and Eq.\eqref{eq:p20t20}, we have
\begin{equation}
\begin{split}
\mathbf{\Phi}_2^{[1]}&=\left(\mathbb{I}-\mathbf{P}_2^{[1]}\right)\begin{bmatrix}0\\ \frac{d}{d\lambda}\ee^{-2\theta^{[2]}}\Big|_{\lambda=\lambda_2}\end{bmatrix}
+\frac{\mathbf{P}_{2}^{[1]}}{\lambda_2-\lambda_2^*}\begin{bmatrix}1\\\ee^{-2\theta_2}\end{bmatrix}
=\frac{1}{\lambda_2-\lambda_2^*}\begin{bmatrix}1\\0
\end{bmatrix}+O\left(t\ee^{-2\lambda_{2I}|v-v_2|t}\right),\, t\to+\infty,\\
\mathbf{P}_2^{[2]}&=\frac{\mathbf{\Phi}_{2}^{[1]}\left(\mathbf{\Phi}_{2}^{[1]}\right)^{\dagger}}
{\left(\mathbf{\Phi}_{2}^{[1]}\right)^{\dagger}\mathbf{\Phi}_{2}^{[1]}}=\begin{bmatrix}1&0\\0&0
\end{bmatrix}+O\left[t\ee^{-2\lambda_{2I}|v-v_2|t}\right],\quad t\to+\infty,\\ \mathbf{T}_{2}^{[2]}&=\left(\mathbb{I}-\frac{\lambda_2-\lambda_2^*}{\lambda-\lambda_2^*}
\mathbf{P}_{2}^{[2]}\right)=\begin{bmatrix}\left(\frac{\lambda-\lambda_2}{\lambda-\lambda_2^*}\right)&0\\0&1
\end{bmatrix}+O\left[t\ee^{-2\lambda_{2I}|v-v_2|t}\right], \quad t\to+\infty.\\
\end{split}
\end{equation}
In succession, after $n_2$-fold iteration, we have
\begin{equation}
\begin{split}
\mathbf{T}_{2}^{[n_2]}\mathbf{T}_{2}^{[n_2-1]}\cdots\mathbf{T}_{2}^{[1]}=\begin{bmatrix}\left(\frac{\lambda-\lambda_2}{\lambda-\lambda_2^*}\right)^{n_2}&0\\0&1
\end{bmatrix}+O\left(t^{n_2-1}\ee^{-2\lambda_{2I}|v-v_2|t}\right), \quad t\to+\infty,
\end{split}
\end{equation}
which deduces the equation in \eqref{eq:dtasym}.

If the soliton moves along the trajectory $\xi=x-vt$ with $v>v_1$, in this case, we use the Darboux matrix in Eq.\eqref{eq:darboux21}, then its asymptotics can be given in a similar manner. Similarly, the asymptotics of the Darboux matrix when $t\to-\infty$ can also be presented. It completes the proof.
\end{proof}
Based on the theorem \ref{theorem1} and the asymptotics of Darboux matrix in Eq.\eqref{eq:dtasym} and Eq.\eqref{eq:dtasym1}, the high-order soliton $q^{[N]}$ can be expressed as a sum of $n_1$ single soltions along the specific characteristic curve with the vector $\mathbf{\Phi}_{1}^{[0]}$ defined as
\begin{equation}\label{eq:phi1-asymptotics}
\begin{split}
\mathbf{\Phi}_{1}^{[0]}=&
\left\{\begin{split}&\begin{bmatrix}\left(\frac{\lambda_1-\lambda_2}{\lambda_1-\lambda_2^*}\right)^{n_2}
\ee^{2\theta_1}\\1
\end{bmatrix},\qquad t\to +\infty,\\
&\begin{bmatrix}\left(\frac{\lambda_1-\lambda_2^*}{\lambda_1-\lambda_2}\right)^{ n_2}
\ee^{2\theta_1}\\1
\end{bmatrix},\qquad t\to -\infty. \end{split}\right.
\end{split}
\end{equation}
Additionally, $q^{[N]}$ can also be expressed as a sum of $n_2$ single solitons along another specific characteristic curve with the vector $\mathbf{\Phi}_{2}^{[0]}$ defined as
\begin{equation}\label{eq:phi2-asymptotics}
\begin{split}
\mathbf{\Phi}_{2}^{[0]}=&
\left\{\begin{split}&\begin{bmatrix}\left(\frac{\lambda_2-\lambda_1}{\lambda_2-\lambda_1^*}\right)^{-n_1}
\ee^{2\theta_2}\\1
\end{bmatrix},\qquad t\to +\infty,\\
&\begin{bmatrix}\left(\frac{\lambda_2-\lambda_1^*}{\lambda_2-\lambda_1}\right)^{-n_1}
\ee^{2\theta_2}\\1
\end{bmatrix}, \qquad t\to -\infty. \end{split}\right.
\end{split}
\end{equation}
Now we can give the asymptotic expression with two spectral parameters $\lambda_1$ and $\lambda_2$.
\begin{theorem}\label{theorem2}
Suppose there are two spectral parameters $\lambda_1, \lambda_2$ and $v_1<v_2$, then the long-time asymptotic behavior of $q^{[N]}$ is
\small
\begin{equation}\label{eq:qn2}
\begin{split}
&q^{[N]}_{\pm}{=}\sum_{\kappa{=}1}^{\left[\frac{n_1+1}{2}\right]}2\lambda_{1I}
\sech\left(2\lambda_{1I}s_{1, l,\pm}{\pm}\Delta_{\lambda_1, \lambda_2,l}^{[\kappa]}-2a_{1}^{[0]}\right)\ee^{-2\ii{\rm Im}(\theta_1)\pm\ii\frac{n_1+1-2\kappa}{2}
\left(\arg\left(\frac{\mathscr{C}_1}{\mathscr{C}_1^*}\right)\right)\pm\ii\frac{n_2}{2}
\left(\arg\left(\frac{\lambda_1-\lambda_2^*}{\lambda_1-\lambda_2}
\frac{\lambda_1^*-\lambda_2^*}{\lambda_1^*-\lambda_2}\right)\right)+\ii\pi\left(\frac{1}{2}+\kappa\right)}\\&
{+}\sum_{\kappa=1}^{\left[\frac{n_1}{2}\right]}2\lambda_{1I}
 \sech\left(2\lambda_{1I}s_{1, r,\pm}\mp\Delta_{\lambda_1, \lambda_2, r}^{[\kappa]}-2a_{1}^{[0]}\right)\ee^{-2\ii{\rm Im}(\theta_1)\mp\ii\frac{n_1+1-2\kappa}{2}
\left(\arg\left(\frac{\mathscr{C}_1}{\mathscr{C}_1^*}\right)\right)\pm\ii\frac{n_2}{2}
\left(\arg\left(\frac{\lambda_1-\lambda_2^*}{\lambda_1-\lambda_2}
\frac{\lambda_1^*-\lambda_2^*}{\lambda_1^*-\lambda_2}\right)\right)+\ii\pi\left(n_1-\frac{1}{2}+\kappa\right)}\\
 &{+}\sum_{\kappa{=}1}^{\left[\frac{n_2+1}{2}\right]}2\lambda_{2I}
\sech\left(2\lambda_{2I}s_{2, l,\pm}{\pm}\Delta_{\lambda_2, \lambda_1, l}^{[\kappa]}-2a_{2}^{[0]}\right)\ee^{-2\ii{\rm Im}(\theta_2)\pm\ii\frac{n_2+1-2k}{2}
\left(\arg\left(\frac{\mathscr{C}_2}{\mathscr{C}_2^*}\right)\right)\pm\ii\frac{n_1}{2}
\left(\arg\left(\frac{\lambda_2-\lambda_1}{\lambda_2-\lambda_1^*}
\frac{\lambda_2^*-\lambda_1}{\lambda_2^*-\lambda_1^*}\right)\right)+\ii\pi\left(\frac{1}{2}+\kappa\right)}\\&
{+}\sum_{\kappa=1}^{\left[\frac{n_2}{2}\right]}2\lambda_{2I}
 \sech\left(2\lambda_{2I}s_{2, r,\pm}\mp\Delta_{\lambda_2, \lambda_1, r}^{[\kappa]}-2a_{2}^{[0]}\right)
 \ee^{-2\ii{\rm Im}(\theta_2)\mp\ii\frac{n_2+1-2\kappa}{2}
\left(\arg\left(\frac{\mathscr{C}_2}{\mathscr{C}_2^*}\right)\right)\pm\ii\frac{n_1}{2}
\left(\arg\left(\frac{\lambda_2-\lambda_1}{\lambda_2-\lambda_1^*}
\frac{\lambda_2^*-\lambda_1}{\lambda_2^*-\lambda_1^*}\right)\right)+\ii\pi\left(n_2-\frac{1}{2}+\kappa\right)}\\&+O(\log(t)/t).
\end{split}
\end{equation}
\normalsize
where
\begin{equation*}
\begin{split}
&\Delta_{\lambda_i, \lambda_j, l}^{[\kappa]}=\log\left(\frac{(n_i-\kappa)!}{(k-1)!}\right)-(n_i+1-2\kappa)
\left(\log\left(2\lambda_{iI}|\mathscr{C}_i|\right)\right)
- n_j\left(\text{\rm{sgn}}(j-i)\right)
\log\left|\frac{\lambda_i-\lambda_j}{\lambda_i-\lambda_j^*}\right|\\
&\Delta_{\lambda_i, \lambda_j, r}^{[\kappa]}=\log\left(\frac{(n_i-\kappa)!}{(\kappa-1)!}\right)-(n_i+1-2\kappa)
\left(\log\left(2\lambda_{iI}|\mathscr{C}_i|\right)\right)
+ n_j\left(\text{\rm {sgn}}(j-i)\right)
\log\left|\frac{\lambda_i-\lambda_j}{\lambda_i-\lambda_j^*}\right|
\end{split}
\end{equation*}
and $\rm{sgn}$(x) is a sign function defined by
$${\rm sgn}(x):=\left\{\begin{split}
&-1,\quad &if\quad  x<0,\\
&0, \quad  &if\quad  x=0,\\
&1, \quad  &if \quad x>0.
\end{split}\right.$$
\end{theorem}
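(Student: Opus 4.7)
The plan is to reduce the two-spectrum asymptotics to two independent single-spectrum problems by exploiting the factorization of the Darboux matrix established in Lemma \ref{lemma2spectral}, together with the collapse of one factor along a characteristic curve associated with the other spectral parameter. Since $v_1<v_2$, as $|t|\to\infty$ the two soliton packets physically separate, and along any curve $\xi=x-vt+\mu\log|t|$ with $v$ close to $v_1$ the product $\mathbf{T}_2^{[n_2]}\cdots\mathbf{T}_2^{[1]}$ collapses to the diagonal matrix prescribed by Eq.\eqref{eq:dtasym}; symmetrically, along a curve with $v$ close to $v_2$ we use the alternative factorization \eqref{eq:darboux21} so that $\widetilde{\mathbf{T}}_1^{[n_1]}\cdots\widetilde{\mathbf{T}}_1^{[1]}$ collapses via \eqref{eq:dtasym1}.

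The key observation is that once the $\lambda_2$-block becomes diagonal, the remaining $\lambda_1$-Darboux chain acts exactly as a single-spectrum high-order transformation at $\lambda_1$, but seeded by the \emph{modified} vector displayed in \eqref{eq:phi1-asymptotics}, namely $\left[\left(\frac{\lambda_1-\lambda_2}{\lambda_1-\lambda_2^*}\right)^{\pm n_2}\ee^{2\theta_1},\,1\right]^{\T}$. The extra factor $\left(\frac{\lambda_1-\lambda_2}{\lambda_1-\lambda_2^*}\right)^{\pm n_2}$ is a fixed complex constant of modulus $\ne 1$; writing it as $\exp\!\left(\pm n_2\log\!\frac{\lambda_1-\lambda_2}{\lambda_1-\lambda_2^*}\right)$ and absorbing its real part into a translation of the position coordinate $s_{1,l/r,\pm}$ and its imaginary part into a shift of $\mathrm{Im}(\theta_1)$ places us in the exact hypothesis of Theorem \ref{theorem1}. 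Invoking that theorem then yields the first two sums in \eqref{eq:qn2}: the $n_1$ solitons branching off the characteristic curves $x=s+v_1t\pm\frac{n_1+1-2\kappa}{2\lambda_{1I}}\log|t|$, with phase shifts $\Delta_{\lambda_1,\lambda_2,l/r}^{[\kappa]}$ picking up the additional $-n_2\,\mathrm{sgn}(2-1)\log\!\left|\frac{\lambda_1-\lambda_2}{\lambda_1-\lambda_2^*}\right|$ term, and the exponential factor acquiring the contribution $\pm\tfrac{n_2}{2}\arg\!\left(\frac{\lambda_1-\lambda_2^*}{\lambda_1-\lambda_2}\frac{\lambda_1^*-\lambda_2^*}{\lambda_1^*-\lambda_2}\right)$. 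A completely symmetric argument, starting from \eqref{eq:darboux21} and using \eqref{eq:dtasym1}, produces the last two sums, associated to $\lambda_2$, in which the inter-spectrum corrections carry the opposite sign through $\mathrm{sgn}(1-2)$. For curves with velocity $v\ne v_1,v_2$ both factorizations give exponential decay as in Lemma \ref{lemma2spectral}, so $q^{[N]}$ is negligible there and the four sums exhaust the asymptotic mass.

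The principal difficulty lies in the bookkeeping of the constants and of the $\pm$ signs: one must verify that the split of $\pm n_2\log\!\frac{\lambda_1-\lambda_2}{\lambda_1-\lambda_2^*}$ into real and imaginary parts interlocks correctly with the single-spectrum shifts from Theorem \ref{theorem1}, producing precisely the closed forms $\Delta_{\lambda_i,\lambda_j,l/r}^{[\kappa]}$ stated in the theorem, and that the orientation conventions between the $l$ (left) and $r$ (right) branches remain consistent as one passes from $t\to+\infty$ to $t\to-\infty$ (where the $\pm n_2$ exponent in \eqref{eq:phi1-asymptotics} flips, accounting for the global $\pm$ signs in \eqref{eq:qn2}). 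A secondary technical point is that the error $O\!\left(t^{n_2-1}\ee^{-2\lambda_{2I}|v-v_2|t}\right)$ supplied by Lemma \ref{lemma2spectral} must be propagated through the remaining $\lambda_1$-chain along the log-scaled curves $x=s_{1,l/r,\pm}+v_1t\pm\frac{n_1+1-2\kappa}{2\lambda_{1I}}\log|t|$; since $v_1\ne v_2$, the exponential factor dominates every polynomial correction uniformly, so the inter-spectrum error is absorbed into the $O(\log(t)/t)$ remainder inherited from Theorem \ref{theorem1}.
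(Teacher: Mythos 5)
Your proposal is correct and follows essentially the same route as the paper: both use Lemma \ref{lemma2spectral} to collapse the Darboux factors associated with the ``other'' spectral parameter along each soliton's trajectory, reduce to the single-parameter setting with the modified seed vectors \eqref{eq:phi1-asymptotics}--\eqref{eq:phi2-asymptotics}, and then absorb the constant factor $\left(\frac{\lambda_1-\lambda_2}{\lambda_1-\lambda_2^*}\right)^{\pm n_2}$ into the position shift (real part of its logarithm) and phase (imaginary part), invoking the single-spectrum result of Theorem \ref{theorem1}. The paper additionally writes out the modified leading-order terms of the numerator and denominator explicitly, but this is only a more detailed presentation of the same reduction.
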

\begin{proof}
Based on the analysis in Lemma \ref{lemma-denom} and the Lemma \ref{lemma2spectral}, we give the asymptotics behavior when $t\to+\infty$. The other case can be given similarly. If the high-order soliton moves along the trajectory of soliton$_1$, then the new vector $\Phi_{1}^{[0]}$ is defined as Eq.\eqref{eq:phi1-asymptotics}. Under this condition, the leading order terms of the denominator becomes
\begin{equation}
  \begin{split}
  \sum_{j{=}0}^{n_1{-}1}\ee^{2(n_1{-}j)(\theta_1{+}\theta_1^*)}\left(
  \left|\frac{\lambda_1{-}\lambda_2}{\lambda_1{-}\lambda_2^*}\right|^{2n_2(n_1{-}j)}\mathscr{A}_j(x,t)\right)
  +\mathscr{A}_0{+}{\rm l.o.t}.
  \end{split}
  \end{equation}
Similarly, the leading order of the numerator is
\begin{equation}
\begin{split}
&\sum_{j=0}^{n_1-1}(-1)^{j+1}\ee^{2\theta_1^*}\ee^{2(n_1-j-1)(\theta_1+\theta_1^*)}
\left(\frac{\lambda_1^*-\lambda_2^*}{\lambda_1^*-\lambda_2}\right)^{n_{2}(n_1-j)}
\left(\frac{\lambda_1-\lambda_2}{\lambda_1-\lambda_2^*}\right)^{n_{2}(n_1-j-1)}\mathscr{B}_{j}(x,t)+{\rm l.o.t.}
\end{split}
\end{equation}
If the soliton moves along the following characteristic curve
\begin{equation*}
\begin{split}
x&=s_{1, l, +}+\left(\frac{n_1+1-2\kappa}{2\lambda_{1I}}\right)\log(t)+
v_1t,\qquad\left(\kappa=1,2,\cdots, \left[\frac{n_1+1}{2}\right]\right),\\
x&=s_{1, r, +}-\left(\frac{n_1+1-2\kappa}{2\lambda_{1I}}\right)\log(t)+
v_1t,\qquad\left(\kappa=1,2,\cdots, \left[\frac{n_1}{2}\right]\right),
\end{split}
\end{equation*}
then the long-time asymptotics of high-order soliton $q^{[N]}$ is
\small
\begin{equation}
\begin{split}
q^{[N]}&{=}2\lambda_{1I}
\sech\left(2\lambda_{1I}s_{1, l,+}{+}\Delta_{\lambda_1, \lambda_2,l}^{[\kappa]}{-}2a_{1}^{[0]}\right)\ee^{-2\ii{\rm Im}(\theta_1){+}\ii\frac{n_1{+}1{-}2\kappa}{2}
\left(\arg\left(\frac{\mathscr{C}_1}{\mathscr{C}_1^*}\right)\right){+}\ii\frac{n_2}{2}
\left(\arg\left(\frac{\lambda_1{-}\lambda_2^*}{\lambda_1{-}\lambda_2}
\frac{\lambda_1^*{-}\lambda_2^*}{\lambda_1^*{-}\lambda_2}\right)\right){+}\ii\pi\left(\frac{1}{2}+\kappa \right)}{+}O(\log(t)/t),\\&
\text{or}\\
q^{[N]}&{=}2\lambda_{1I}
 \sech\left(2\lambda_{1I}s_{1, r,+}{-}\Delta_{\lambda_1, \lambda_2, r}^{[\kappa]}{-}2a_{1}^{[0]}\right)\ee^{-2\ii{\rm Im}(\theta_1){-}\ii\frac{n_1{+}1{-}2\kappa}{2}
\left(\arg\left(\frac{\mathscr{C}_1}{\mathscr{C}_1^*}\right)\right){+}\ii\frac{n_2}{2}
\left(\arg\left(\frac{\lambda_1{-}\lambda_2^*}{\lambda_1{-}\lambda_2}
\frac{\lambda_1^*{-}\lambda_2^*}{\lambda_1^*{-}\lambda_2}\right)\right){+}\ii\pi\left(n_1-\frac{1}{2}+\kappa \right)}{+}O(\log(t)/t).
\end{split}
\end{equation}
\normalsize
Similarly, if the soliton moves along another characteristic curve
\begin{equation*}
\begin{split}
x&=s_{2, l, +}+\left(\frac{n_2+1-2\kappa}{2\lambda_{2I}}\right)\log(t)+
v_2t,\left(\kappa=1,2,\cdots, \left[\frac{n_2+1}{2}\right]\right),
\\x&=s_{2, r, +}-\left(\frac{n_2+1-2\kappa}{2\lambda_{2I}}\right)\log(t)+
v_2t,\left(\kappa=1,2,\cdots, \left[\frac{n_2}{2}\right]\right),
\end{split}
\end{equation*}
the asymptotic expression is
\small
\begin{equation}
\begin{split}
q^{[N]}&{=}2\lambda_{2I}
\sech\left(2\lambda_{2I}s_{2, l,+}{+}\Delta_{\lambda_2, \lambda_1, l}^{[\kappa]}{-}2a_{2}^{[0]}\right)\ee^{-2\ii{\rm Im}(\theta_2)+\frac{n_2+1-2\kappa}{2}
\left(\arg\left(\frac{\mathscr{C}_2}{\mathscr{C}_2^*}\right)\right)+\ii\frac{n_1}{2}
\left(\arg\left(\frac{\lambda_2{-}\lambda_1}{\lambda_2{-}\lambda_1^*}
\frac{\lambda_2^*{-}\lambda_1}{\lambda_2^*{-}\lambda_1^*}\right)\right){+}\ii\pi\left(\frac{1}{2}+\kappa \right)}{+}O(\log(t)/t),\\&
\text{or}\\
q^{[N]}&{=}2\lambda_{2I}
 \sech\left(2\lambda_{2I}s_{2, r,+}{-}\Delta_{\lambda_2, \lambda_1, r}^{[\kappa]}{-}2a_{2}^{[0]}\right)
 \ee^{-2\ii{\rm Im}(\theta_2){-}\frac{n_2{+}1{-}2\kappa}{2}
\left(\arg\left(\frac{\mathscr{C}_2}{\mathscr{C}_2^*}\right)\right){+}\ii\frac{n_1}{2}
\left(\arg\left(\frac{\lambda_2{-}\lambda_1}{\lambda_2{-}\lambda_1^*}
\frac{\lambda_2^*{-}\lambda_1}{\lambda_2^*{-}\lambda_1^*}\right)\right){+}\ii\pi\left(n_2{-}\frac{1}{2}{+}\kappa \right)}{+}O(\log(t)/t).
\end{split}
\end{equation}
\normalsize
If the high-order soliton moves along the other curves $x=s+vt+\beta\log(t)$ with $v\neq v_1, v\neq v_2 $ or $\beta\neq\frac{n_1+1-2\kappa}{2\lambda_{1I}}, \beta\neq\frac{n_2+1-2\kappa}{2\lambda_{2I}} $, then its asymptotic solution is
\begin{equation}
q^{[N]}=O(\ee^{-2a|v-b|t}),
\end{equation}
and
\begin{equation}
q^{[N]}=O(t^{-1}),
\end{equation}
respectively, where $a=\min(\lambda_{1I}, \lambda_{2I}), b=\min(|v-v_1|, |v-v_2|)$, both of which are vanishing when $t\to\infty$. The asymptotic behavior of $q^{[N]}$ when $t\to-\infty$ can also be given in a similar method. Then the global asymptotics of $q^{[N]}$ can be given in Eq.\eqref{eq:qn2}. It completes the proof.
\end{proof}

In this case, we only present one numerical figure to verify the above asymptotic expressions in Fig.\ref{fig:2-spectral-1}.
\begin{figure}[!h]
{
\includegraphics[height=0.25\textwidth]{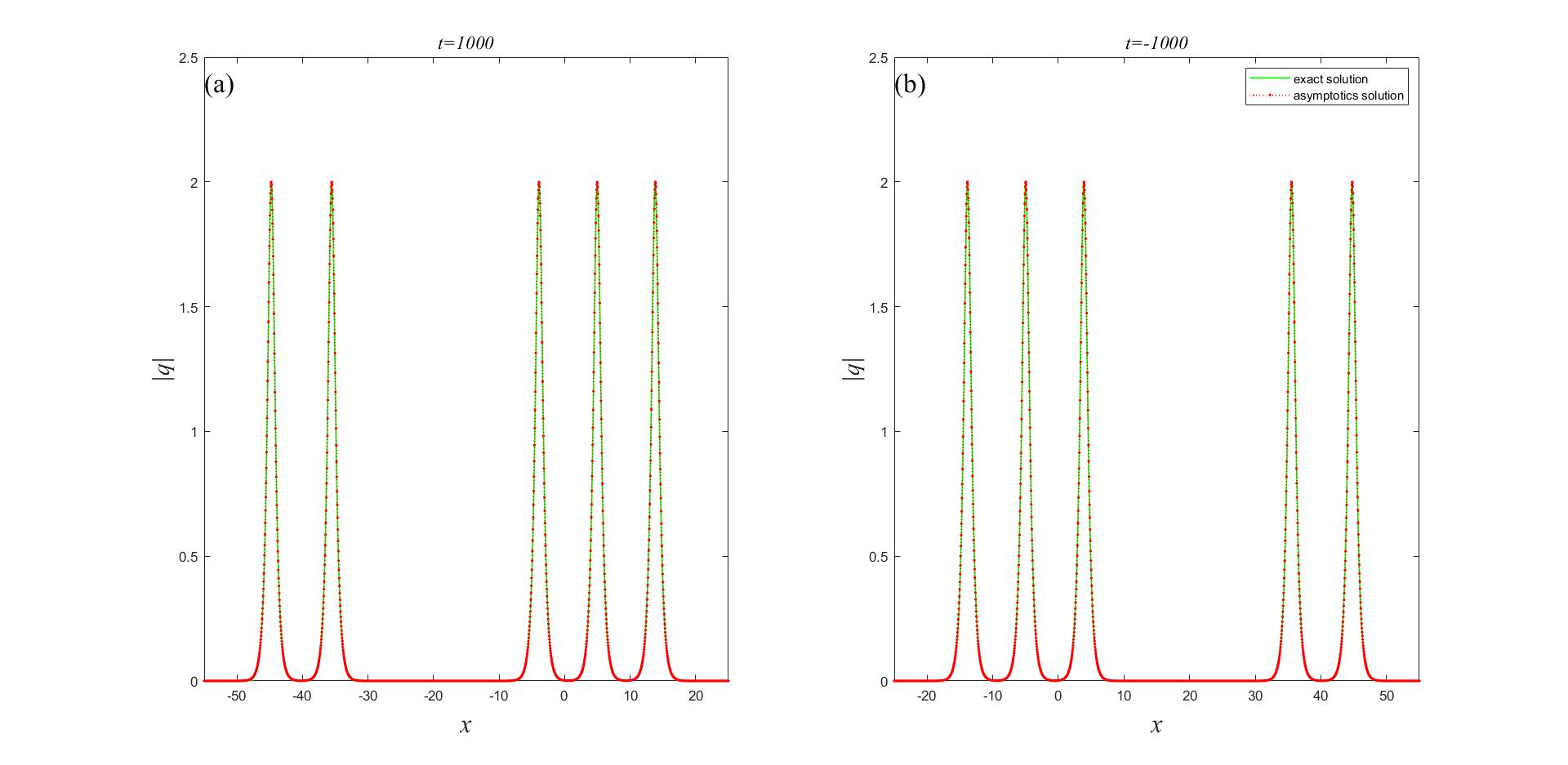}}
\caption{\small The comparison between the exact solution and asymptotic solution by choosing the parameters $\lambda_1=1/9+\ii, \lambda_2=1/8+\ii, \delta=1/8, \gamma=1/2, a_{1}^{[0]}=a_{1}^{[1]}=a_{1}^{[2]}=a_{2}^{[0]}=a_{2}^{[1]}=0, \xi=x-7t/27$, their orders are $n_1=3$, $n_2=2.$}\label{fig:2-spectral-1}
\end{figure}

In Fig.\ref{fig:2-spectral-1}, we choose the parameters $\lambda_1=1/9+\ii, \lambda_2=1/8+\ii, \delta=1/8, \gamma=1/2, a_{1}^{[0]}=a_{1}^{[1]}=a_{1}^{[2]}=a_{2}^{[0]}=a_{2}^{[1]}=0$, then
\begin{equation*}
\begin{split}
\theta_1&=-x+\frac{7}{27}t+\ii\left(\frac{x}{9}-\frac{841t}{729}-\frac{\pi}{4}\right),\\
\theta_2&=-x+\frac{7}{27}t-\frac{113}{3456}t+\ii\left(\frac{x}{8}-\frac{1199t}{1024}-\frac{\pi}{4}\right).
\end{split}
\end{equation*}
It is clear $v_1=\frac{7}{27}, v_2=\frac{29}{128}, v_1>v_2$. If the soliton moves along the trajectory of soliton$_1$, we have \begin{equation}
\begin{split}
&\theta_2\to-\infty, \quad t\to+\infty,\\
&\theta_2\to+\infty, \quad t\to-\infty.\\
\end{split}
\end{equation}
With the asymptotics in Eq.\eqref{eq:phi2-asymptotics}, we know $\Phi_{2}^{[0]}$ has asymptotically form of $\begin{bmatrix}0\\1
\end{bmatrix}$ and $\begin{bmatrix}1\\0
\end{bmatrix}$ as $t\to +\infty$ and $-\infty$ respectively.
When $t>0$, soliton$_2$ is located on the left of soliton$_1$, but when $t<0$, soliton$_1$ is in the left because soliton$_1$ moves faster. Conversely, if the soliton moves along the trajectory of soliton$_2$, we know $\Phi_{1}^{[0]}$ has an opposite asymptotics, which approaches $\begin{bmatrix}1\\0
\end{bmatrix}$ and $\begin{bmatrix}0\\1
\end{bmatrix}$ when $t\to+\infty$ and $-\infty$ respectively.
It is interesting that the soliton either decays or grows exponentially as $t\to\pm\infty$, along any direction except its own trajectory. In order to observe these five solitons when $t$ is large, we choose the velocity $v_1, v_2$ be close. Otherwise, one soliton will go far away to the other ones when $t$ is large, it becomes difficult to see all of the solitons in a short range in space.

Next, we will give a general asymptotic expression with $k$ spectral parameters $\lambda_1, \lambda_2, \cdots, \lambda_k$.

\begin{theorem}\label{theorem-k}
If there are $k$ spectral parameters $\lambda_1, \lambda_2, \cdots, \lambda_k$ with the order $n_1, n_2, \cdots, n_k$ respectively. Suppose their velocities satisfying $v_1<v_2<\ldots<v_k$, then the long-time asymptotics of $q^{[N]}$ is
\scriptsize
\begin{equation}\label{eq:qnk}
\begin{split}
&q_{\pm}^{[N]}{=}\sum_{i{=}1}^{k}\sum_{\kappa{=}1}^{\left[\frac{n_i{+}1}{2}\right]}2\lambda_{iI}
\sech\left(2\lambda_{iI}s_{i,l,\pm}\pm\Delta_{\lambda_i,\cdots,\lambda_k,\lambda_1,\cdots,\lambda_{i-1},  l}^{[\kappa]}-2a_{i}^{[0]}\right)\ee^{-2\ii{\rm Im}(\theta_i)\pm\ii\frac{n_i{+}1{-}2\kappa}{2}\left(\arg\left(\frac{\mathscr{C}_i}{\mathscr{C}_i^*}\right)\right)
\pm\ii\sum\limits_{j{=}1, j\neq i}^{k}\frac{n_j}{2}{\rm sgn}(j-i)\left(\arg\left(\frac{\lambda_i{-}\lambda_j^*}{\lambda_i{-}\lambda_j}
\frac{\lambda_i^*{-}\lambda_j^*}{\lambda_i^*{-}\lambda_j}\right)\right){+}\ii\pi\left(\frac{1}{2}{+}\kappa\right)}
\\&{+}\sum_{i{=}1}^{k}\sum_{\kappa{=}1}^{\left[\frac{n_i}{2}\right]}
2\lambda_{iI}
 \sech\left(2\lambda_{iI}s_{i,r,\pm}{\mp}\Delta_{\lambda_i,\cdots,\lambda_{k},\lambda_1,\cdots,\lambda_{i-1}, r}^{[\kappa]}-2a_{i}^{[0]}\right)
 \ee^{-2\ii{\rm Im}(\theta_i)\mp\ii\frac{n_i{+}1{-}2\kappa}{2}\left(\arg\left(\frac{\mathscr{C}_i}{\mathscr{C}_i^*}\right)\right)
\pm\ii\sum\limits_{j{=}1, j\neq i}^{k}\frac{n_j}{2}{\rm sgn}(j-i)\left(\arg\left(\frac{\lambda_i{-}\lambda_j^*}{\lambda_i{-}\lambda_j}
\frac{\lambda_i^*{-}\lambda_j^*}{\lambda_i^*{-}\lambda_j}\right)\right){+}\ii\pi\left(n_i{-}\frac{1}{2}{+}\kappa \right)}\\
 &+O(\log(t)/t)
 \end{split}
\end{equation}
\normalsize
where
\small
\begin{equation*}
\begin{split}
&\Delta_{\lambda_i,\cdots,\lambda_{k},\lambda_1,\cdots,\lambda_{i-1},l}^{[\kappa]}
{=}\log\left(\frac{(n_i-\kappa)!}{(\kappa-1)!}\right){-}(n_i+1-2\kappa)
\left(\log\left(2\lambda_{iI}\right)+\log\left(|\mathscr{C}_{i}|\right)\right){-}\sum_{j=1, j\neq i}^{k}n_j{\rm sgn}(j{-}i)
\log\left|\frac{\lambda_i{-}\lambda_j}{\lambda_i{-}\lambda_j^*}\right|
\\
&\Delta_{\lambda_i,\cdots,\lambda_k,\lambda_1,\cdots, \lambda_{i-1}, r}^{[\kappa]}
{=}\log\left(\frac{(n_i-\kappa)!}{(k-1)!}\right){-}(n_i+1-2\kappa)
\left(\log\left(2\lambda_{iI}\right){+}\log\left(|\mathscr{C}_{i}|\right)\right){+}\sum_{j=1, j\neq i}^{k}n_j{\rm sgn}(j-i)
\log\left|\frac{\lambda_i{-}\lambda_j}{\lambda_i{-}\lambda_j^*}\right|\\
&\theta_i:=\theta^{[i]}\big|_{\lambda=\lambda_i}=\ii\lambda_i(x+(2\gamma\lambda_i+4\delta\lambda_i^2)t)
+a_i^{[0]}-\frac{\ii}{4}\pi,\quad \lambda_i=\lambda_{iR}+\ii\lambda_{iI}.
\end{split}
\end{equation*}
\end{theorem}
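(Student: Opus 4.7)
The plan is to adapt the two-spectral-parameter argument of Theorem \ref{theorem2} to $k$ spectral parameters by iterating the Darboux factorization. Using the ordering $v_1 < v_2 < \cdots < v_k$, for each fixed $i \in \{1,\ldots,k\}$ I would write $\mathbf{T}_N$ in a factorized form generalizing Eq. \eqref{eq:darboux12} and Eq. \eqref{eq:darboux21}: the Darboux factors associated to the faster spectral parameters $\lambda_{i+1},\ldots,\lambda_k$ sit to the right of $\mathbf{T}_i$, while those associated to the slower $\lambda_1,\ldots,\lambda_{i-1}$ sit to the left (or, dually, the symmetric ordering is used for the other sign of $t$). Since the overall potential is independent of the ordering chosen, this rearrangement is legitimate provided the auxiliary vectors are iteratively redefined through the limit construction in Eq. \eqref{eq:high-dt}.

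Next, I would extend Lemma \ref{lemma2spectral} inductively. Along the characteristic strip $x - v_i t = O(\log|t|)$, every Darboux factor $\mathbf{T}_j^{[\ell]}$ with $j \neq i$ becomes asymptotically diagonal as $|t| \to \infty$, contributing $\left(\frac{\lambda-\lambda_j}{\lambda-\lambda_j^*}\right)^{n_j}$ in the $(1,1)$ or $(2,2)$ slot, the choice being determined jointly by the sign of $v_j - v_i$ and the sign of $t$. Concretely, the factors with $j > i$ behave as in Eq. \eqref{eq:dtasym}, those with $j < i$ as in Eq. \eqref{eq:dtasym1}. The cumulative effect of all $k-1$ such factors is to dress the seed vector $\Phi_i^{[0]}$: the exponential entry $\ee^{2\theta_i}$ is multiplied by $\prod_{j\neq i}\left(\frac{\lambda_i - \lambda_j^*}{\lambda_i - \lambda_j}\right)^{\mathrm{sgn}(j-i)\, n_j}$ as $t \to +\infty$, with the complex-conjugate product as $t \to -\infty$.

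With this dressed seed in hand, I would invoke Theorem \ref{theorem1} exactly as in the proof of Theorem \ref{theorem2}: the modulus of the dressing factor contributes the extra shift $\sum_{j \neq i} n_j\,\mathrm{sgn}(j-i)\log\left|\frac{\lambda_i - \lambda_j}{\lambda_i - \lambda_j^*}\right|$ that appears in $\Delta^{[\kappa]}_{\lambda_i,\ldots,l/r}$, and its argument produces the phase correction $\sum_{j \neq i}\frac{n_j}{2}\,\mathrm{sgn}(j-i)\arg\left(\frac{\lambda_i - \lambda_j^*}{\lambda_i - \lambda_j}\frac{\lambda_i^* - \lambda_j^*}{\lambda_i^* - \lambda_j}\right)$ inside the exponential. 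Summing the resulting $n_i$ single-soliton contributions over $i = 1,\ldots,k$ and $\kappa = 1,\ldots,\lceil n_i/2\rceil$ (split into left and right families as in Theorem \ref{theorem1}) yields formula \eqref{eq:qnk}. Off-trajectory regions, where $x = s + v t + \beta \log|t|$ with $v \notin \{v_1,\ldots,v_k\}$ or with $v = v_i$ but $\beta \notin \{\frac{n_i+1-2\kappa}{2\lambda_{iI}}\}$, contribute exponentially decaying or $O(1/t)$ terms respectively, and hence are absorbed into the $O(\log|t|/t)$ remainder.

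The main technical obstacle is controlling the error terms uniformly across the $k$ iterated Darboux transformations: each factor contributes an error of order $|t|^{n_j-1}\ee^{-2\lambda_{jI}|v_i - v_j|\,|t|}$, and one must verify that the product of these errors, when dressed with the polynomial growth from the $n_i$-fold iteration at $\lambda_i$, still yields the $O(\log|t|/t)$ remainder on the relevant strip. This amounts to checking that the polynomial prefactors from the $\lambda_i$-block never overcome the subdominant exponentials coming from the other $k-1$ blocks, a point that is straightforward once the velocities are assumed strictly distinct but which requires careful bookkeeping to make uniform in the parameter $s$ along each characteristic curve.
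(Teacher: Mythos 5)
Your proposal takes essentially the same route as the paper: factor the Darboux matrix by spectral parameter, show via the $k$-parameter extension of Lemma \ref{lemma2spectral} that the blocks attached to $\lambda_j$ ($j\neq i$) become asymptotically diagonal along the $i$-th characteristic strip, absorb them into a dressed seed vector $\Phi_i^{[0]}$, and then invoke Theorem \ref{theorem1} as in Theorem \ref{theorem2}; your closing remarks on uniform error control are if anything more explicit than what the paper provides. The one slip is that your dressing factor for $t\to+\infty$ is written as the reciprocal of the correct one (compare Eq.~\eqref{eq:phi1-asymptotics} with $i=1<j=2$: the factor should be $\bigl(\tfrac{\lambda_i-\lambda_j}{\lambda_i-\lambda_j^*}\bigr)^{n_j\,\mathrm{sgn}(j-i)}$, matching the paper's redefined $\Phi_i^{[0]}$), a bookkeeping error that would flip the signs of the position and phase shifts but does not affect the structure of the argument.
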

\begin{proof}
\normalsize
Similar to the analysis with two spectral parameters, we can give its asymptotics along the trajectory of every soliton$_i$, $(i=1,2,\cdots, k)$. If the soliton moves along the trajectory soliton$_i$, that is ${\rm Re}(\theta_i)=O(1)$. When $t\to+\infty$, we have
\begin{equation*}\theta_j\to
\left\{
\begin{split}
&-\infty,\quad j<i,\\
&+\infty,\quad j>i.
\end{split}
\right.
\end{equation*}
Then we can define a new $\Phi_{i}^{[0]}$ as
$$\Phi_{i}^{[0]}=\begin{bmatrix}\prod\limits_{j=1, j\neq i}\limits^{k}\left(\frac{\lambda_i-\lambda_j}{\lambda_i-\lambda_j^*}\right)^{n_j{\rm sgn}(j-i)}\ee^{2\theta_i}\\
1
\end{bmatrix}$$
Based on the analysis in Theorem \ref{theorem2}, if the high-order soliton moves along the characteristic curve
\begin{equation}
\begin{split}
x&=s+v_it+\frac{n_i+1-2\kappa}{2\lambda_{iI}}\log(|t|),\qquad \left(i=1,2,\cdots, k , \kappa=1,2,\cdots, \left[\frac{n_i+1}{2}\right]\right),\\
x&=s+v_it-\frac{n_i+1-2\kappa}{2\lambda_{iI}}\log(|t|),\qquad \left(i=1,2,\cdots, k, \kappa=1,2,\cdots, \left[\frac{n_i}{2}\right]\right),
\end{split}
\end{equation}
then the asymptotics of $q^{[N]}$ is
\begin{equation}
\begin{split}
q^{[N]}&=2\lambda_{iI}
\sech\left(2\lambda_{iI}s_{i,l,\pm}\pm\Delta_{\lambda_i,\cdots,\lambda_k,\lambda_1,\cdots,\lambda_{i-1},  l}^{[\kappa]}{-}2a_{i}^{[0]}\right)\\
\cdot&\ee^{-2\ii{\rm Im}(\theta_i)\pm\ii\frac{n_i{+}1{-}2\kappa}{2}\left(\arg\left(\frac{\mathscr{C}_i}{\mathscr{C}_i^*}\right)\right)
\pm\ii\sum\limits_{j{=}1, j\neq i}^{k}\frac{n_j}{2}{\rm sgn}(j{-}i)\left(\arg\left(\frac{\lambda_i{-}\lambda_j^*}{\lambda_i{-}\lambda_j}
\frac{\lambda_i^*{-}\lambda_j^*}{\lambda_i^*{-}\lambda_j}\right)\right){+}\ii\pi\left(\frac{1}{2}+\kappa \right)}{+}O(\log(t)/t)\\
&\text{or}\\
q^{[N]}&=2\lambda_{iI}
 \sech\left(2\lambda_{iI}s_{i,r,\pm}{\mp}\Delta_{\lambda_i,\cdots,\lambda_{k},\lambda_1,\cdots,\lambda_{i-1}, r}^{[\kappa]}{-}2a_{i}^{[0]}\right)
\\\cdot& \ee^{-2\ii{\rm Im}(\theta_i)\mp\ii\frac{n_i{+}1{-}2\kappa}{2}\left(\arg\left(\frac{\mathscr{C}_i}{\mathscr{C}_i^*}\right)\right)
\pm\ii\sum\limits_{j{=}1, j\neq i}^{k}\frac{n_j}{2}{\rm sgn}(j{-}i)\left(\arg\left(\frac{\lambda_i{-}\lambda_j^*}{\lambda_i{-}\lambda_j}
\frac{\lambda_i^*{-}\lambda_j^*}{\lambda_i^*{-}\lambda_j}\right)\right){+}\ii\pi\left(n_i{-}\frac{1}{2}+\kappa \right)}{+}O(\log(t)/t)
\end{split}
\end{equation}
Otherwise, if the high-order soliton moves along the other curves $x=s+vt+\beta\log(t)$ with $v\neq v_i,$ or $\beta\neq\frac{n_i+1-2\kappa}{2\lambda_{iI}}(i=1,2,\cdots, k)$, then its asymptotics is
\begin{equation}
q^{[N]}=O(\ee^{-2a|v-b|t}),
\end{equation}
and
\begin{equation}
q^{[N]}=O(t^{-1}),
\end{equation}
respectively, where $a=\min(\lambda_{iI},(i=1,2,\cdots, k)), b=\min(|v-v_i|)(i=1,2,\cdots, k)$, which is an exponential decay or algebraic decay. The asymptotic behavior of $q^{[N]}$ when $t\to-\infty$ can also be calculated in a similar method. Then the long-time asymptotics of $q^{[N]}$ can be given in Eq.\eqref{eq:qnk}. It completes the proof.
\end{proof}

Now we choose three spectral parameters $\lambda_1, \lambda_2, \lambda_3$ to verify this asymptotic behavior Eq.\eqref{eq:qnk} in Fig.\ref{fig:3-spectral}.

\begin{figure}[!h]
{
\includegraphics[height=0.35\textwidth]{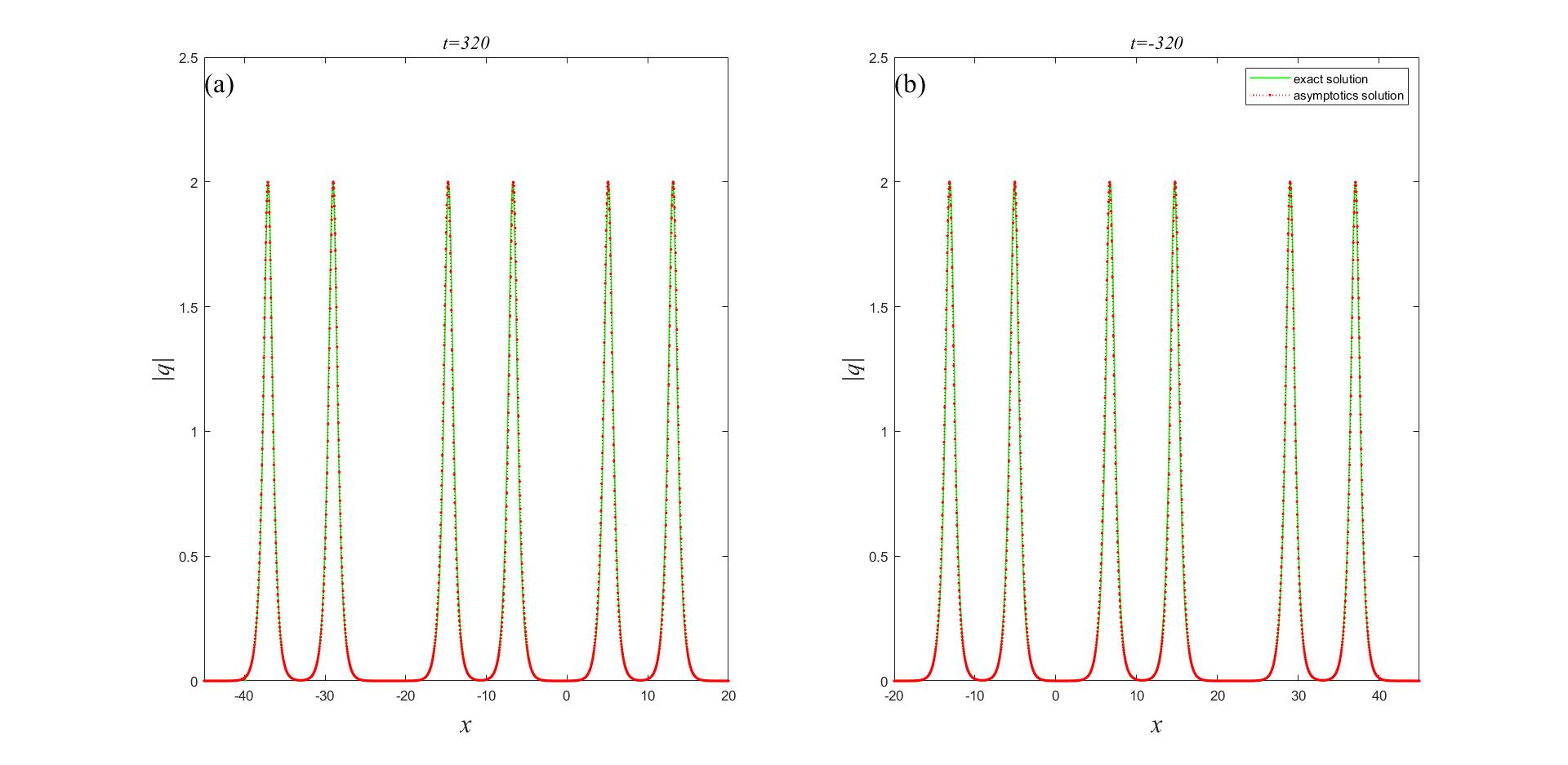}}
\caption{\small The comparison between the exact solution and asymptotic solution by choosing the parameters $\lambda_1=1/9+\ii, \lambda_2=1/8+\ii, \lambda_3=1/7+\ii, \delta=1/8, \gamma=1/2, a_{1}^{[0]}=a_{1}^{[1]}=a_{2}^{[0]}=a_{2}^{[1]}=a_{3}^{[0]}=a_{3}^{[1]}=0, \xi=x-7t/27, n_1=n_2=n_3=2$.}\label{fig:3-spectral}
\end{figure}
In this case, we still set the velocities $v_1, v_2, v_3$ be closed so as to observe them in a short range in space. Their velocities satisfy $v_1>v_2>v_3$, so soliton$_2$ is always located in the middle of soliton$_1$ and soliton$_3$, whose dynamical behavior is consistent with the theorem \ref{theorem-k}.
\section{Conclusion}

\lm{In this work, we analyze the long-time asymptotics for the high-order soliton of Hirota equation based on the Darboux transformation. Actually, we propose a method to strictly analyze the high-order soliton for the integrable equations under the framework of Darboux transformation essentially.}

\lm{Firstly, we reexamine the AKNS hierarchy with $su(2)$ symmetry, which can be used to derive the Hirota equation. Based on the Darboux matrix for the high-order soliton, we construct three RHPs. The first two RHPs are related with the robust inverse scattering transform. The third one establishes the relation with the classical inverse scattering transform. The second RHP can be used to tackle with the far-field asymptotic analysis for large high-order soliton (as the order $n\to\infty$) under the approach proposed in the literature by one of the author with Bilman and Miller \cite{Peter-Duke-2019}. As we can see that the Riemann-Hilbert representation for the Darboux matrix of high-order soliton is much simpler than the Darboux matrix itself. Thus it is natural to consider the asymptotics by the RHP as $t\to\infty$. Unluckily, we fail as we try to solve it by the Deift-Zhou method directly. However we solve it throughout the determinant formulas successfully.}

Secondly, the high-order soliton with single spectral parameter $\lambda_1$ is analyzed by the determinant. The key idea about the asymptotic analysis is looking for the characteristic curves. If the soliton moves along this characteristic curve, then the high-order soliton can reduce to a single soliton. Otherwise, it has the vanishing limitation. We obtain the exact high-order soliton and give its leading order term from the solitons determinant directly. With the aid of the leading order term, we give the asymptotic behavior with single spectral parameter $\lambda_1$. In this case, we present a detailed \lm{analysis or strict proof} about how to obtain the asymptotic expression for $t\to\infty$.

Furthermore, we give the asymptotic expression to high-order soliton with two spectral parameters $\lambda_1$ and $\lambda_2$. In this case, by using the asymptotics of Darboux matrix, we convert the problem with two spectral parameters to the problem with single spectral parameter. Thus the asymptotics to high-order soliton with two spectral parameters can be given with the result of theorem \ref{theorem1}. Based on this analysis, we also give the general asymptotic expression to high-order soliton with $k$ spectral parameters. For each case, we give the numeric plotting to verify the \lm{result} of the theorems.

In this paper, we just consider the finite order soliton and analyze the long-time asymptotics. \lm{Under one spectral parameter $\lambda_1$, the high-order soliton represents the interaction between $N$-solitons with the same amplitude but having a particular chirp \cite{Gagnon-OL-1994}. In contrast to the multi-pole solutions with distinct spectral parameters, the high-order soliton with single pole as well as its asymptotic behavior has received less attention. Thus this topic in this paper is worthy of study in the integrable field. More significantly, the analysis method proposed in this work is only related to the Darboux matrix, which can be regarded as efficient tool to analyze the asymptotic behavior directly.} Thus this method can also be readily extended to the other integrable models, such as the sine-Gordan equation, modified KdV equation, derivative NLS equation and so on. Very recently, there exist some progresses on the large order soliton for the NLS equation \cite{Peter-Duke-2019,Deniz-JNS-2019,Deniz-arXiv-2019}. For the large order soliton, the complete description on the high-order soliton in different regions \lm{(as the order $n\to\infty$)} is also an interesting problem, which is deserved to explore by the RHP in the \lm{second} section in the future studying.

\section*{Appendix A}
\label{App-higher-soliton}
In this appendix, we give two proof to lemma \ref{lemma-denom} and lemma \ref{lemmamatch}. We first present the proof to the asymptotic behavior of the denominator and the numerator in lemma \ref{lemma-denom}:
\begin{proof} Through the formula \eqref{eq:b1-representation} and the determinant formulas $\det(\mathbf{K}\mathbf{L})=\det\left(\begin{matrix}
	0&\mathbf{K} \\
-\mathbf{L}&\mathbb{I}\\
	\end{matrix}\right)$,
the determinant of the denominator $\mathbf{F}_1^{\dag}\mathbf{B}_1\mathbf{F}_1+\mathbf{B}_1$ can be rewritten as
\begin{equation}
\begin{split}
&|\mathbf{F}_1^{\dag}\mathbf{B}_1\mathbf{F}_1+\mathbf{B}_1|=
\left|\left(-\frac{\ii}{2\lambda_{1I}}\right)\left(\mathbf{F}_1^{\dag}\mathbf{D}_1\mathbf{S}_N^{\dag},\mathbf{D}_1\mathbf{S}_N^{\dag}\right)\begin{bmatrix}
\mathbf{S}_N\mathbf{D}_1\mathbf{F}_1\\
\mathbf{S}_N\mathbf{D}_1\\
\end{bmatrix}\right|\\
&=\left(-\frac{\ii}{2\lambda_{1I}}\right)^N\left|\begin{matrix}0&\mathbf{F}_1^{\dag}\mathbf{D}_1\mathbf{S}_N^{\dag}&\mathbf{D}_1\mathbf{S}_N^{\dag}\\
-\mathbf{S}_N\mathbf{D}_1\mathbf{F}_1&\mathbb{I}_N& 0\\
-\mathbf{S}_N\mathbf{D}_1&0&\mathbb{I}_N
\end{matrix}\right|\\
&=\left(-\frac{\ii}{2\lambda_{1I}}\right)^N\left|\begin{matrix}0&\mathbf{F}_1^{\dag}\mathbf{D}_1+\text{l.o.t}&\mathbf{D}_1\mathbf{S}_N^{\dag}\\
-\mathbf{D}_1\mathbf{F}_1+\text{l.o.t}&\mathbb{I}_N&0&\\
-\mathbf{S}_N\mathbf{D}_1&0&\mathbb{I}_N
\end{matrix}\right|,
\end{split}
\end{equation}
\normalsize
since the matrix $\mathbf{S}_N\mathbf{D}_1\mathbf{F}_1=\mathbf{D}_1\mathbf{F}_1+\text{l.o.t}$ and the leading order term of every entry $f_{1, j}$ in $\mathbf{F}_1$ is $\frac{\vartheta_{1,0}^{j}}{j!}$. We calculate this determinant by the general Laplace expansion method. In this case, we choose the first $N$ column, the determinant has $\binom{2N}{N}$ terms. The leading order term is a special choice, which starts from $N+1$ row to $N+j$ row and ends up with $2N+1$ row to $3N-j$ row, where $(j=0,\cdots, N)$.  It needs to be emphasized that $j=0$ indicates $2N+1$ row to $3N$ row, and $j=N$ indicates $N+1$ row to $2N$ row. As to the complementary cofactor matrix, we continue to calculate it with the Laplace expansion method. For simplicity, we choose the first $N$ row, and the determinant can be given easily. Next, we give a detailed calculation about the leading order coefficient for $\ee^{2(N-j)(\theta_1^*+\theta_1)}$. Following the given rule, the coefficient is
\scriptsize
\begin{equation}\label{eq:lead-order1}
\begin{split}
\scriptsize
&\left|\begin{array}{cccc:ccc}-1&-(\vartheta_{1,0})&\cdots&-\frac{\vartheta_{1,0}^{j-1}}{(j-1)!}
&-\frac{\vartheta_{1,0}^{j}}{j!}&\cdots&-\frac{\vartheta_{1,0}^{N-1}}{(N-1)!}\\
0&-\left(-\frac{1}{2\lambda_{1I}}\right)&\cdots&-\left(-\frac{1}{2\lambda_{1I}}\right)
\frac{\vartheta_{1,0}^{j-2}}{(j-2)!}&-\left(-\frac{1}{2\lambda_{1I}}\right)\frac{\vartheta_{1,0}^{j-1}}{(j-1)!}
&\cdots&-\left(-\frac{1}{2\lambda_{1I}}\right)\frac{\vartheta_{1,0}^{N-2}}{(N-2)!}\\
\vdots&\vdots&\vdots&\vdots&\vdots&\vdots&\vdots\\
0&0&\cdots&-\left(-\frac{1}{2\lambda_{1I}}\right)^{N-j-1}\frac{\vartheta_{1,0}^{2j-N}}{(2j-N)!}
&-\left(-\frac{1}{2\lambda_{1I}}\right)^{N-j-1}\frac{\vartheta_{1,0}^{2j-N+1}}{(2j-N+1)!}&\cdots&-\left(-\frac{1}{2\lambda_{1I}}\right)^{N-j-1}
\frac{\vartheta_{1,0}^{j}}{j!}\\\cdashline{1-7}[4pt/5pt]
-\binom{0}{0}\left(-\frac{1}{2\lambda_{1I}}\right)^0&-\binom{1}{0}\left(-\frac{1}{2\lambda_{1I}}\right)^1
&\cdots&-\binom{j-1}{0}\left(-\frac{1}{2\lambda_{1I}}\right)^{j-1}&-\binom{j}{0}\left(-\frac{1}{2\lambda_{1I}}\right)^{j}
&\cdots&-\binom{N-1}{0}\left(-\frac{1}{2\lambda_{1I}}\right)^{N-1}\\
0&-\binom{1}{1}\left(-\frac{1}{2\lambda_{1I}}\right)^1&\cdots&-\binom{j-1}{1}\left(-\frac{1}{2\lambda_{1I}}\right)^{j-1}
&-\binom{j}{1}\left(-\frac{1}{2\lambda_{1I}}\right)^{j}&\cdots&-\binom{N-1}{1}\left(-\frac{1}{2\lambda_{1I}}\right)^{N-1}\\
\vdots&\vdots&\vdots&\vdots&\vdots&\vdots&\vdots\\
0&0&\cdots&-\binom{j-1}{j-1}\left(-\frac{1}{2\lambda_{1I}}\right)^{j-1}&-\binom{j}{j-1}
\left(-\frac{1}{2\lambda_{1I}}\right)^{j}&\cdots&-\binom{N-1}{j-1}\left(-\frac{1}{2\lambda_{1I}}\right)^{N-1}
\end{array}\right|\\
&\cdot\left|\begin{array}{cccc:cccc}1&0&\cdots&0&\binom{0}{0}\left(-\frac{1}{2\lambda_{1I}}\right)^0&0&\cdots&0\\
\vartheta_{1,0}^*&-\frac{1}{2\lambda_{1I}}&\cdots&0&\binom{1}{0}\left(-\frac{1}{2\lambda_{1I}}\right)^1&\binom{1}{1}\left(-\frac{1}{2\lambda_{1I}}\right)^1&\cdots&0\\
\vdots&\vdots&\vdots&\vdots&\vdots&\vdots&\vdots&\vdots\\
\frac{\vartheta_{1,0}^{*,j-1}}{(j-1)!}
&-\frac{1}{2\lambda_{1I}}\frac{\vartheta_{1,0}^{*,j-2}}{(j-2)!}&\cdots&\left(-\frac{1}{2\lambda_{1I}}\right)^{N-j-1}
\frac{\vartheta_{1,0}^{*,2j-N}}{(2j-N)!}&\binom{j-1}{0}\left(-\frac{1}{2\lambda_{1I}}\right)^{j-1}&\binom{j-1}{1}\left(-\frac{1}{2\lambda_{1I}}\right)^{j-1}
&\cdots&\binom{j-1}{j-1}\left(-\frac{1}{2\lambda_{1I}}\right)^{j-1}\\\cdashline{1-8}[4pt/5pt]
\frac{\vartheta_{1,0}^{*,j}}{j!}
&-\frac{1}{2\lambda_{1I}}\frac{\vartheta_{1,0}^{*,j-1}}{(j-1)!}&\cdots&\left(-\frac{1}{2\lambda_{1I}}\right)^{N-j-1}
\frac{\vartheta_{1,0}^{*,2j-n+1}}{(2j-N+1)!}&\binom{j}{0}\left(-\frac{1}{2\lambda_{1I}}\right)^{j}&\binom{j}{1}\left(-\frac{1}{2\lambda_{1I}}\right)^{j-1}
&\cdots&\binom{j}{j-1}\left(-\frac{1}{2\lambda_{1I}}\right)^{j}\\
\vdots&\vdots&\vdots&\vdots&\vdots&\vdots&\vdots&\vdots\\
\frac{\vartheta_{1,0}^{*,N-1}}{(N-1)!}&-\frac{1}{2\lambda_{1I}}\frac{\vartheta_{1,0}^{*,N-2}}{(N-2)!}
&\cdots&\left(-\frac{1}{2\lambda_{1I}}\right)^{N-j-1}\frac{\vartheta_{1,0}^{*,j}}{j!}
&\binom{N-1}{0}\left(-\frac{1}{2\lambda_{1I}}\right)^{N-1}&\binom{N-1}{1}\left(-\frac{1}{2\lambda_{1I}}\right)^{N-1}
&\cdots&\binom{N-1}{j-1}\left(-\frac{1}{2\lambda_{1I}}\right)^{N-1}\end{array}\right|\\
&\cdot(-1)^{2j^2+3N^2+2N}\left(-\frac{\ii}{2\lambda_{1I}}\right)^N+\text{l.o.t}
\end{split}
\end{equation}
\normalsize
Furthermore, we still use the Laplace expansion method to calculate both two determinants in Eq. \eqref{eq:lead-order1}. The leading order term in the first determinant is choosing $1$ to $j$ column, $N-j+1$ row to $N$ row, as the dotted line in the determinant. The second determinant can be given similarly, so Eq. \eqref{eq:lead-order1} becomes
\scriptsize
\begin{equation}\label{eq:numdet}
\begin{split}
&\left(-\frac{\ii}{2\lambda_{1I}}\right)^N\left|\begin{array}{cccc}
\binom{0}{0}\left(-\frac{1}{2\lambda_{1I}}\right)^0&\binom{1}{0}\left(-\frac{1}{2\lambda_{1I}}\right)^1&\cdots&\binom{j-1}{0}\left(-\frac{1}{2\lambda_{1I}}\right)^{j-1}\\
0&\binom{1}{1}\left(-\frac{1}{2\lambda_{1I}}\right)^1&\cdots&\binom{j-1}{1}\left(-\frac{1}{2\lambda_{1I}}\right)^{j-1}\\
\vdots&\vdots&\vdots&\vdots\\
0&0&\cdots&\binom{j-1}{j-1}\left(-\frac{1}{2\lambda_{1I}}\right)^{j-1}
\end{array}\right|\left|\begin{array}{ccc}\frac{\vartheta_{1,0}^{j}}{j!}&\cdots&\frac{\vartheta_{1,0}^{N-1}}{(N-1)!}\\
-\frac{1}{2\lambda_{1I}}\frac{\vartheta_{1,0}^{j-1}}{(j-1)!}&\cdots&-\frac{1}{2\lambda_{1I}}\frac{\vartheta_{1,0}^{N-2}}{(N-2)!}\\
\vdots&\vdots&\vdots\\
\left(-\frac{1}{2\lambda_{1I}}\right)^{N-j-1}\frac{\vartheta_{1,0}^{2j-N+1}}{(2j-N+1)!}&\cdots&\left(-\frac{1}{2\lambda_{1I}}\right)^{N-j-1}
\frac{\vartheta_{1,0}^{j}}{j!}
\end{array}\right|\\&\cdot\left|\begin{array}{cccc}\binom{0}{0}\left(-\frac{1}{2\lambda_{1I}}\right)^{0}&0&\cdots&0\\
\binom{1}{0}\left(-\frac{1}{2\lambda_{1I}}\right)^{1}&\binom{1}{1}\left(-\frac{1}{2\lambda_{1I}}\right)^{1}&\cdots&0\\
\vdots&\vdots&\vdots&\vdots\\
\binom{j-1}{0}\left(-\frac{1}{2\lambda_{1I}}\right)^{j-1}&\binom{j-1}{1}\left(-\frac{1}{2\lambda_{1I}}\right)^{j-1}
&\cdots&\binom{j-1}{j-1}\left(-\frac{1}{2\lambda_{1I}}\right)^{j-1}
\end{array}\right|\left|
\begin{array}{cccc}
\frac{\vartheta_{1,0}^{*,j}}{j!}
&-\frac{1}{2\lambda_{1I}}\frac{\vartheta_{1,0}^{*,j-1}}{(j-1)!}&\cdots&\left(-\frac{1}{2\lambda_{1I}}\right)^{N-j-1}
\frac{\vartheta_{1,0}^{*,2j-N+1}}{(2j-N+1)!}\\
\vdots&\vdots&\vdots&\vdots\\
\frac{\vartheta_{1,0}^{*,N-1}}{(N-1)!}&-\frac{1}{2\lambda_{1I}}\frac{\vartheta_{1,0}^{*,N-2}}{(N-2)!}
&\cdots&\left(-\frac{1}{2\lambda_{1I}}\right)^{N-j-1}\frac{\vartheta_{1,0}^{*,j}}{j!}
\end{array}\right|+{\rm l.o.t}\\=&\left(\left(-\frac{1}{2\lambda_{1I}}\right)^{j(j-1)+(N-j)(N-j-1)}\left(-\frac{\ii}{2\lambda_{1I}}\right)^N|\vartheta_{1,0}|^{2j(N-j)}\right)
\left|\begin{matrix}\frac{1}{j!}&
\frac{1}{(j+1)!}&\cdots&\frac{1}{(N-1)!}\\
\frac{1}{(j-1)!}&\frac{1}{j!}&\cdots&\frac{1}{(N-2)!}\\
\vdots&\vdots&\vdots&\vdots\\
\frac{1}{(2j-N+1)!}&\cdots&\cdots&\frac{1}{j!}
\end{matrix}\right|^2+{\rm l.o.t}.
\end{split}
\end{equation}
\normalsize
With Lemma \ref{lemmaGamma}, we get Eq.\eqref{eq:det-for-denom}.

Next we give the proof to the asymptotics of the numerator, which can be written as
\begin{equation}
\begin{split}
\det{G}&=-(\lambda_1-\lambda_1^*)\ee^{-2\theta_1}\left|
\begin{matrix}
\frac{\ee^{2(\theta_1+\theta_1^*)}}{\lambda_1-\lambda_1^*}&D_{12}&D_{13}&\cdots&D_{1N}\\
\frac{\ee^{2(\theta_1+\theta_1^*)}}{\lambda_1-\lambda_1^*}f_{1,1}^*&D_{22}&D_{23}&\cdots&D_{2N}\\
\vdots&\vdots&\vdots&\vdots&\vdots\\
\frac{\ee^{2(\theta_1+\theta_1^*)}}{\lambda_1-\lambda_1^*}f_{1,N-1}^*&D_{N2}&D_{N3}&\cdots&D_{NN}
\end{matrix}\right|\\
&=-(\lambda_1-\lambda_1^*)\ee^{-2\theta_1}\left|\frac{1}{\lambda_1-\lambda_1^*}
\left(F_1^{\dag}\mathbf{D}_1\mathbf{S}_1^{\dag},\mathbf{D}_1\mathbf{S}_1^{\dag}\right)\begin{bmatrix}
(\mathbf{S}_1\mathbf{D}_1F_1)_{[1]}+\ee^{2\theta_1}(S_1^{-\dag})_1\\
(\mathbf{S}_1\mathbf{D}_1)_{[1]}\\
\end{bmatrix}\right|\\
&=-(\lambda_1-\lambda_1^*)\ee^{-2\theta_1}\left|\frac{1}{\lambda_1-\lambda_1^*}\left(F_1^{\dag}\mathbf{D}_1+\text{l.o.t},\mathbf{D}_1\mathbf{S}_1^{\dag}\right)\begin{bmatrix}
(\mathbf{D}_1F_1)_{[1]}+\ee^{2\theta_1}(S_1^{-\dag})_1+\text{l.o.t}\\
(\mathbf{S}_1\mathbf{D}_1)_{[1]}\\
\end{bmatrix}\right|\\
&=-\left(-\frac{\ii}{2\lambda_{1I}}\right)^{N-1}\ee^{-2\theta_1}\left|\begin{matrix}0&\mathbf{F}_1^{\dag}\mathbf{D}_1+\text{l.o.t}&\mathbf{D}_1\mathbf{S}_N^{\dag}\\
(\mathbf{D}_1F_1)_{[1]}+\ee^{2\theta_1}(S_1^{-\dag})_1+\text{l.o.t}&\mathbb{I}_N&0&\\
-(\mathbf{S}_1\mathbf{D}_1)_{[1]}&0&\mathbb{I}_N
\end{matrix}\right|
\end{split}
\end{equation}
\normalsize
where $(\cdot)_{[1]}$ represents the matrix $\cdot$ by replacing elements of the first column with zeros and the symbol $(\cdot)_1$ denotes taking the first column of the matrix $(\cdot)$ and setting the other elements as zeros. Similar to the analysis in the denominator, we continue to calculate this determinant with the Laplace expansion method. According to the method in the denominator, we give a detailed calculation about the leading order coefficient of $\ee^{2\theta_1^*}\ee^{2(N-j-1)(\theta_1+\theta_1^*)}$, which is
\scriptsize
\begin{equation}\label{eq:dendet}
\begin{split}
&\left|\begin{array}{cccc:ccc}-1&-\vartheta_{1,0}&\cdots&-\frac{\vartheta_{1,0}^{j}}{j!}
&-\frac{\vartheta_{1,0}^{j+1}}{(j+1)!}&\cdots&-\frac{\vartheta_{1,0}^{N-1}}{(N-1)!}\\
1&\frac{1}{2\lambda_{1I}}&\cdots&\frac{1}{2\lambda_{1I}}\frac{\vartheta_{1,0}^{j-1}}{(j-1)!}&\frac{1}{2\lambda_{1I}}\frac{\vartheta_{1,0}^{j}}{(j)!}
&\cdots&\frac{1}{2\lambda_{1I}}\frac{\vartheta_{1,0}^{N-2}}{(N-2)!}\\
\vdots&\vdots&\vdots&\vdots&\vdots&\vdots&\vdots\\\cdashline{1-7}[4pt/5pt]
(-1)^{N-j}&0&\cdots&-\left(-\frac{1}{2\lambda_{1I}}\right)^{N-j-1}\frac{\vartheta_{1,0}^{2j+1-N}}{(2j+1-N)!}
&-\left(-\frac{1}{2\lambda_{1I}}\right)^{N-j-1}\frac{\vartheta_{1,0}^{2j-N+2}}{(2j-N+2)!}&\cdots&-\left(-\frac{1}{2\lambda_{1I}}\right)^{N-j-1}
\frac{\vartheta_{1,0}^{j}}{j!}\\
0&-\binom{1}{0}\left(-\frac{1}{2\lambda_{1I}}\right)&\cdots&-\binom{j}{0}\left(-\frac{1}{2\lambda_{1I}}\right)^{j}
&-\binom{j+1}{0}\left(-\frac{1}{2\lambda_{1I}}\right)^{j+1}&\cdots&-\binom{N-1}{0}\left(-\frac{1}{2\lambda_{1I}}\right)^{N-1}\\
0&-\binom{1}{1}\left(-\frac{1}{2\lambda_{1I}}\right)&\cdots&-\binom{j}{1}\left(-\frac{1}{2\lambda_{1I}}\right)^{j}&
-\binom{j+1}{1}\left(-\frac{1}{2\lambda_{1I}}\right)^{j+1}&\cdots&-\binom{N-1}{1}\left(-\frac{1}{2\lambda_{1I}}\right)^{N-1}\\
\vdots&\vdots&\vdots&\vdots&\vdots&\vdots&\vdots\\
0&0&\cdots&-\binom{j}{j-1}\left(-\frac{1}{2\lambda_{1I}}\right)^{j}&-\binom{j+1}{j-1}\left(-\frac{1}{2\lambda_{1I}}\right)^{j+1}
&\cdots&-\binom{n-1}{j-1}\left(-\frac{1}{2\lambda_{1I}}\right)^{N-1}
\end{array}\right|\\
&\cdot\left|\begin{array}{cccc:cccc}1&0&\cdots&0&\binom{0}{0}\left(-\frac{1}{2\lambda_{1I}}\right)^0&0&\cdots&0\\
\vartheta_{1,0}^*&-\frac{1}{2\lambda_{1I}}&\cdots&0&\binom{1}{0}\left(-\frac{1}{2\lambda_{1I}}\right)^1&\binom{1}{1}\left(-\frac{1}{2\lambda_{1I}}\right)^1&\cdots&0\\
\vdots&\vdots&\vdots&\vdots&\vdots&\vdots&\vdots&\vdots\\
\frac{\vartheta_{1,0}^{*,j-1}}{(j-1)!}
&-\frac{1}{2\lambda_{1I}}\frac{\vartheta_{1,0}^{*,j-2}}{(j-2)!}&\cdots&\left(-\frac{1}{2\lambda_{1I}}\right)^{N-j-1}
\frac{\vartheta_{1,0}^{*,2j-n}}{(2j-N)!}&\binom{j-1}{0}\left(-\frac{1}{2\lambda_{1I}}\right)^{j-1}&\binom{j-1}{1}\left(-\frac{1}{2\lambda_{1I}}\right)^{j-1}
&\cdots&\binom{j-1}{j-1}\left(-\frac{1}{2\lambda_{1I}}\right)^{j-1}\\\cdashline{1-8}[4pt/5pt]
\frac{\vartheta_{1,0}^{*,j}}{j!}
&-\frac{1}{2\lambda_{1I}}\frac{\vartheta_{1,0}^{*,j-1}}{(j-1)!}&\cdots&\left(-\frac{1}{2\lambda_{1I}}\right)^{N-j-1}
\frac{\vartheta_{1,0}^{*,2j-N+1}}{(2j-N+1)!}&\binom{j}{0}\left(-\frac{1}{2\lambda_{1I}}\right)^{j}&\binom{j}{1}\left(-\frac{1}{2\lambda_{1I}}\right)^{j-1}
&\cdots&\binom{j}{j-1}\left(-\frac{1}{2\lambda_{1I}}\right)^{j}\\
\vdots&\vdots&\vdots&\vdots&\vdots&\vdots&\vdots&\vdots\\
\frac{\vartheta_{1,0}^{*,N-1}}{(N-1)!}&-\frac{1}{2\lambda_{1I}}\frac{\vartheta_{1,0}^{*,N-2}}{(N-2)!}
&\cdots&\left(-\frac{1}{2\lambda_{1I}}\right)^{N-j-1}\frac{\vartheta_{1,0}^{*,j}}{j!}
&\binom{N-1}{0}\left(-\frac{1}{2\lambda_{1I}}\right)^{N-1}&\binom{N-1}{1}\left(-\frac{1}{2\lambda_{1I}}\right)^{N-1}
&\cdots&\binom{N-1}{j-1}\left(-\frac{1}{2\lambda_{1I}}\right)^{N-1}\end{array}\right|\\&\cdot (-1)^{2j^2+3N^2+2N+1}\left(-\frac{\ii}{2\lambda_{1I}}\right)^{N-1}+\text{l.o.t}\\
\end{split}
\end{equation}
\normalsize
The leading order term in the second determinant in Eq.\eqref{eq:dendet} can be calculated with the same method in Eq.\eqref{eq:lead-order1}, while the first determinant has a little difference, whose leading order term can be given with a different choice, as shown with the dotted line. With the Lemma \ref{lemmaGamma}, Eq.\eqref{eq:dendet} becomes
\scriptsize
\begin{equation}
\begin{split}
&(-1)^{N}\left(-\frac{\ii}{2\lambda_{1I}}\right)^{N-1}\left|\begin{array}{cccc}
(-1)^{N-j+1}&0&\cdots&\left(-\frac{1}{2\lambda_{1I}}\right)^{N-j-2}\frac{\vartheta_{1,0}^{2j+2-N}}{(2j+2-N)!}\\
0&-\binom{1}{0}\frac{1}{2\lambda_{1I}}&\cdots&\binom{j}{0}\left(-\frac{1}{2\lambda_{1I}}\right)^{j}\\
0&-\binom{1}{1}\frac{1}{2\lambda_{1I}}&\cdots&\binom{j}{1}\left(-\frac{1}{2\lambda_{1I}}\right)^{j}\\
\vdots&\vdots&\vdots&\vdots\\
0&0&\cdots&\binom{j}{j-1}\left(-\frac{1}{2\lambda_{1I}}\right)^{j}
\end{array}\right|\left|\begin{array}{ccc}\frac{\vartheta_{1,0}^{j+1}}{(j+1)!}&\cdots&\frac{\vartheta_{1,0}^{N-1}}{(N-1)!}\\
-\frac{1}{2\lambda_{1I}}\frac{\vartheta_{1,0}^{j}}{j!}&\cdots&-\frac{1}{2\lambda_{1I}}\frac{\vartheta_{1,0}^{N-2}}{(N-2)!}\\
\vdots&\vdots&\vdots\\
\left(-\frac{1}{2\lambda_{1I}}\right)^{N-j-2}\frac{\vartheta_{1,0}^{2j-N+3}}{(2j-N+3)!}&\cdots&\left(-\frac{1}{2\lambda_{1I}}\right)^{N-j-2}
\frac{\vartheta_{1,0}^{j+1}}{(j+1)!}
\end{array}\right|\\&\cdot\left|\begin{array}{cccc}\binom{0}{0}\left(-\frac{1}{2\lambda_{1I}}\right)^{0}&0&\cdots&0\\
\binom{1}{0}\left(-\frac{1}{2\lambda_{1I}}\right)^{1}&\binom{1}{1}\left(-\frac{1}{2\lambda_{1I}}\right)^{1}&\cdots&0\\
\vdots&\vdots&\vdots&\vdots\\
\binom{j-1}{0}\left(-\frac{1}{2\lambda_{1I}}\right)^{j-1}&\binom{j-1}{1}\left(-\frac{1}{2\lambda_{1I}}\right)^{j-1}
&\cdots&\binom{j-1}{j-1}\left(-\frac{1}{2\lambda_{1I}}\right)^{j-1}
\end{array}\right|\left|
\begin{array}{cccc}
\frac{\vartheta_{1,0}^{*,j}}{j!}
&-\frac{1}{2\lambda_{1I}}\frac{\vartheta_{1,0}^{*,j-1}}{(j-1)!}&\cdots&\left(-\frac{1}{2\lambda_{1I}}\right)^{N-j-1}
\frac{\vartheta_{1,0}^{*,2j-n+1}}{(2j-n+1)!}\\
\vdots&\vdots&\vdots&\vdots\\
\frac{\vartheta_{1,0}^{*,N-1}}{(N-1)!}&-\frac{1}{2\lambda_{1I}}\frac{\vartheta_{1,0}^{*,N-2}}{(N-2)!}
&\cdots&\left(-\frac{1}{2\lambda_{1I}}\right)^{N-j-1}\frac{\vartheta_{1,0}^{*,j}}{j!}
\end{array}\right|+\text{l.o.t}\\&=(-1)^{j+1}(\vartheta_{1,0})^{(N-j-1)(j+1)}(\vartheta^*_{1,0})^{j(N-j)}
\left(-\frac{1}{2\lambda_{1I}}\right) ^{j^2+(N-j-1)^2}\left(
-\frac{\ii}{2\lambda_{1I}}\right) ^{N-1}\Gamma_{j}\Gamma_{j+1}+{\rm l.o.t}
\end{split}
\end{equation}
\normalsize
which is the leading order coefficient of $\ee^{2\theta_1^*}\ee^{2(N-j-1)(\theta_1+\theta_1^*)}$. It completes this proof.
\end{proof}
In the end, we give the proof for the lemma \ref{lemmamatch}:
\begin{proof}
Observing the proof for the lemma \ref{lemma-denom}, we know the leading order term of the denominator and the numerator contains polynomial function and exponential function. In order to find the asymptotic behavior when $t\to\pm\infty$, we should match the $t$ degree in both two functions. We only give the detailed calculation about $t\to+\infty$, conversely, $t\to-\infty$ can be given similarly. Now we choose any two terms in the denominator to show how to decide this characteristic curve:
\begin{equation}\label{eq:twolead}
\begin{split}
&\ee^{2(N-j_i)(\theta_1+\theta_1^*)}\mathscr{A}_{j_i}, (i=1,2).
\end{split}
\end{equation}
Suppose the characteristic curve is
\begin{equation}\label{eq:scale}
x=s+v_1t+\mu\log(t).
\end{equation}
If the high-order soliton moves along this characteristic curve, Eq.\eqref{eq:twolead} becomes
\begin{equation}\label{eq:two-terms}
\begin{split}
&|\mathscr{C}_{1}|^{2j_i(N-j_i)}\left((-\ii)^N\left(2\lambda_{1I}\right)^{-(N-j_i)^2-j_i^2}\right)
t^{2(N{-}j_i)(j_i{-}2\lambda_{1I}\mu)}\ee^{{-}4\lambda_{1I}(N{-}j_i)s}
\\+&O\left(t^{2(N{-}j_i)(j_i{-}2\lambda_{1I}\mu)-1}\log(t)\right),
\end{split}
\end{equation}

If both two terms in Eq. \eqref{eq:two-terms} can be matched well, they should have the same power of $t$, that is
$$2(N{-}j_1)(j_1{-}2\lambda_{1I}\mu)=2(N{-}j_2)(j_2{-}2\lambda_{1I}\mu),$$
which deduces
\begin{equation}\label{eq:mu}
\mu=\frac{j_1+j_2-N}{2\lambda_{1I}}.
\end{equation}

Moreover, we need to determine the constraint that $j_1$ and $j_2$ should satisfy. To achieve this aim, we discuss the leading order term of the denominator $\det(\mathbf{M})$ and numerator $\det(\mathbf{G})$ along this characteristic curve with given $\mu$ in Eq.\eqref{eq:mu}. The power of $t$ becomes
\begin{equation}\label{eq:numdenom}
\begin{split}
&t^{-2\left(j-\frac{j_1+j_2}{2}\right)^2+\frac{1}{2}\left(j_1+j_2-2N\right)^2}, \quad  j=0, 1,\cdots, N-1,\\
&t^{-2\left(j-\frac{j_1+j_2-1}{2}\right)^2-\frac{1}{2}\left(j_1+j_2-2N+1\right)\left(j_1+j_2-2N-1\right)},
\end{split}
\end{equation}
where the first formula in Eq.\eqref{eq:numdenom} is the denominator term and the second formula is the numerator term. It can be seen the power of $t$ is a quadratic function with respect to $j$. Obviously, the power of $t$ in the denominator can reach its maximum $\frac{(j_1+j_2-2N)^2}{2}$ at $j=\frac{j_1+j_2}{2}$, and the numerator can reach its maximum $\frac{(j_1+j_2-2N)^2-1}{2}$ at $j=\frac{j_1+j_2-1}{2}$. If $\frac{j_1+j_2}{2}$ is an integer, then the asymptotic behavior of $q^{[N]}$ along this curve Eq.\eqref{eq:scale} is vanishing as $t\to\infty$ because the power of $t$ in the denominator is higher than the one in the numerator. Otherwise, $\frac{j_1+j_2-1}{2}$ is an integer. In this case, if both terms with $j=j_1$ and $j=j_2$ become the leading order simultaneously, $j_1$ and $j_2$ must satisfy $|j_1-j_2|=1$. Under this condition, the asymptotic behavior of $q^{[N]}$ will tend to the single soliton, so we have $\mu=\frac{2j_1-1-N}{2\lambda_{1I}}, \left(j_1=1,2,\cdots, N\right)$, we rewrite $j_1=\kappa$. It completes this proof.
\end{proof}
\section*{Acknowledgement}
\lm{X.Z. acknowledges support from the China Postdoctoral Science Foundation(Grant No. 2020M682692),} L.L. is supported by the National Natural Science Foundation of China (Grant No. 11771151), the Guangzhou Science and Technology Program of China (Grant No. 201904010362), and the Fundamental Research Funds for the Central Universities of China (Grant No. 2019MS110),

\end{document}